\definecolor{c0}{HTML}{641a80}
\definecolor{c1}{HTML}{b73779}
\DeclareMathOperator*{\argmin}{argmin}
\newcommand{\OPT}{\ensuremath{\mathsf{OPT}\xspace}}
\theoremstyle{plain}
\newtheorem{lemma}{Lemma}[section]
\newtheorem{observation}[lemma]{Observation}
\newtheorem{claim}[lemma]{Claim}
\newtheorem{theorem}{Theorem}
\newtheorem{corollary}[lemma]{Corollary}
\theoremstyle{definition}
\newtheorem{definition}{Definition}
\newtheorem*{rep@theorem}{\rep@title}
\newcommand{\newreptheorem}[2]{%
\newenvironment{rep#1}[1]{%
 \def\rep@title{#2 \ref{##1}}%
 \begin{rep@theorem}}%
 {\end{rep@theorem}}}
\newcommand{\defn}[1]{\textbf{\emph{#1}}}
\algnewcommand\algorithmicforeach{\textbf{for each}}
\algnewcommand\algorithmicnot{\textbf{not}}
\newcommand{\R}{\mathbb{R}}
\newcommand{\MemberOf}{\ensuremath{\mathrm{MemberOf}}\xspace}
\newcommand{\FreqOf}{\ensuremath{\mathrm{FreqOf}}\xspace}
\newcommand{\SetOf}{\ensuremath{\mathrm{SetOf}}\xspace}
\newcommand{\EltOf}{\ensuremath{\mathrm{EltOf}}\xspace}
\newcommand{\YES}{\text{\textsc{Yes}}\xspace}
\newcommand{\NO}{\text{\textsc{No}}\xspace}
\newcommand{\NULL}{\text{\textsc{Null}}\xspace}
\newcommand{\FAIL}{\text{\textsc{Fail}}\xspace}
\newcommand{\p}{\textsc{P}\xspace}
\newcommand{\np}{\textsc{NP}\xspace}
\newcommand{\npcomplete}{\textsc{NP-Complete}\xspace}
\newcommand{\algrule}[1][.2pt]{\par\vskip.2\baselineskip\hrule height #1\par\vskip.2\baselineskip}
  \newcommand{\cAAAI}[1]{AAAI\ Conference\ on\ Artificial (AAAI)}
\title{Efficiently Constructing Sparse Navigable Graphs}
\author{
    Alex Conway\thanks{Cornell Tech (\url{ajc473@cornell.edu}, \url{btl46@cornell.edu})} \and
    Laxman Dhulipala\thanks{University of Maryland (\url{laxman@umd.edu}, \url{rwen1@umd.edu})}\and
    Martin Farach-Colton\thanks{New York University (\url{mlf9579@nyu.edu}, \url{cmusco@nyu.edu}, \url{ys6378@nyu.edu}, \url{torsten.suel@nyu.edu})}\and
    Rob Johnson\thanks{VMware Research (\url{rob.johnson@broadcom.com})} \and
    Ben Landrum\footnotemark[1]\and
    Christopher Musco\footnotemark[3]\and
    Yarin Shechter\footnotemark[3]\and
    Torsten Suel\footnotemark[3]\and
    Richard Wen\footnotemark[2] 
}
\date{}
\begin{document}

\maketitle

\begin{abstract}
    Graph-based nearest neighbor search methods have seen a surge of popularity in recent years, offering state-of-the-art performance across a wide variety of applications. Central to these methods is the task of constructing a \emph{sparse navigable search graph} for a given dataset endowed with a distance function. Unfortunately, doing so is computationally expensive, so heuristics are universally used in practice. 

    In this work, we initiate the study of fast algorithms with provable guarantees for search graph construction. For a dataset with $n$ data points, the problem of constructing an optimally sparse navigable graph can be framed as $n$ separate but highly correlated minimum set cover instances. This yields a naive $O(n^3)$ time greedy algorithm that returns a navigable graph whose sparsity is at most $O(\log n)$ higher than optimal. We improve significantly on this baseline, taking advantage of correlation between the set cover instances to leverage techniques from streaming and sublinear-time set cover algorithms. By also introducing problem-specific pre-processing techniques, we obtain an $\tilde{O}(n^2)$ time algorithm for constructing an $O(\log n)$-approximate sparsest navigable graph under any distance function. 

    The runtime of our method is optimal up to logarithmic factors under the Strong Exponential Time Hypothesis via a reduction from the Monochromatic Closest Pair problem. Moreover, we prove that, as with general set cover, obtaining better than an $O(\log n)$-approximation is NP-hard, despite the significant additional structure present in the navigable graph problem. Finally, we show that our approach can also beat cubic time for the closely related and practically important problems of constructing $\alpha$-shortcut reachable and $\tau$-monotonic graphs, which are also used for nearest neighbor search. For such graphs, we show that black-box sublinear set cover algorithms yield $\tilde{O}(n^{2.5})$ time or better algorithms. 
\end{abstract}

\thispagestyle{empty}
\newpage
\setcounter{page}{1}

\section{Introduction}
Nearest neighbor search  has been central in computer science and data science for decades~\cite{IndykMotwani:1998,Kleinberg:1997,KushilevitzOstrovskyRabani:1998,Charikar:2002,BeygelzimerKakadeLangford:2006,AndoniIndyk:2008,JegouDouzeSchmid:2011,AndoniIndykNguyen:2014,AumullerBernhardssonFaithfull:2020}. Current interest in the problem is driven by its central role in modern AI systems: search over machine-learned semantic embeddings is the key technology behind AI-driven web search, retrieval-augmented generation in LLMs, and more \cite{LewisPerezPiktus:2020,KitaevKaiserLevskaya:2020,XiongXiongLi:2021,LuoLakshmanShrivastava:2022,Bruch:2024}. 

Formally, the nearest neighbor search problem is defined as follows: Given a point set $P = \{p_1, \ldots, p_n\}$ and a distance function $d(\cdot,\cdot)$,\footnote{Our work only requires that $d(p_i,p_i) = 0$ for all $i$ and $d(p_i,p_j) = d(p_j,p_i) >0$ for $i \neq j$ (i.e., we assume no repeated data points in the dataset). These conditions are satisfied by any metric, but are much weaker.}  construct a data structure to efficiently find $\argmin_{i\in \{1, \ldots, n\}} d(p_i, q)$ for any given query point $q$, or an approximate minimizer. A naive $O(n)$ time method is to brute-force search over the entire dataset, so ``efficiently'' here means finding the nearest neighbor in $o(n)$ time.

The nearest neighbor search problem has been studied extensively both in theory and in practice.
A classical solution is locality-sensitive hashing (LSH)~\cite{IndykMotwani:1998,AndoniIndykLaarhoven:2015}, which provides theoretical guarantees on the approximation ratio, construction time, and query time.
However, LSH has proven too inefficient for large high-dimensional datasets~\cite{ManoharShenBlelloch:2024}. For these applications \defn{graph-based} nearest neighbor search heuristics like the Hierarchical Navigable Small World (HNSW) 
method \cite{MalkovYashunin:2020}, Vamana/DiskANN \cite{SubramanyaDevvritKadekodi:2019,KrishnaswamyManoharSimhadri:2024}, 
Navigating Spreading-out Graphs (NSG) \cite{FuXiangWang:2019}, and more \cite{MalkovPonomarenkoLogvinov:2014,FuCai:2016,HarwoodDrummond:2016,PengChoiChan:2023,ManoharShenBlelloch:2024,AumullerBernhardssonFaithfull:2020,SimhadriAumullerIngber:2024} have become much more popular.
While these graph-based heuristics substantially outperform other (potentially more theoretically sound) techniques in practice, there is surprisingly little underlying theory about their performance.

Although the details differ, the core idea behind all of these graph-based search methods is the same, dating back to decades old work on Milgram's ``small world phenomenon'' and greedy routing in networks \cite{Milgram:1967,WattsStrogatz:1998,Kleinberg:2000,Kleinberg:2000b}. The basic approach is to preprocess the input dataset by constructing a \emph{directed search graph} $G=\left(P,E\right)$, with a node corresponding to each point  $p_i$. To answer a query $q$, {\bf \emph{greedy search}}, or some variant like\footnote{Beam-search, a natural ``conservative'' version of greedy search is currently the method of choice in practice.}, is employed: traverse a path through the graph, starting at an arbitrary node, until a local minimum condition is met.  To extend a path at node $p$, move to $p$'s neighbor closest to $q$;  the local minimum is achieved (and the path is terminated) when the current $p$ has no neighbors that are closer to $q$ than $p$ is.  We then return $p$. The main computation cost at each step is finding $p$'s closest neighbor to $q$, which requires a number of distance computations proportional to $p$'s degree.

A key question in graph-based search is how to choose $G$, subject to the following competing constraints: 
\begin{itemize}
    \item Can we construct a graph $G$ whose greedy traversal will yield a nearest neighbor, or good approximate nearest neighbor, of the query point?

    \item Can we construct $G$ to be \emph{sparse} so that memory overhead is minimized and traversing it is fast?

\end{itemize}
We note that these two constraints are indeed in tension: if $G$ were a clique, it would certainly always return the nearest neighbor of a query, but traversing it would take $O(n)$ time for every query.  If, on the other hand, $G$ were a constant-degree spanner, greedily traversing $G$ would require just $O(1)$ time per step and the graph would require just $O(n)$ space, but might not return a very good answer to most queries.

Our work studies the fundamental question of how to efficiently construct graphs that satisfy the competing constraints.  We explain the notions required to make this statement precise in the next section.

\subsection{Navigable Search Graphs}
\label{sec:navigable_graphs}
As discussed above, our goal is to construct a search graph $G$ where greedy search is effective in finding (approximate) nearest neighbors in a point set $P$. Towards this goal, many different graphs properties have been considered (see \cite{PengChoiChan:2023} for an overview). The most popular such property is \emph{navigability}. Since convergence to a nearest-neighbor for all possible queries is difficult to guarantee\footnote{Despite their practical effectiveness, most graph-based methods do not enjoy worst-case theoretical accuracy guarantees as strong as, for example, methods based on LSH \cite{AndoniIndyk:2008}.}, navigability demands that, at least if the query $q$ is \emph{in $P$} (i.e., it has a nearest neighbor with distance $0$), then greedy search on $G$ should return $q$ itself, no matter where the search begins. A standard equivalent definition for navigability is as follows \cite{Al-JazzaziDiwanGou:2025}: 
\begin{definition}[Navigable Graph]
    \label{def:navigability}
Given point set $P$ and distance function $d : P\times P \to \mathbb{R}^+$, we say that a directed graph $G=\left(P,E\right)$ is \emph{navigable} if, for all $p_i,p_k\in P$, $p_i\neq p_k$, there exists $p_j\in P$ such that $(p_i,p_j)\in E$ and $d(p_j,p_k) < d(p_i,p_k)$. That is, $p_i$ has an out-neighbor, $p_j$, that is closer to $p_k$ than $p_i$ is.
\end{definition}

Navigability has been studied in the context of nearest neighbor search for decades \cite{JaromczykToussaint:1992,AryaMount:1993} and lends its name to modern methods such as HNSW (Hierarchical Navigable Small World) and NSG (Navigating Spreading-out Graphs). 
Moreover, several other well-studied graph properties are direct generalizations of navigability. For example, the popular DiskANN method \cite{KrishnaswamyManoharSimhadri:2024} aims to construct ``$\alpha$-shortcut reachable graphs'' \cite{IndykXu:2023}, which strengthen \Cref{def:navigability} to require that $d(p_k,p_j) < \frac{1}{\alpha}d(p_i,p_j)$ for some $\alpha \geq 1$. Similarly, the $\tau$-monotonicity property of \cite{PengChoiChan:2023} requires $d(p_k,p_j) < d(p_i,p_j) - \tau$ for some $\tau \geq 0$. These strengthenings of navigability are interesting because, while they require a denser graph to satisfy, they can lead to strong approximation guarantees. For example, Indyk and Xu recently showed that greedy search in an $\alpha$-shortcut reachable graph always returns an approximate nearest neighbor $\tilde{p}$ with  $d(q,\tilde{p}) \leq \frac{\alpha+1}{\alpha-1} \cdot \min_{p\in P} d(q,p)$ \cite{IndykXu:2023}. Greedy search in $\tau$-monotonic graphs always returns an exact nearest neighbor if $\min_{p\in P} d(q,p) \leq \tau/2$ \cite{PengChoiChan:2023}. Recent work also shows that multiplicative approximation guarantees for general metrics can be obtained when a variant greedy search called Adaptive Beam Search is run on a vanilla navigable graph \cite{Al-JazzaziDiwanGou:2025}

Graph navigability is also studied as a natural concept beyond applications to vector search. 
Milgram's famous ``small-world'' experiments from the 1960s and 70s established that social networks are empirically navigable (under an implicit distance function) \cite{Milgram:1967,TraversMilgram:1977}. 
This finding prompted follow-up work on  understanding the navigability of other real-world networks \cite{DoddsMuhamadWatts:2003,BogunaKrioukovClaffy:2009}, 
as well as theoretical work by Kleinberg and others on generative models for graphs and graph evolution that lead to navigability \cite{Kleinberg:2000,Kleinberg:2000b, ClausetMoore:2003,ChaintreauFraigniaudLebhar:2008,HuWangLi:2011,GulyasBiroKorosi:2015}.

\subsection{Efficient Graph Construction}
Despite the interest in navigability and its practical importance, surprisingly no prior work has addressed one of the most basic algorithmic questions to ask about the property:
\begin{center}
    \emph{
How efficiently can one compute the \ul{sparsest navigable graph} for a point set $P$ and distance function $d$?}
\end{center}
Note that ``sparsest'' can be defined in multiple ways: we could ask to minimize maximum degree or average degree. As will become clear in \Cref{sec:nav_graph_set_cover}, it turns out that a single graph can obtain minimum degree simultaneously for all $i$, so both these optimization criteria can be achieved simultaneously.

Current graph-based nearest neighbor search methods universally employ heuristics that tend to construct ``almost'' navigable graphs \cite{DiwanGouMusco:2024}. Some of these heuristics are based on natural relaxations of methods that construct provably navigable graphs, such as the ${O}(n^3)$ time  RobustPrune method (with complete candidate set) from DiskANN \cite{KrishnaswamyManoharSimhadri:2024}, and related methods proposed in other  papers \cite{HarwoodDrummond:2016,FuXiangWang:2019,PengChoiChan:2023}. However, not only are these methods slow (cubic time), but even before relaxation, there is no guarantee that they construct the sparsest navigable graph, or a graph whose sparsity is competitive with the optimum. Indeed, recent work provides examples showing that DiskANN can return a solution $\tilde{\Omega}(n)$ times worse than optimal \cite{KhannaPadakiWaingarten:2025}.

A recent paper by Diwan et al. shows that \emph{any} set with $n$ unique points admits a navigable graph with $\tilde{O}(n^{1.5})$ edges, and they show how to construct such a graph in $\tilde{O}(n^2)$ time \cite{DiwanGouMusco:2024}. However, while it was shown that ${O}(n^{1.5})$ is optimal in the worst-case, as observed in follow-up work \cite{Al-JazzaziDiwanGou:2025}, typical datasets admit far sparser navigable graphs. The algorithm from \cite{DiwanGouMusco:2024} is not able to take advantage of the fact that a sparser graph might exist for a given $P$. Similarly, Arya and Mount give an efficient algorithm that for any point set in $d$-dimensional Euclidean space, constructs a navigable graph with maximum degree $2^{O(d)}$ \cite{AryaMount:1993}. While this is tight in the worst-case, for any particular point set, we could hope to do much better.

\subsection{Our Results}
In this paper, we make progress on the stronger goal stated above of constructing graphs with \defn{instance-optimal} sparsity.
As will be discussed in \Cref{sec:tech_overview}, the problem of constructing a minimum sparsity navigable graph can be framed as $n$ different minimum set cover instances, each involving $n$ elements and $n$ sets. Running the standard greedy approximation algorithm for set cover on each instance thus yields an $O(\log n)$ approximation in $O(n^3)$ time.  Our main contribution is to push significantly beyond this bound. In particular, we provide a nearly-quadratic time algorithm:
\begin{theorem}
\label{thm:main_navigability}
There is a randomized algorithm running in $\tilde{O}(n^2)$ time\footnote{We assume $d(\cdot,\cdot)$ can be evaluated in $O(1)$ time. Formally, if $d(\cdot,\cdot)$ takes $T$ time to evaluate, our method runs in $\tilde{O}(n^2) + O(n^2\cdot T)$ time. Throughout, we use $\tilde{O}(m)$ to denote $O(m\log^c m)$ for a fixed constant $c$.} with high probability that, given any distance function $d$ and point set $P$ in general position\footnote{Formally, we assume $d(p_i, p_j)\neq d(p_i, p_k)$ for any $i,j,k$. In practice, this can be achieved via an arbitrarily small random perturbation of the points. The assumption is necessary to avoid subtleties that arise in the definition of navigability if there can be exactly tied distances. See \Cref{sec:prelims} for details.}, constructs a navigable graph $G$ in which, with high probability, every vertex has a degree which does not exceed that of the sparsest navigable graph for $P$ by more than a multiplicative $c\log n$ factor for a fixed constant $c$.
\end{theorem}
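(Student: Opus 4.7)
As outlined in the introduction, constructing an optimally sparse navigable graph decomposes into $n$ minimum set cover instances, one per point $p_i \in P$, with universe $P\setminus\{p_i\}$ and candidate sets $S_k^{(i)} = \{p_j : d(p_k,p_j) < d(p_i,p_j)\}$ for each $k \neq i$. The classical greedy $O(\log n)$-approximation solves each instance in $\Theta(n^2)$ time naively, giving $\Theta(n^3)$ total; the goal is to bring the amortized cost down to $\tilde O(n)$ per instance. Since a minimum-degree solution at every $p_i$ exists simultaneously (as flagged in \Cref{sec:nav_graph_set_cover}), approximating each instance independently suffices to match the sparsest graph up to a multiplicative $O(\log n)$.

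\emph{Preprocessing and per-instance strategy.} First, for each $p_j$, sort $P$ by $d(p_j,\cdot)$ and record the rank $r_j(p_k)$ of each $p_k$; this costs $O(n^2 \log n)$ and reduces membership $p_j \in S_k^{(i)}$ to the $O(1)$ comparison $r_j(p_k) < r_j(p_i)$. The candidates that can cover $p_j$ in instance $i$ are then exactly the prefix of $p_j$'s sorted list of length $r_j(p_i) - 1$. Now, to solve instance $i$, I would replace exact greedy by a \emph{threshold greedy}: for thresholds $t = n, n/2, n/4, \ldots, 1$, add every candidate $p_k$ whose current uncovered mass $|S_k^{(i)} \cap U|$ is at least $t/2$. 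Standard analysis (e.g., in the streaming/sublinear set-cover literature such as Demaine--Indyk--Mahabadi--Vakilian and Assadi--Khanna--Li) preserves the $O(\log n)$-approximation. To find the qualifying candidates in sublinear time, I would use \emph{element sampling}: draw $\tilde O(n/t)$ independent samples from $U$, estimate $|S_k^{(i)} \cap U|$ for each candidate from the sample, and admit those with estimated mass exceeding a constant fraction of $t$; Chernoff bounds give correctness of all relevant estimates with high probability.

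\emph{Main obstacle.} The dangerous step is incrementing candidate counters from the samples: for each sampled $p_j$, we must touch every candidate in $p_j$'s sorted prefix of length $r_j(p_i)$, and this prefix can be $\Theta(n)$ precisely when $p_i$ is a far point of $p_j$, blowing the time budget. I expect the resolution to rest on a problem-specific dichotomy of $U$ into ``hard'' elements (small $r_j(p_i)$, few covering candidates but cheap to enumerate) and ``easy'' elements (large $r_j(p_i)$, so many covering candidates that a sparse random selection of candidates covers all of them w.h.p., letting us skip enumeration entirely). Once this split is in place, the per-phase work can be charged either to a hard element's short prefix or to a covered easy element, yielding $\tilde O(n)$ per instance when amortized across the $O(\log n)$ threshold phases. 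Summing over the $n$ instances produces the claimed $\tilde O(n^2)$ runtime, and a union bound over the per-instance Chernoff events ensures navigability and the $O(\log n)$-approximation of the average out-degree with high probability. The hard part is executing this ``easy/hard'' decomposition cleanly enough that it slots into the threshold-greedy analysis without losing constants in the approximation.
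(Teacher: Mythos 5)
Your decomposition into $n$ set cover instances, the rank-based $O(1)$ membership and prefix access, the sampling-based simulation of greedy, and the observation that a random selection of candidates covers all ``easy'' (high-frequency) elements are all ingredients of the paper's proof, so the overall architecture is right. The genuine gap is in the final amortization, where you assert that the easy/hard dichotomy already yields $\tilde O(n)$ per instance. That dichotomy bounds the cost of enumerating the covering candidates of a \emph{sampled} element by $f \lesssim n^2/\OPT^*$, and since only $\tilde O(\OPT_i)$ elements are ever sampled in instance $i$, the enumeration cost does sum to $\tilde O(\sum_i n^2\OPT_i/\OPT^*) = \tilde O(n^2)$. But it does not control the cost of \emph{maintaining the uncovered set} $U_i$ --- deciding, for the $|U_i|$ residual hard elements, which become covered as sets are added (or, at minimum, certifying at the end that every one of them is covered, which navigability requires). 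That cost behaves like $|U_i|\cdot\min(\OPT_i,f)$, and random preprocessing alone does not prevent a single instance from simultaneously having $|U_i|=\Omega(n)$ low-frequency elements and $\OPT_i=\Omega(\sqrt n)$. The paper exhibits point sets inducing $\Omega(n^{5/8})$ such instances, for $\Omega(n^{17/8})$ total work, and notes that exponential search on $\OPT^*$ plus random edges alone only reaches $\tilde O(n^{2.25})$.

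The missing idea is the clique preprocessing of \Cref{lem:cliques}: if a group $Q$ of nodes is joined into a clique, then every point outside $Q$ is covered in all but one of the instances $\{\mathcal{I}_i\}_{i \in Q}$ (namely the instance of its nearest neighbor in $Q$), so $\sum_{i\in Q}|U_i|\leq n$ deterministically. The paper therefore runs an inner exponential search on $\OPT_i$ so that the instances still unsolved at scale $\ell$ all satisfy $\OPT_i \gtrsim \ell/\log n$ (hence only $\tilde O(\OPT^*/\ell)$ of them remain), partitions those into cliques of size about $(\OPT^*/n)\log n$, and only then runs the sampled greedy on the residual uncovered sets; this is what makes the $\sum_i |U_i|\cdot\OPT_i$-type terms sum to $\tilde O(n^2)$. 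Without this device (or an equivalent way to correlate the instances with large $\OPT_i$), your plan as stated stalls strictly above quadratic time, even though the approximation-factor side of your argument is sound.
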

\Cref{thm:main_navigability} is proven in \Cref{sec:main_algo}. The result is surprising in that we do not even know how to check if a given graph is navigable in $\tilde{O}(n^2)$ time. Moreover, under natural complexity-theoretical assumptions, the result is optimal up to logarithmic factors, even when $P$ is in low-dimensional Euclidean space. In particular, any navigable graph $G$ for $P$ \emph{must} contain a bidirectional edge between the two closest points in $P$. It has been established that identifying these two points requires $\Omega(n^{2-\epsilon})$ time under the Strong Exponential Time Hypothesis (SETH) \cite{KarthikManurangsi:2020}. Accordingly, we have the following:
\begin{claim}
    \label{clm:fine_grained_hardness}
    Assuming the Strong Exponential Time Hypothesis (SETH), any algorithm  that, for a set of $n$ points, $P$, returns a navigable graph with $o(n^{2-\epsilon})$ edges requires $\Omega(n^{2-\epsilon})$ time, for any constant $\epsilon > 0$. This claim holds even under the Euclidean distance when $P\subset \R^{O(\log n)}$.
\end{claim}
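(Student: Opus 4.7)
The plan is a fine-grained reduction from Monochromatic Closest Pair in $\R^{O(\log n)}$ under Euclidean distance, which \cite{KarthikManurangsi:2020} show requires $\Omega(n^{2-\epsilon})$ time under SETH. The only structural fact needed is the one hinted at immediately before the claim: the unique closest pair in $P$ must be a directed edge of every navigable graph on $P$.

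To establish this fact, fix $(p_i,p_j) = \argmin_{k \neq \ell} d(p_k,p_\ell)$, which is unique under general position. Applying \Cref{def:navigability} to the ordered pair $(p_i,p_j)$ produces some $p_k \in P$ with $(p_i,p_k) \in E$ and $d(p_k,p_j) < d(p_i,p_j)$. Because $(p_i,p_j)$ is the closest pair, every point $p_k \in P \setminus \{p_i,p_j\}$ satisfies $d(p_k,p_j) \geq d(p_i,p_j)$, with strict inequality under general position. The only remaining option is $p_k = p_j$, yielding $(p_i,p_j) \in E$.

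With the structural fact in hand, suppose for contradiction that some algorithm $A$ constructs a navigable graph $G = (P,E)$ with $|E| = o(n^{2-\epsilon})$ in time $o(n^{2-\epsilon})$. Given a hard Euclidean closest-pair instance in $\R^{O(\log n)}$, I first enforce general position (for example, via a lexicographic symbolic tiebreaker on the coordinates that preserves which pair is closest and costs $O(n \log n)$ time), then run $A$ on the resulting point set, then scan the edges of the returned graph and output the pair attaining $\min_{(u,v) \in E} d(u,v)$. The structural fact guarantees that this is the true closest pair, and the total runtime is $o(n^{2-\epsilon})$, contradicting the Karthik--Manurangsi lower bound.

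The only subtlety is the general-position reduction: one must verify that the hard instances of \cite{KarthikManurangsi:2020} can be perturbed into general position without changing the identity of the closest pair or inflating the dimension beyond $O(\log n)$. This is a routine symbolic-tiebreaking argument, so I expect the proof to be short overall; everything else is immediate once we observe that navigability forces the single closest-pair edge into $G$, which in turn lets any fast construction algorithm double as a fast closest-pair algorithm.
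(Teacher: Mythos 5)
Your proposal is correct and is exactly the paper's argument: the paper justifies \Cref{clm:fine_grained_hardness} by the observation (stated in the paragraph preceding the claim) that any navigable graph must contain the edge between the two closest points, combined with the SETH-based $\Omega(n^{2-\epsilon})$ lower bound for Monochromatic Closest Pair from \cite{KarthikManurangsi:2020}; you have simply filled in the one-line verification that navigability forces this edge and the $o(n^{2-\epsilon})$-time edge scan. The general-position perturbation you flag is not actually needed, since even with tied distances no third point can be \emph{strictly} closer to $p_j$ than $p_i$ is, so every minimum-distance pair is forced into $E$.
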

Recall that all point sets admit a navigable graph with $\tilde{O}(n^{1.5})$ edges \cite{DiwanGouMusco:2024}. \Cref{clm:fine_grained_hardness} shows that beating quadratic time is unlikely even if we just want to slightly improve on the naive solution of returning a complete graph with $O(n^2)$ edges. Obtaining a result with near-minimal sparsity is only harder. 

Moreover, beating the $O(\log n)$ approximation factor in \Cref{thm:main_navigability} is unlikely. Even when $d$ is a metric, which adds additional structure to the set cover instances underlying the minimum navigable graph problem, we are able to leverage the well-known hardness of approximating set cover \cite{DinurSteurer:2014} to prove in  \Cref{sec:lower_bounds}: 
\begin{theorem}
    \label{thm:hardness_of_approx}
   For every $\varepsilon >0$ it is NP-hard to construct a navigable graph $G$ for a given point set $P$ with average out-degree $\left(\frac{1}{3}\ln(n)-\varepsilon\right)\cdot \OPT$, where $\OPT$ is the average out-degree in the minimum navigable graph for $P$. This result holds even under Euclidean distance.
\end{theorem}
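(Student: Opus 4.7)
The plan is to reduce from Set Cover and invoke the Dinur--Steurer theorem, which says that for any $\varepsilon' > 0$, it is \textsc{NP}-hard to approximate the minimum set cover of a size-$N$ universe to within factor $(1-\varepsilon')\ln N$. Given such an instance $(U,\mathcal S)$ with $|U| = |\mathcal S| = N$, I would construct in polynomial time a point set $P \subset \R^{O(\log n)}$ of size $n = \Theta(N^3)$ such that any $\alpha$-approximation to the minimum average out-degree of a navigable graph for $P$ yields an $(\alpha + o(1))$-approximation to the Set Cover instance. Since $\ln N = \tfrac{1}{3}\ln n$, the desired $(\tfrac{1}{3}\ln n - \varepsilon)$ lower bound then follows by taking $\varepsilon'$ small enough.

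The heart of the construction is a gadget with a distinguished ``central'' point $p^\star$, element points $e_1, \dots, e_N$, and set points $s_1, \dots, s_N$ placed so that $d(s_i, e_j) < d(p^\star, e_j)$ if and only if $j \in S_i$. Any navigable graph must then give $p^\star$ out-edges to a collection of set points whose corresponding sets cover $U$, forcing the out-degree of $p^\star$ to be at least $\OPT_{\mathrm{SC}}$. To realize this gadget in Euclidean space I would first place the points using binary-indicator-style coordinates in $\R^N$ (so that $s_i$ is naturally nearer to $e_j$ precisely when $j\in S_i$), then apply a Johnson--Lindenstrauss projection to $O(\log n)$ dimensions while preserving the strict inequalities by engineering $\mathrm{poly}(1/N)$ slack into every relevant distance comparison.

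To make the hardness carry over to the \emph{average} out-degree, I would pad $P$ with $\Theta(N^3)$ auxiliary points placed far from the gadget, organized into a structure (for instance, concentric shells around a distant anchor) in which each auxiliary vertex admits $O(1)$ out-neighbors that satisfy all of its navigability constraints, and in which no auxiliary point is ever closer to any $e_j$ than $p^\star$ is (so the gadget's set-cover problem is unaffected). The out-degree needed by each $s_i$ or $e_j$ can similarly be kept $O(1)$ by a small amount of intra-gadget auxiliary scaffolding. With these bounds, the optimal average out-degree is $\frac{\OPT_{\mathrm{SC}}}{n} + O(1)$ and any algorithmic solution has average out-degree $\frac{\mathrm{ALG}_{\mathrm{SC}}}{n} + O(1)$, so an $\alpha$-approximation for navigable graphs translates (up to a $1+o(1)$ factor, absorbed into $\varepsilon$) into an $\alpha$-approximation for Set Cover.

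The main obstacle I expect is the simultaneous Euclidean realization of all three ingredients: (i) the exact set-cover incidence pattern among $p^\star$, the $s_i$, and the $e_j$; (ii) the padding admitting cheap out-neighbors without interfering with the gadget distances; and (iii) preserving every strict inequality through a low-dimensional random projection. I would handle (i)--(ii) by separating the gadget and padding regions with a large translational offset so that the only ``useful'' covers across regions are the intended ones, and handle (iii) by leaving polynomial slack in each distance comparison so that a JL-type projection preserves all required sign conditions with high probability. The cubic padding size is what dilutes the hardness factor from $\ln N$ (for Set Cover) to $\tfrac{1}{3}\ln n$ (for navigable graphs), and one should verify that the padding machinery genuinely requires no more than $\Theta(N^3)$ points to achieve the desired isolation of $p^\star$ in Euclidean space.
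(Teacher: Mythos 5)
Your high-level plan --- reduce from Dinur--Steurer set cover, build a Euclidean incidence gadget, pad to size $\Theta(N^3)$ so that $\ln N = \tfrac13 \ln n$ --- matches the paper's, but your padding step breaks the reduction. You use a \emph{single} hard point $p^\star$ whose out-neighborhood encodes a set cover, plus $\Theta(N^3)$ auxiliary points that each get by with $O(1)$ out-edges. Then the total out-degree of any navigable graph on $P$ is $\Theta(n)$ (every vertex needs out-degree at least $1$, since it must have an out-neighbor closer to each other point), while the set-cover contribution is only $\OPT_{\mathrm{SC}} \le N = o(n)$. The optimal average degree is therefore $\Theta(1)$, and an algorithm can come within a $1+o(1)$ factor of it while ignoring the set cover entirely, e.g.\ by connecting $p^\star$ to \emph{all} other points at an amortized cost of $N/n = o(1)$. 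Your claimed implication ``$\alpha$-approximation of average degree $\Rightarrow (\alpha+o(1))$-approximation of set cover'' fails because the additive $O(1)$ terms dominate both the optimum and the algorithm's value, so the ratio of average degrees tells you nothing about $\mathrm{ALG}_{\mathrm{SC}}/\OPT_{\mathrm{SC}}$. The paper avoids this by making \emph{all} $(m+n)^3$ padding points (its ``solution points'') equivalent copies of your $p^\star$: each is independently forced to have out-degree exactly $\OPT_{\mathrm{SC}}+1$ in the optimum (a shared centroid handles their mutual navigability cheaply), while element and set points need only $O(N)$ edges each. The average degree is then $\Theta(\OPT_{\mathrm{SC}})$ up to lower-order terms, and an averaging argument shows that in any $r$-approximate graph some solution point has out-degree at most $\bigl(r+o(1)+O(r/\OPT_{\mathrm{SC}})\bigr)(\OPT_{\mathrm{SC}}+1)$, from which a set cover of that size is read off.

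A second, independent gap: you cannot realize the $\tfrac13\ln n$ hardness in $O(\log n)$ dimensions by a JL projection that must preserve comparisons with only $\mathrm{poly}(1/N)$ relative slack. Projecting to $O(\log n)$ dimensions gives only constant multiplicative distortion, which destroys such gaps; indeed the natural indicator-coordinate gadget has comparisons like $\sqrt{n^2+f_i}$ versus $\sqrt{n^2+f_i+1}$, whose relative gap is inherently $O(1/N^2)$. The paper's logarithmic inapproximability therefore lives in dimension $\mathrm{poly}(N)$, and its separate low-dimensional result switches to bounded-frequency instances (hypergraph vertex cover), whose gadgets have constant relative gaps but yield only constant-factor hardness. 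Since the theorem as stated requires only Euclidean distance in unrestricted dimension, you should drop the dimension-reduction step rather than attempt to repair it.
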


Beyond our main algorithmic result (\Cref{thm:main_navigability}), we also observe that the related problems of constructing near optimally sparse $\alpha$-shortcut reachable graphs and $\tau$-monotonic graphs can similarly be framed as $n$ coupled set cover instances. This perspective can be used to obtain faster algorithms for those problems that achieve and $O(\log n)$ approximation factor and run in $\tilde{O}(\min(n^{2.5}), n\cdot \OPT)$, where $\OPT \geq n$ is the sparsity of the optimal solution. Obtaining nearly-quadratic time remains an open question.

\paragraph{Concurrent Work.} We note that Khanna, Padaki, and Waingarten independently obtain a $\tilde{O}(n\cdot \OPT)$ time algorithm for constructing a near optimally sparse $\alpha$-shortcut reachable graphs \cite{KhannaPadakiWaingarten:2025}. Like our work, their approach is based on connecting the problem to set cover. They also define and describe efficient algorithms for an interesting bi-criteria variant of the problem. 


\section{Technical Overview}
\label{sec:tech_overview}

\begin{figure}[ht]
    \centering
    \vspace{-.5em}
\tikzset{every picture/.style={line width=0.75pt}} 

\begin{tikzpicture}[x=0.75pt,y=0.75pt,yscale=-.8,xscale=.8]

\draw   (35,146) .. controls (35,141.58) and (38.58,138) .. (43,138) .. controls (47.42,138) and (51,141.58) .. (51,146) .. controls (51,150.42) and (47.42,154) .. (43,154) .. controls (38.58,154) and (35,150.42) .. (35,146) -- cycle ;
\draw  [fill={rgb, 255:red, 255; green, 0; blue, 0 }  ,fill opacity=1 ] (25,55) .. controls (25,50.58) and (28.58,47) .. (33,47) .. controls (37.42,47) and (41,50.58) .. (41,55) .. controls (41,59.42) and (37.42,63) .. (33,63) .. controls (28.58,63) and (25,59.42) .. (25,55) -- cycle ;
\draw  [fill={rgb, 255:red, 255; green, 0; blue, 0 }  ,fill opacity=1 ] (57,52) .. controls (57,47.58) and (60.58,44) .. (65,44) .. controls (69.42,44) and (73,47.58) .. (73,52) .. controls (73,56.42) and (69.42,60) .. (65,60) .. controls (60.58,60) and (57,56.42) .. (57,52) -- cycle ;
\draw  [fill={rgb, 255:red, 255; green, 0; blue, 0 }  ,fill opacity=1 ] (88,98) .. controls (88,93.58) and (91.58,90) .. (96,90) .. controls (100.42,90) and (104,93.58) .. (104,98) .. controls (104,102.42) and (100.42,106) .. (96,106) .. controls (91.58,106) and (88,102.42) .. (88,98) -- cycle ;
\draw   (85,147) .. controls (85,142.58) and (88.58,139) .. (93,139) .. controls (97.42,139) and (101,142.58) .. (101,147) .. controls (101,151.42) and (97.42,155) .. (93,155) .. controls (88.58,155) and (85,151.42) .. (85,147) -- cycle ;
\draw   (91,177) .. controls (91,172.58) and (94.58,169) .. (99,169) .. controls (103.42,169) and (107,172.58) .. (107,177) .. controls (107,181.42) and (103.42,185) .. (99,185) .. controls (94.58,185) and (91,181.42) .. (91,177) -- cycle ;
\draw [color={rgb, 255:red, 0; green, 0; blue, 0 }  ,draw opacity=1 ][fill={rgb, 255:red, 255; green, 0; blue, 0 }  ,fill opacity=1 ][line width=1.5]    (43,146) -- (64.09,55.89) ;
\draw [shift={(65,52)}, rotate = 103.17] [fill={rgb, 255:red, 0; green, 0; blue, 0 }  ,fill opacity=1 ][line width=0.08]  [draw opacity=0] (12.48,-3.12) -- (0,0) -- (12.48,3.12) -- cycle    ;
\draw   (281,148) .. controls (281,143.58) and (284.58,140) .. (289,140) .. controls (293.42,140) and (297,143.58) .. (297,148) .. controls (297,152.42) and (293.42,156) .. (289,156) .. controls (284.58,156) and (281,152.42) .. (281,148) -- cycle ;
\draw   (271,57) .. controls (271,52.58) and (274.58,49) .. (279,49) .. controls (283.42,49) and (287,52.58) .. (287,57) .. controls (287,61.42) and (283.42,65) .. (279,65) .. controls (274.58,65) and (271,61.42) .. (271,57) -- cycle ;
\draw   (303,54) .. controls (303,49.58) and (306.58,46) .. (311,46) .. controls (315.42,46) and (319,49.58) .. (319,54) .. controls (319,58.42) and (315.42,62) .. (311,62) .. controls (306.58,62) and (303,58.42) .. (303,54) -- cycle ;
\draw  [fill={rgb, 255:red, 0; green, 0; blue, 255 }  ,fill opacity=1 ] (334,100) .. controls (334,95.58) and (337.58,92) .. (342,92) .. controls (346.42,92) and (350,95.58) .. (350,100) .. controls (350,104.42) and (346.42,108) .. (342,108) .. controls (337.58,108) and (334,104.42) .. (334,100) -- cycle ;
\draw  [fill={rgb, 255:red, 0; green, 0; blue, 255 }  ,fill opacity=1 ] (331,149) .. controls (331,144.58) and (334.58,141) .. (339,141) .. controls (343.42,141) and (347,144.58) .. (347,149) .. controls (347,153.42) and (343.42,157) .. (339,157) .. controls (334.58,157) and (331,153.42) .. (331,149) -- cycle ;
\draw  [fill={rgb, 255:red, 0; green, 0; blue, 255 }  ,fill opacity=1 ] (337,179) .. controls (337,174.58) and (340.58,171) .. (345,171) .. controls (349.42,171) and (353,174.58) .. (353,179) .. controls (353,183.42) and (349.42,187) .. (345,187) .. controls (340.58,187) and (337,183.42) .. (337,179) -- cycle ;
\draw [color={rgb, 255:red, 0; green, 0; blue, 0 }  ,draw opacity=1 ][fill={rgb, 255:red, 0; green, 0; blue, 255 }  ,fill opacity=1 ][line width=1.5]    (289,148) -- (335,148.92) ;
\draw [shift={(339,149)}, rotate = 181.15] [fill={rgb, 255:red, 0; green, 0; blue, 0 }  ,fill opacity=1 ][line width=0.08]  [draw opacity=0] (12.48,-3.12) -- (0,0) -- (12.48,3.12) -- cycle    ;
\draw   (512,148) .. controls (512,143.58) and (515.58,140) .. (520,140) .. controls (524.42,140) and (528,143.58) .. (528,148) .. controls (528,152.42) and (524.42,156) .. (520,156) .. controls (515.58,156) and (512,152.42) .. (512,148) -- cycle ;
\draw   (502,57) .. controls (502,52.58) and (505.58,49) .. (510,49) .. controls (514.42,49) and (518,52.58) .. (518,57) .. controls (518,61.42) and (514.42,65) .. (510,65) .. controls (505.58,65) and (502,61.42) .. (502,57) -- cycle ;
\draw   (534,54) .. controls (534,49.58) and (537.58,46) .. (542,46) .. controls (546.42,46) and (550,49.58) .. (550,54) .. controls (550,58.42) and (546.42,62) .. (542,62) .. controls (537.58,62) and (534,58.42) .. (534,54) -- cycle ;
\draw  [fill={rgb, 255:red, 255; green, 255; blue, 255 }  ,fill opacity=1 ] (565,100) .. controls (565,95.58) and (568.58,92) .. (573,92) .. controls (577.42,92) and (581,95.58) .. (581,100) .. controls (581,104.42) and (577.42,108) .. (573,108) .. controls (568.58,108) and (565,104.42) .. (565,100) -- cycle ;
\draw  [fill={rgb, 255:red, 65; green, 117; blue, 5 }  ,fill opacity=1 ] (562,149) .. controls (562,144.58) and (565.58,141) .. (570,141) .. controls (574.42,141) and (578,144.58) .. (578,149) .. controls (578,153.42) and (574.42,157) .. (570,157) .. controls (565.58,157) and (562,153.42) .. (562,149) -- cycle ;
\draw  [fill={rgb, 255:red, 65; green, 117; blue, 5 }  ,fill opacity=1 ] (568,179) .. controls (568,174.58) and (571.58,171) .. (576,171) .. controls (580.42,171) and (584,174.58) .. (584,179) .. controls (584,183.42) and (580.42,187) .. (576,187) .. controls (571.58,187) and (568,183.42) .. (568,179) -- cycle ;
\draw [color={rgb, 255:red, 0; green, 0; blue, 0 }  ,draw opacity=1 ][fill={rgb, 255:red, 0; green, 0; blue, 255 }  ,fill opacity=1 ][line width=1.5]    (520,148) -- (572.5,177.06) ;
\draw [shift={(576,179)}, rotate = 208.97] [fill={rgb, 255:red, 0; green, 0; blue, 0 }  ,fill opacity=1 ][line width=0.08]  [draw opacity=0] (12.48,-3.12) -- (0,0) -- (12.48,3.12) -- cycle    ;

\draw (17,143) node [anchor=north west][inner sep=0.75pt]   [align=left] {$\displaystyle p_{1}$};
\draw (106,134) node [anchor=north west][inner sep=0.75pt]   [align=left] {$\displaystyle p_{5}$};
\draw (1,39) node [anchor=north west][inner sep=0.75pt]   [align=left] {$\displaystyle p_{2}$};
\draw (60,24) node [anchor=north west][inner sep=0.75pt]   [align=left] {$\displaystyle p_{3}$};
\draw (102,75) node [anchor=north west][inner sep=0.75pt]   [align=left] {$\displaystyle p_{4}$};
\draw (106,183) node [anchor=north west][inner sep=0.75pt]   [align=left] {$\displaystyle p_{6}$};
\draw (94,38) node [anchor=north west][inner sep=0.75pt]  [font=\Large] [align=left] {$\displaystyle {\displaystyle \textcolor[rgb]{1,0,0}{{S}_{1\rightarrow 3}}}$};
\draw (263,145) node [anchor=north west][inner sep=0.75pt]   [align=left] {$\displaystyle p_{1}$};
\draw (352,136) node [anchor=north west][inner sep=0.75pt]   [align=left] {$\displaystyle p_{5}$};
\draw (247,41) node [anchor=north west][inner sep=0.75pt]   [align=left] {$\displaystyle p_{2}$};
\draw (306,26) node [anchor=north west][inner sep=0.75pt]   [align=left] {$\displaystyle p_{3}$};
\draw (348,77) node [anchor=north west][inner sep=0.75pt]   [align=left] {$\displaystyle p_{4}$};
\draw (352,185) node [anchor=north west][inner sep=0.75pt]   [align=left] {$\displaystyle p_{6}$};
\draw (355,112) node [anchor=north west][inner sep=0.75pt]  [font=\Large] [align=left] {$\displaystyle {\displaystyle \textcolor[rgb]{0,0,1}{{S}_{1\rightarrow 5}}}$};
\draw (494,145) node [anchor=north west][inner sep=0.75pt]   [align=left] {$\displaystyle p_{1}$};
\draw (583,136) node [anchor=north west][inner sep=0.75pt]   [align=left] {$\displaystyle p_{5}$};
\draw (478,41) node [anchor=north west][inner sep=0.75pt]   [align=left] {$\displaystyle p_{2}$};
\draw (537,26) node [anchor=north west][inner sep=0.75pt]   [align=left] {$\displaystyle p_{3}$};
\draw (579,77) node [anchor=north west][inner sep=0.75pt]   [align=left] {$\displaystyle p_{4}$};
\draw (583,185) node [anchor=north west][inner sep=0.75pt]   [align=left] {$\displaystyle p_{6}$};
\draw (596,153) node [anchor=north west][inner sep=0.75pt]  [font=\Large] [align=left] {$\displaystyle {\displaystyle \textcolor[rgb]{0.25,0.46,0.02}{{S}_{1\rightarrow 6}}}$};

\end{tikzpicture}
\vspace{-.5em}
    \caption{Illustration of the navigability set cover problem corresponding to node $p_1$. Each image shows the set, $S_{1\rightarrow j}$, for a different choice of out-neighbor $j$. $S_{1\rightarrow j}$ contains all $p_k$ for which $d(p_j,p_k) < d(p_i,p_k)$. Constructing a navigable graph which has the fewest out-edges from node $1$ is equivalent to solving the minimum set cover problem for this instance. Constructing a navigable graph with the fewest total number of edges is equivalent to solving the minimum set cover problem for all $n$ different problem instances.
    }
    \vspace{-.75em}
        \label{fig:set_cover_illustration}
\end{figure}

\subsection{Sparsest Navigable Graph as Minimum Set Cover}
\label{sec:nav_graph_set_cover}
As a starting point towards understanding better algorithms, we first observe that constructing a navigable graph amounts to solving $n$ separate \emph{minimum set cover instances}, one for each node. For the instance corresponding to node $p_i$, the elements that need to be covered are all other points $p_k$, $k \neq i$. 
We also have one set, $S_{i\rightarrow j}$, for each $p_j$, $j\neq i$, which corresponds to adding an edge from $i$ to $j$ in $G$. $S_{i\rightarrow j}$ contains any point $p_k$ such that $d(p_j,p_k) < d(p_i,p_k)$. I.e., to cover an element $p_k$, we need to have an edge from $p_i$ either directly to $p_k$, or to some node that gets us closer to $p_k$. Choosing the smallest number of sets that covers all elements amounts to choosing the smallest number of out-edges for $p_i$ that satisfy the navigability requirements of \Cref{def:navigability}. See \Cref{fig:set_cover_illustration} for an illustration of the set cover instances.

The set cover formulation reinforces that, while navigability is usually thought of as a ``global property'' about routing in $G$, it can be viewed as a collection of purely ``local'' conditions: the out-edges of node $p_i$ can be chosen optimally without considering out-edges for any other nodes in the graph. Moreover, while we are not aware of prior work that leverages the connection to set cover (beyond the concurrent work of \cite{KhannaPadakiWaingarten:2025})
it immediately leads to an approach for constructing a navigable graph with near minimal sparsity: run any black-box approximation algorithm for each of the set cover instances. The best known approximation ratio of $O(\log n)$ can be achieved by the standard greedy algorithm in time linear in the total size of the sets \cite{CormenLeisersonRivest:2022}. 
Implementing this idea naively amounts to $O(n^2)$ time per instance in our setting, for a total complexity of ${O}(n^3)$ time to construct a navigable graph where the degree of every node is at most $O(\log n)$ higher than optimal. By simultaneously obtaining near-minimum degree for every node, it follows that the greedy algorithm produces a graph $G$ whose maximum and average degree are within $O(\log n)$ of optimal.

\subsection{Algorithmic Techniques}
\label{sec:technical_overview}
Our algorithm that establishes \Cref{thm:main_navigability} relies on three key ingredients to push beyond the $O(n^3)$ baseline obtained via greedy set cover. These ingredients are summarized below.

\medskip \noindent \textbf{Efficient Query Access.} We first observe that, instead of treating navigable graph construction as $n$ disjoint set cover instances, we can obtain a faster runtime by taking advantage of the fact that these instances are highly correlated. 
Concretely, we can \emph{jointly preprocess} the instances in $O(n^2\log n)$ time to support natural ``query access'' to the set cover problem, which obviates the need to explicitly construct each set cover instance (write down each set and its elements), as doing so would already require $\Omega(n^3)$ time. 

Specifically, for each point, we construct a list of all other points, sorted in increasing order of distance from that point. Doing so takes $O(n\log n)$ time per data point, so $O(n^2\log n)$ time total. 
From these lists, we can immediately answer what is called a ``SetOf'' query in the literature on sublinear-time set cover algorithms \cite{NguyenOnak:2008,YoshidaYamamotoIto:2012,IndykMahabadiRubinfeld:2018}: concretely, for any element $p_k$, we can return in $O(1)$ time the number of sets, $q$, that cover the element, and the identity of the $\ell^{\text{th}}$ set covering $p_k$ for any $\ell \leq q$. 
To do so, we simply look at the sorted list corresponding to the point $p_k$.
For the set cover instance corresponding to node $i$, the number of points before $p_i$ in $p_k$'s list is $q$, as this is the number of points $j$ for which $d(p_j,p_k) < d(p_i,p_k)$. 
Moreover, the $\ell^{\text{th}}$ set that covers $p_k$ is simply the set corresponding to the $\ell^\text{th}$ element of the list.
In addition to SetOf queries, presorting our data also allows us to easily answer ``Membership'' queries in $O(1)$ time: to check if $p_k\in S_{i\rightarrow j}$, simply check if $p_j$ precedes $p_i$ in the sorted list corresponding to $p_k$. 

With the ability to answer SetOf and Membership queries in $O(1)$ time, we can immediately obtain faster algorithms by turning to work on sublinear-time set cover algorithms, which assumes similar query access \cite{Har-PeledIndykMahabadi:2016,IndykMahabadiRubinfeld:2018,KoufogiannakisYoung:2014}. Specifically, a modification of an algorithm from \cite{KoufogiannakisYoung:2014} can be shown to provide an $O(\log n)$ approximation to an unweighted set cover problem with $n$ elements, $m$ sets, and optimal solution $\OPT$ in time $\tilde{O}(m + n\cdot \OPT)$ \cite{Vakilian:2025}. Applying this result separately to each node, 
we can obtain an $O(\log n)$ approximation to the sparsest navigable graph in time $\tilde{O}\left(n\cdot \OPT\right)$, which is always bounded by $\tilde{O}(n^{2.5})$ since every point set has a navigable graph with $\OPT = O(n^{1.5})$ (see \cite{DiwanGouMusco:2024} or \Cref{sec:prelims}). Recall that our final goal (\Cref{thm:main_navigability}) is a $\tilde{O}(n^2)$ time algorithm, but $\tilde{O}(n^{2.5})$ is already a significant improvement on the naive $O(n^3)$. 

Similar bounds can be obtained for constructing sparse $\alpha$-shortcut reachable and $\tau$-monotonic graphs, which as discussed, are also of interest for graph-based search \cite{SubramanyaDevvritKadekodi:2019,IndykXu:2023,PengChoiChan:2023}. These problems can also be framed as collections of set cover problems, and SetOf/Membership queries can likewise be implemented in $O(1)$ time with slightly different preprocessing. Overall, we obtain the following result, formally proven in \Cref{sec:generic_applications}:
\begin{corollary}
    \label{cor:alpha-tau}
    For any $\alpha \geq 1$, let $\OPT^{\alpha}$ be the minimum number of edges in an $\alpha$-shortcut reachable graph for a point set $P$ and distance function $d$. For a fixed constant $c$, there is an algorithm that constructs an $\alpha$-shortcut reachable graph with $\leq c\log n \cdot \OPT^{\alpha}$ edges in $\tilde{O}\left(\min(n^{2.5},n\cdot \OPT^{\alpha}))\right)$ time. Similarly, for any $\tau \geq 0$, let $\OPT^{\tau}$ be the minimum number of edges in a $\tau$-monotonic graph for $P$. There is an algorithm that constructs a $\tau$-monotonic graph with $\leq c\log n \cdot \OPT^{\tau}$ edges in $\tilde{O}\left(\min(n^{2.5},n\cdot \OPT^{\tau}))\right)$ time.
\end{corollary}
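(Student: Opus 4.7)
The plan is to extend the set-cover reduction of \Cref{sec:nav_graph_set_cover} to the $\alpha$-shortcut reachable and $\tau$-monotonic settings, and then invoke \Cref{thm:general_sublinear} on each of the $n$ resulting instances. For each node $p_i$, define a set-cover instance whose universe is $\{p_j : j \neq i\}$ and whose sets are indexed by candidate out-neighbors $p_k$: for the $\alpha$-shortcut case, let
\[
S_{i\to k}^{\alpha} = \{p_j : d(p_k,p_j) < (1/\alpha)\, d(p_i,p_j)\};
\]
for the $\tau$-monotonic case, replace the right-hand side with $d(p_i,p_j) - \tau$. The out-edges chosen at $p_i$ in any $\alpha$-shortcut reachable (resp.\ $\tau$-monotonic) graph are in bijection with covers of this instance, so the minimum total edge count decomposes as $\OPT^{\alpha} = \sum_i \OPT_i^{\alpha}$ (and likewise $\OPT^{\tau} = \sum_i \OPT_i^{\tau}$).

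Next, I would equip these instances with efficient query access by preprocessing, for each point $p_j$, a sorted list of all other points in increasing order of distance from $p_j$; this preprocessing costs $\tilde{O}(n^2)$ total time. A Membership query ``is $p_j \in S_{i\to k}^{\alpha}$?'' requires only two distance evaluations and a comparison, so runs in $O(1)$ time. A SetOf query at element $p_j$ reduces to a threshold query: $p_k$ is a witness iff $d(p_k, p_j)$ lies below $(1/\alpha) d(p_i, p_j)$ (or $d(p_i, p_j) - \tau$), so binary search on $p_j$'s sorted list returns the number of witnesses in $O(\log n)$ time, and the $\ell$-th witness is recovered by a direct lookup into that list. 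These $O(\log n)$ query times meet the hypotheses of \Cref{thm:general_sublinear} up to a logarithmic factor that is absorbed by $\tilde{O}(\cdot)$.

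Applying \Cref{thm:general_sublinear} to instance $i$ (with $m = n - 1$ sets and $n - 1$ elements) returns an $O(\log n)$-approximate cover in $\tilde{O}(\min(n + n\cdot \OPT_i, n^2/\OPT_i))$ time. Summing over $i$, the returned graph has at most $O(\log n) \cdot \OPT^{\alpha}$ edges (and analogously for $\tau$). For the runtime, combining the two bounds per instance via AM-GM gives $\min(n\cdot \OPT_i, n^2/\OPT_i) \leq n^{3/2}$, so the total across all $n$ instances is $\tilde{O}(n^{5/2})$. Alternatively, summing only the first bound yields $\tilde{O}(n^2 + n\cdot \OPT^{\alpha}) = \tilde{O}(n \cdot \OPT^{\alpha})$, where the simplification uses $\OPT^{\alpha} \geq n-1$ (each node needs at least one out-edge, to cover its own nearest neighbor). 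Taking the minimum of these two upper bounds gives exactly the stated runtime.

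The main obstacle is confirming that the SetOf implementation is compatible with the sampling strategy inside \Cref{thm:general_sublinear}: that algorithm only needs the count of witnesses for $p_j$ and the ability to return the $\ell$-th witness for a uniformly random $\ell$ in that range, both of which the sorted-list data structure supplies. Beyond this, the proof is a mechanical lift of the navigability reduction to the slightly shifted thresholds defining $\alpha$-shortcut reachability and $\tau$-monotonicity, with no further conceptual work required.
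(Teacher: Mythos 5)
Your reduction, query-access data structure (sorted distance lists plus binary search over the prefix of witnesses), and runtime accounting (per-instance bound $\tilde{O}(\min(n\cdot\OPT_i,\, n^2/\OPT_i)) \leq \tilde{O}(n^{1.5})$, summed over $i$) are exactly the paper's argument, so the overall approach matches.

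There is one small but genuine gap in the $\tau$-monotonic case. You define $T_{i\to k} = \{p_j : d(p_k,p_j) < d(p_i,p_j) - \tau\}$ by "replacing the right-hand side with $d(p_i,p_j)-\tau$." If $d(p_i,p_j) \leq \tau$, then no $p_k$ whatsoever satisfies this inequality (not even $p_k = p_j$, since $d(p_j,p_j)=0 \not< d(p_i,p_j)-\tau$), so the element $p_j$ is contained in no set and your set-cover instance for $p_i$ is infeasible, even though $\tau$-monotonic graphs exist (the definition requires $p_i$ to be directly connected to such $p_j$). The fix, which the paper makes explicitly, is to additionally place $p_j$ into the set $T_{i\to j}$, so that selecting the direct edge $(p_i,p_j)$ covers $p_j$; this preserves the bijection between covers and valid out-neighborhoods and does not disturb the prefix structure used by your SetOf implementation. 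Everything else in your write-up, including the observation that $\OPT \geq n$ absorbs the preprocessing cost and the AM--GM step giving $n^{2.5}$, is sound.
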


\medskip \noindent \textbf{Greedy Simulation.}
More work is required to obtain the $\tilde{O}(n^2)$ time result of \Cref{thm:main_navigability}. In particular, this result takes advantage of additional structure in the set cover instances that arise from the minimum navigable graph problem. To do so, we introduce a new, purely combinatorial algorithm for set cover with SetOf/Membership queries, and then show how to accelerate our new method with careful edge preprocessing. 


Concretely, our approach is based on \emph{directly simulating} the natural greedy algorithm for set cover, which repeatedly selects the set that covers the largest number of uncovered elements. This differs from the sublinear time method from \cite{KoufogiannakisYoung:2014}, which is based on solving the fractional set cover problem and rounding the solution.
To simulate the greedy method, we uniformly sample uncovered points and run SetOf queries to record what sets those points are contained in. Doing so allows us to estimate the number of uncovered elements in any set. These estimates can then be used to choose a near-optimal set at each iteration, yielding the same $O(\log n)$ approximation that is obtained when the truly optimal greedy set is chosen.
This greedy simulation is reminiscent of an approach from Nguyen and Onak \cite{NguyenOnak:2008}.
It also bears high-level similarity to an algorithm of Indyk et al. \cite{IndykMahabadiRubinfeld:2018}, which samples random elements and selects multiple sets at once by finding a near-optimal set cover for those elements. 

\medskip \noindent \textbf{Edge Preprocessing.} 
A first optimization to our greedy simulation method comes from sampling roughly $\OPT$ sets uniformly at random to add to the set cover for each node, where $\OPT$ is the size of the minimum set cover for the node.\footnote{While we do not know $\OPT$ in advance, it can be estimated via an exponential grid search.}
Doing so does not significantly impact our final approximation factor, and ensures that all elements contained in $\gtrsim  m/ \OPT$ sets are covered after this first sampling step (here $m$ is the total number of sets, which is $n-1$ for the navigability problem). 
As a consequence, the cost of recording what sets a randomly sampled element is contained in can be bounded by $\lesssim  m/ \OPT$ instead of the naive $m$. We note that the same preprocessing technique is used in the sublinear time set cover algorithm of \cite{IndykMahabadiRubinfeld:2018}. 

Already, this simple optimization allows our greedy simulation to match the efficiency of current state-of-the-art sublinear time algorithms for set cover -- see \Cref{sec:sublinear-set-cover} for more details. 
Moreover, adding random edges can lead to an even faster algorithm in the minimum navigable graph setting.  Indeed, if the optimum degree of every node is roughly the same -- i.e., $O(\OPT^*/n)$, where $\OPT^*$ is the number of total edges in the sparsest navigable graph -- one can prove that the random edges cover all but $O(n^3 / \OPT^*)$ elements across all $n$ set cover instances. The number of uncovered elements plays a direct role in the running time of our algorithm, and this bound suffices to show that the total running time for all $n$ instances is at most $\tilde{O}(n^2)$.

However, more thought is required when the optimal degrees vary (they could by a lot). In this case, we are unable to make strong claims about the absolute number of elements covered by random edges.
We deal with this challenge by introducing an additional preprocessing step: essentially, we group nodes with similar optimum degree and connect them by a clique. 
The effect of doing so is that any $p_j$ not in the clique is covered in \emph{all but one of the set cover instances corresponding to nodes in the clique}. In particular, $p_j$ is covered in all instances besides the one corresponding to its nearest neighbor in the clique. We defer details to \Cref{sec:main_algo}, but roughly speaking, this clique forming technique allows for fine-grained control over the number of uncovered elements in each instance after preprocessing, yielding our final $\tilde{O}(n^2)$ time result.

A side consequence of this final technique (discussed further in \Cref{sec:prelims}) is that it actually allows us to prove a slightly tighter existence result for navigable graphs: we show that any point set in general position admits a navigable graph with $O(n^{3/2})$ edges. Prior work achieved a bound of $O(n^{3/2}\sqrt{\log n})$ \cite{DiwanGouMusco:2024}. 

\section{Preliminaries}
\label{sec:prelims}

\noindent \textbf{Set Cover Notation.}
Central to our approach is the set cover formulation of the sparsest navigable graph problem. We begin by reiterating notation introduced in the introduction. Formally, for each point $p_i \in P = \{p_1, \ldots, p_n\}$, we have a set cover instance $\mathcal{I}_i = ( P \setminus \{p_i\}, \mathcal{F}_i)$ with elements $P \setminus \{p_i\}$ and sets $\mathcal{F}_i = \{S_{i\rightarrow j}: j \neq i\}$, where the set $S_{i\rightarrow j}$ contains all $p_k \in  P \setminus \{p_i\}$ such that $d(p_j,p_k) < d(p_i,p_k)$. 

Our solution to the set cover instances will be a set of out-neighborhoods $N_1,\ldots, N_n$, one for each point. If $p_j \in N_i$ this indicates that set $S_{i \rightarrow j}$ was selected in our solution to $\mathcal{I}_i$. These neighborhoods comprise our final navigable graph, $G$. We let $\OPT_i$ denote the minimum size solution to set cover instance $\mathcal{I}_i$ and $\OPT^* = \sum_{i=1}^n \OPT_i$ denote the minimum number of edges in a navigable graph for point set $P$.

\medskip
\noindent \textbf{Distance Functions.} As discussed, our algorithms work with any distance function $d$, where we define:
\begin{definition}[Distance Function]
    \label{def:dist}
    A \emph{distance function} $d: P\times P\rightarrow \R^{\geq 0}$ for a point set $P = \{p_1, \ldots, p_n\}$ is any function such that $d(p_i,p_i) = 0$ for all $i$ and $d(p_j,p_i) = d(p_i,p_j) > 0$ for all $j \neq i$. 
\end{definition} 
Importantly, we do not require triangle inequality.
Throughout, we make the mild assumption that the point set $P$ is in ``general position'' meaning that 1) it contains no repeated data points and 2) for all $i, j, k$, $d(p_i, p_j) \neq d(p_i, p_k)$. These conditions are essentially always true in practice, or can be ensured by adding arbitrarily small random perturbations to the data. While assuming general position is not strictly necessary, as in prior work \cite{Al-JazzaziDiwanGou:2025}, doing so greatly simplifies the definition of navigability. In particular, \Cref{def:navigability} only captures the usual definition of navigability (that greedy search always converges for a query $q\in P$) if  there is a strict ordering on the distances $d(p_i, p_1), \ldots, d(p_i, p_n)$. Otherwise ``greedy search'' is not well-defined unless we introduce a tie-breaking rule that determines where to move if two neighbors are equally close to a query. Avoiding the notation and corner-casing required to handle tie-breaking simplifies our results. 

\medskip
\noindent \textbf{Other Notation.} Throughout, we use ``with high probability'' to mean with probability $1-1/n^c$ for an arbitrarily large constant $c$ (the specific choice of which will impact other constant terms in the statement).

\subsection{The Distance-Based Permutation Matrix and Fast Query Access}
As in work on existential results for navigability \cite{DiwanGouMusco:2024}, it will be helpful to think about our navigability problem as represented in a ``distance-based permutation matrix'', $\Pi$, with $n$ rows and $n$ columns. The $i^\text{th}$ row contains the points in $P$ sorted in increasing order of their distance from $p_i$. See \Cref{fig:distance_based_permutation_matrix} for an example. The value of the distance-based permutation matrix is that it only takes $O(n^2\log n)$ time to construct, yet allows us to efficiently query all $n$ set cover instances $\mathcal{I}_1, \ldots, \mathcal{I}_n$. In particular, we have the following:

\begin{figure}[h]
\centering
\vspace{-.5em}
\includegraphics[width=.45\textwidth]{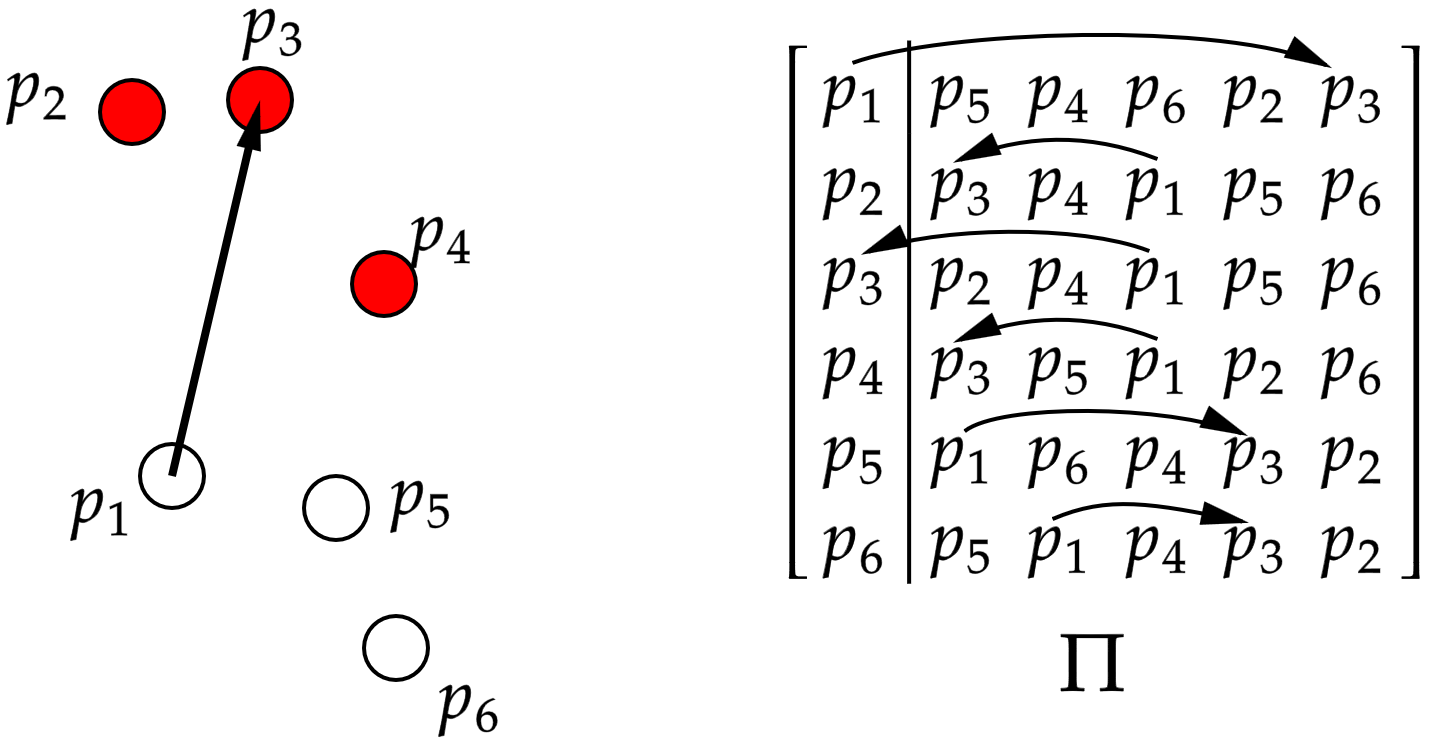}
\vspace{-.5em}
\caption{\label{fig:distance_based_permutation_matrix}An example distance-based permutation matrix, $\Pi$, for a data set in two-dimensional Euclidean space.
The $i^\text{th}$ row contains all points in $P$ sorted in increasing order of their distance from $p_i$. 
 $\Pi$'s first row is $[p_1, p_5, p_4, p_6, p_2 p_3]$ since $d(p_1,p_1) < d(p_1,p_5) <\ldots < d(p_1,p_3)$.
$\Pi$ can be used to quickly identify what sets a node $p_k$ is contained in for a particular set cover instance $\mathcal{I}_i$: if $p_k \in S_{i\rightarrow j}$, then $p_j$ will lie to the \emph{left} of $p_i$ in the $k^\text{th}$ row of $\Pi$. This property is illustrated for a particular set $S_{1\rightarrow 3} = \{p_2, p_3, p_4\}$, highlighted in red. As we can see, $p_3$ lies to the left of $p_1$ in rows $2,3,4$ in $\Pi$, and to the right in all other rows.} 
    \vspace{-.75em}
\end{figure}

\begin{claim}\label{clm:sublinear_access}
Given a set of $n$ points, $P$, and a distance function $d$ that can be evaluated in $T$ time for any $p_i,p_j \in P$, we can construct in $O(n^2T + n^2\log n)$ time a data structure that answers the following queries in constant time for any $p_i,p_j,p_k \in \{1,\ldots, n\}$
\begin{enumerate}
    \item $\MemberOf\left(p_i,p_j,p_k\right)$: Returns \YES if $p_k\in S_{i\to j}$, \NO otherwise.
    \item $\FreqOf(p_i,p_k)$: Returns the number of sets in $\mathcal{I}_i$ that $p_k$ is a member of, i.e., $|\{j: p_k \in S_{i\rightarrow j}\}|$
    \item $\SetOf(p_i,p_k,\ell)$: Returns the $\ell^\text{th}$ set (according to some fixed order) that $p_k$ is a member of in $\mathcal{I}_i$, or $\text{\NULL}$ if $\ell > \FreqOf(p_i,p_k)$. In other words, return the $\ell^\text{th}$ element of $\{j: p_k \in S_{i\rightarrow j}\}$.
\end{enumerate}
\end{claim}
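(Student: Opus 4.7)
The plan is to preprocess the points into the distance-based permutation matrix $\Pi$ described just before the claim, together with its row-wise inverses, after which each of the three queries reduces to a constant number of array lookups. Concretely, I would first evaluate $d(p_i,p_j)$ for every ordered pair using the given oracle and store the values in an $n\times n$ table; this costs $O(n^2 T)$ time. For each $j\in\{1,\ldots,n\}$ I would then sort the points in increasing order of their distance from $p_j$ to obtain row $j$ of $\Pi$, so $\Pi[j][\ell]$ is the index of the $\ell$-th closest point to $p_j$. Each sort takes $O(n\log n)$, so sorting all rows costs $O(n^2 \log n)$. The general-position assumption ensures each row is a true permutation with no ties to break.

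Next I would invert each row: with one linear pass over row $j$ I build a table $\mathrm{rank}_j$ with $\mathrm{rank}_j(p_i)=\ell$ iff $\Pi[j][\ell]=i$. This contributes an additional $O(n^2)$ time, which is absorbed by the sorting cost, keeping the total preprocessing at $O(n^2 T+n^2\log n)$.

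The queries are then answered by exploiting the key identity: $p_j\in S_{i\to k}$ iff $d(p_j,p_k)<d(p_j,p_i)$, which in turn holds iff $p_k$ precedes $p_i$ in row $j$ of $\Pi$, i.e., $\mathrm{rank}_j(p_k)<\mathrm{rank}_j(p_i)$. So $\MemberOf(p_i,p_j,p_k)$ is a single comparison of two rank entries. For $\FreqOf(p_i,p_j)$, the number of $k\neq i$ with $d(p_j,p_k)<d(p_j,p_i)$ equals the number of entries strictly to the left of $p_i$ in row $j$, which is $\mathrm{rank}_j(p_i)-1$; the excluded index $k=i$ is automatically absent because the strict inequality fails, and $k=j$ is correctly counted via its rank-$1$ position. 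For $\SetOf(p_i,p_j,\ell)$ we fix the order on the covering sets to be ``increasing distance from $p_j$'', so the $\ell$-th covering set corresponds to $\Pi[j][\ell]$ whenever $\ell<\mathrm{rank}_j(p_i)$, and to $\NULL$ otherwise. Each answer is a constant number of table lookups.

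There is no real obstacle here; the claim is essentially a packaging lemma whose content is that sorted neighbor lists plus their inverses encode every question one needs to ask about the set cover instances $\mathcal{I}_1,\ldots,\mathcal{I}_n$. The only subtlety worth flagging is that the general-position hypothesis is what makes the ``$\ell$-th set'' unambiguous and the inverse rank function well defined; without it the sorting step would need a deterministic tie-breaking rule, but the downstream bounds are unaffected.
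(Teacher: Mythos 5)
Your proposal is correct and matches the paper's proof essentially verbatim: both build the distance-based permutation matrix $\Pi$ plus the per-row rank (inverse permutation) tables in $O(n^2T + n^2\log n)$ time and answer all three queries by rank comparisons and lookups. Your extra remarks on the $k=j$ case and on general position are consistent with the paper's treatment.
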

Our algorithms will only ever use \SetOf queries to return all values of $j$ such that $p_k \in S_{i\rightarrow j}$. This can be done in linear time by repeatedly issuing \SetOf queries for $\ell = 1, 2, \ldots$ until a \NULL value is returned.
\begin{proof}\textbf{Data Structure.} We first compute and store all pairwise distances in $O(n^2 T)$ time. We then sort $n$ lists to build the distance-based permutation matrix, $\Pi$ in  $O(n^2\log n)$ time. We then precompute and store the rank of each $p_j$ in the $i^\text{th}$ row of $\Pi$, for all $i$. We denote this rank by $\pi_i^{-1}(j)$.  This takes $O(n^2)$ time.

\smallskip
\noindent\textbf{Answering Queries.}
To answer a \MemberOf query, we simply return \YES if $\pi_k^{-1}(j) < \pi_k^{-1}(i)$ -- i.e., $p_j$ precedes $p_i$ in the $k^\text{th}$ row of $\Pi$. We return \NO otherwise. 
To answer a \FreqOf query, we return the number of items that precede $p_i$ in row $k$, i.e. $\pi_k^{-1}(i) - 1$.
To answer a \SetOf query, we return \NULL if $\ell \geq \pi_k^{-1}(i)$. Otherwise, we return $\Pi_{k,\ell}$, the $\ell^\text{th}$ entry in the $k^\text{th}$ row of $\Pi$. All operations take $O(1)$ time.
\end{proof}

\subsection{Pre-processing via Random Edges and Arbitrary Cliques}
\label{subsec:preprocess_overview}
As discussed in \Cref{sec:technical_overview}, a key component of our algorithm is to reduce the cost of running a greedy set cover method by ``pre-processing'' the navigability problem by adding a small number of less carefully chosen edges that nevertheless aid in covering many elements. We do so in two ways: by adding random edges and adding random cliques. We discuss basic properties of these strategies below. 

\subsubsection{The Power of Random Edges}\label{subsec:prelims-random-edges} A first observation, which is also used in prior work on sublinear time set cover \cite{IndykMahabadiRubinfeld:2018}, is that adding random sets to instances $\mathcal{I}_1, \ldots, \mathcal{I}_n$ will cover all elements contained in many sets, with high probability. In our setting, this corresponds to adding random out-neighbors to each node. We have the following basic claim:

\begin{claim}
    \label{clm:random_edges}
Suppose we sample $\beta$ indices $j_1, \ldots, j_\beta$ uniformly at random from $\{1, \ldots, n\}\setminus i$ and add $S_{i \to j_1}, \ldots, S_{i \to j_\beta}$ to our solution, $N_i$, for set cover instance $\mathcal{I}_i$. Then for any $p_k$ contained in at least $c\ln(n)\cdot \frac{n}{\beta}$ sets $S_{i\rightarrow j}$, we have that $p_k \in S_{i \to j_1}\cup\cdots \cup S_{i \to j_\beta}$ with probability $> 1 - 1/n^c$. 
\end{claim}
\begin{proof}
    We can assume $S_{i \to j_1}, \ldots, S_{i \to j_\beta}$ are drawn with replacement, as the probability that $p_k$ is covered is only higher if they are drawn without replacement.
    Since $p_k$ is contained in $z \geq cn \ln (n)/\beta$ sets, for any randomly-chosen $\ell$, $\Pr[p_k\in S_{i \to \ell}] = \frac{z}{n-1}$. Accordingly, we can bound:
    \[
    \Pr[p_k \notin  S_{i \to j_1}\cup\ldots, S_{i \to j_\beta}] \leq \left(1 - \frac{z}{n-1}\right)^{\beta} < \left(1 - \frac{c\ln n}{\beta}\right)^{\beta} \leq \left(\frac{1}{e}\right)^{c\ln n} =  \frac{1}{n^c}.\qedhere
    \]
\end{proof}
An implication of \Cref{clm:random_edges} is that, if we add $\beta$ random edges to every node in our graph, then with high probability, for all instances $\mathcal{I}_i$, the only elements remaining to be covered are contained in $\leq n/\beta$ sets. This property will be critical in our efficient simulation of the greedy set cover algorithm, which requires sampling a random uncovered element and recording all sets it is contained in.

Due to the additional structure of the navigable graph problem, adding random sets actually guarantees another property that does not hold for arbitrary collections of set cover instances: we can ensure that the total number of uncovered elements across all set cover instances is significantly reduced. 
\begin{corollary}
    \label{cor:low_total_points}
Suppose that for each set cover instance $\mathcal{I}_i$, as in \Cref{clm:random_edges}, we add $\beta$ sets uniformly at random to our solution, $N_i$. For each $i\in[n]$, let $U_i$ denote the set of all uncovered points $p_k$ that are not contained in any of the random sets chosen for instance $\mathcal{I}_i$. Then with probability $> 1 - 1/n^{c-2}$, 
\begin{align*}
    \sum_{i=1}^n |U_i| < c\log n \cdot \frac{n^2}{\beta}.
\end{align*}
\end{corollary}\label{cor:total-uncovered-points}
\begin{proof}
By \Cref{clm:random_edges} and a union bound, we have that, simultaneously for all $i$, all $p_k$ contained in $\geq c\log(n)\cdot \frac{n}{\beta}$ sets in $\mathcal{I}_i$ are in at least one of the randomly-chosen sets for instance $i$. So $U_i$ only contains elements in $< c\log(n)\cdot \frac{n}{\beta}$ sets. Referring to the distance-based permutation matrix, we notice that there are $n\cdot  (c\log(n)\cdot \frac{n}{\beta} -2)$ such elements across all instances: these elements are exactly those contained in columns proceeding column $c\log(n)\cdot \frac{n}{\beta}$ in the matrix (excluding the first column). It follows that $\sum_{i=1}^n |U_i| < c\log n \cdot \frac{n^2}{\beta}.$
\end{proof}

\subsubsection{The Power of Cliques}\label{subsec:prelims-cliques} Interestingly, a result similar to \Cref{cor:low_total_points} can be obtained without randomness, but rather by adding arbitrary cliques to our graph. In particular, we can argue that, if we add a clique over a subset of nodes, $K$, then the total number of uncovered elements across all $\mathcal{I}_i$ for $i \in K$ is bounded. Formally, we have:

\begin{claim}\label{lem:cliques}
    Suppose we select points $K \subset P$ and add a clique over those points to $G$. I.e., for each $p_i \in {K}$, we add $K\setminus \{p_i\}$ to $p_i$'s out-neighborhood, $N_i$. Then for any $p_k\notin K$, for all but possibly a single $p_i \in K$, $p_k \in {S}_{i\rightarrow j}$ for some $p_j \in N_i$. That is, $p_k$ is covered in all set cover instances corresponding to nodes in the clique, except at most one. Let $U_i$ contain all $p_k$ not covered in instance $\mathcal{I}_i$ by edges in $G$. It follows that:
    \begin{align*}
    \sum_{i\in K} |U_i| \leq n - |K|.
    \end{align*}
\end{claim}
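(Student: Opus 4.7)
The plan is to argue the per-element statement first and then sum over all $p_j\notin Q$. Fix an arbitrary $p_j \notin Q$ and let $p_{i^\ast} \in \argmin_{p_i \in Q} d(p_i, p_j)$ denote its nearest neighbor in $Q$ (unique by the general position assumption, since all distances from $p_j$ are distinct). I claim that for \emph{every} $p_i \in Q$ with $p_i \neq p_{i^\ast}$, the element $p_j$ is covered in instance $\mathcal{I}_i$ by the clique edges. Indeed, since we added the clique over $Q$, we have $p_{i^\ast} \in N_i$. By definition of $p_{i^\ast}$ and general position, $d(p_{i^\ast}, p_j) < d(p_i, p_j)$, which is exactly the condition $p_j \in S_{i \to i^\ast}$. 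Hence $p_j$ is covered in $\mathcal{I}_i$, so $p_j \in U_i$ can only hold for $i = i^\ast$.

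Next, handle the $p_j \in Q$ case to make sure such points contribute nothing to the sum. For $p_j \in Q$ and any $p_i \in Q$ with $i \neq j$, the clique places $p_j \in N_i$ directly, and $p_j \in S_{i \to j}$ trivially because $d(p_j, p_j) = 0 < d(p_i, p_j)$. So $p_j$ is covered in $\mathcal{I}_i$. And $p_j$ does not appear as an element in its own instance $\mathcal{I}_j$ (whose element set is $P \setminus \{p_j\}$). Thus no point in $Q$ contributes to $\sum_{i \in Q} |U_i|$.

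Putting these together, each point $p_j \in P$ contributes to at most one set $U_i$ with $i \in Q$: points in $Q$ contribute to none, and each point $p_j \notin Q$ contributes only to $U_{i^\ast}$ where $p_{i^\ast}$ is its nearest neighbor in $Q$. Summing,
\[
\sum_{i \in Q} |U_i| \;\leq\; |\{p_j : p_j \notin Q\}| \;=\; n - |Q|,
\]
as claimed.

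I do not anticipate a real obstacle here; the only subtle point is the use of general position to upgrade the weak inequality $d(p_{i^\ast}, p_j) \leq d(p_i, p_j)$ (which follows from $p_{i^\ast}$ being a minimizer) to the strict inequality that navigability requires. Without general position, one would need a tie-breaking convention to ensure $p_{i^\ast}$ is uniquely defined and strictly closer, but the assumption stated in \Cref{sec:prelims} makes this automatic.
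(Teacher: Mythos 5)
Your proof is correct and follows essentially the same argument as the paper's: identify each $p_j \notin Q$'s nearest neighbor $p_{i^\ast}$ in $Q$ and observe that the clique edge to $p_{i^\ast}$ covers $p_j$ in every instance $\mathcal{I}_i$ with $p_i \in Q \setminus \{p_{i^\ast}\}$, leaving at most one uncovered instance per outside point. You are somewhat more explicit than the paper about the role of general position in making the minimizer unique and the inequality strict, and about why points of $Q$ itself contribute nothing to the sum, but these are refinements of the same argument rather than a different route.
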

\begin{proof}

Let $p_\ell = \argmin_{p_i \in K} d(p_i,p_k)$. For all $p_i \in K$, $i \neq \ell$ we have that $p_k$ is contained in $S_{i\rightarrow \ell}$, which we added when forming the clique. So $p_k$ can only be uncovered in instance $\mathcal{I}_\ell$. There are $n-|K|$ points not in the clique, hence at most $n-|K|$ uncovered elements across all instances.
\end{proof}
An immediate consequence of \Cref{lem:cliques} is that, if we partition our points into $n/\beta$ sets of size $\beta$, and add a clique for each of those sets, then we only add $O(n\beta)$ edges to the graph, but cover all but $(n-\beta)\cdot n/\beta <n^2/\beta$ elements across all $n$ set cover instances. This matches what was obtained above for random edge deletions. We will use a slightly finer-grained version of the statement in our proof of \Cref{thm:main_navigability} in \Cref{sec:main_algo}.

\begin{figure}[h]
\centering
\vspace{-.5em}
\includegraphics[width=.5\textwidth]{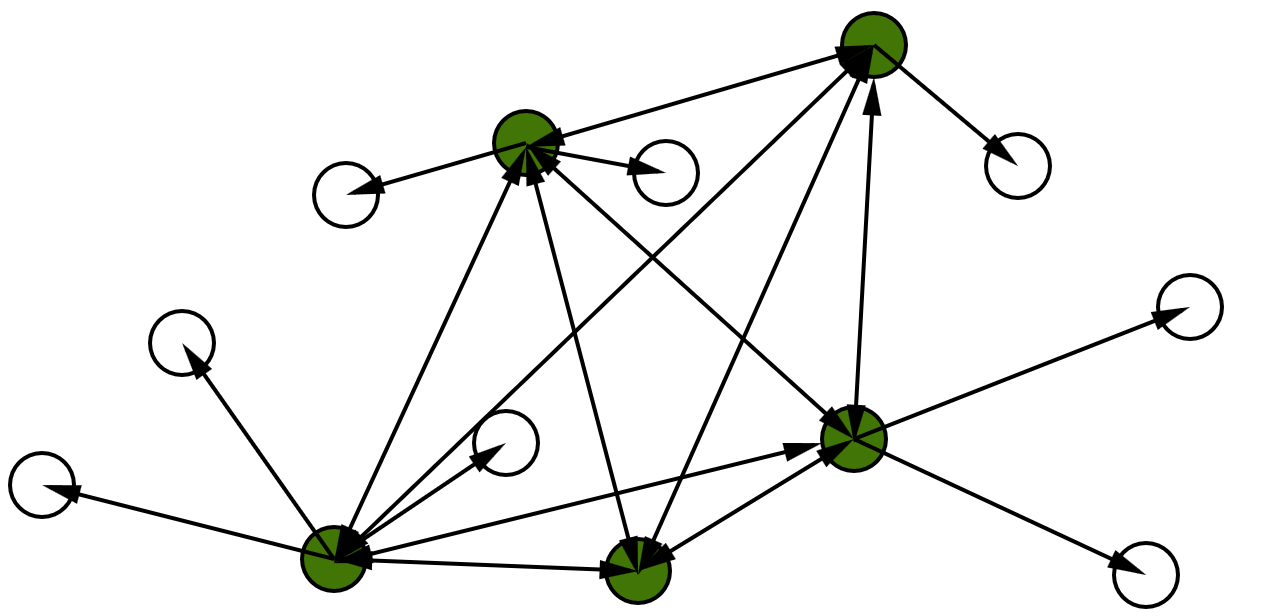}
\vspace{-.5em}
\caption{A partial illustration of the graph construction used to prove \Cref{thm:opt-upper-bound}. We begin by partitioning our points into $O(\sqrt{n})$ arbitrary groups of size $O(\sqrt{n})$. An example group, $K$, is illustrated in green. We connect all points in $K$ via a clique and for every point $p_k \notin K$, we draw an edge from $p_k$'s nearest neighbor in $K$ to $p_k$. By \Cref{lem:cliques}, this yields a navigable graph. The total number of edges is $O(n^{3/2})$.}
\label{fig:clique}
\vspace{-.75em}
\end{figure}

While a fairly simple observation, the argument above can actually be used to tighten the previous best upper bound of $2n^{3/2}\sqrt{\ln n}$ on the number of edges in the minimum navigable graph for an arbitrary distance function and point set (in general position) \cite{DiwanGouMusco:2024}. In particular, simply partition the points into $\lceil\sqrt{n}\rceil$ arbitrary groups of size $\leq \sqrt{n}$. For each group $K$, add a clique between the nodes in that group. For any $p_k \notin K$, add an edge from $p_\ell = \argmin_{p_i \in K} d(p_i,p_k)$. 
This yields a navigable graph, and we add at most $|K|(|K|-1) + n - |K| = |K|^2 + n - 2|K|$ edges per group $K$, for a total of $\left(2n - 2\sqrt{n}\right)\cdot \sqrt{n}$ edges when $n$ is a perfect square. It can be verified that essentially the same bound holds for non-square $n$, yielding:
\begin{theorem}\label{thm:opt-upper-bound}
    There is a deterministic algorithm that, given any point set $P$ of size $n$ and distance function $d$, constructs a navigable graph on $P$ with at most $2n^{3/2} - n$ edges in $O(n^2)$ time.
\end{theorem}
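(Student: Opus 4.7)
The plan is to instantiate the construction sketched immediately before the theorem statement. Partition $P$ into $k=\lceil\sqrt{n}\rceil$ arbitrary groups $Q_1,\ldots,Q_k$ of size at most $\lceil\sqrt{n}\rceil$ each (summing to $n$). For each $Q_\ell$, I would add to $G$ (i) a complete bidirectional clique on the vertices of $Q_\ell$, and (ii) for every $p_j\notin Q_\ell$, the single directed edge $(p^*_{\ell,j},p_j)$, where $p^*_{\ell,j}=\argmin_{p_i\in Q_\ell}d(p_i,p_j)$ is $p_j$'s nearest neighbor inside $Q_\ell$. These are the only edges of $G$.

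For navigability, I would invoke \Cref{lem:cliques} group by group and handle its single exceptional vertex using the edges from step (ii). Concretely, consider any ordered pair $(p_i,p_j)$ with $p_i\neq p_j$, and let $Q_\ell$ be the group containing $p_i$. If $p_j\in Q_\ell$, then the clique edge $(p_i,p_j)$ is in $G$ and covers $p_j$ in instance $\mathcal{I}_i$ since $d(p_j,p_j)=0<d(p_i,p_j)$. If $p_j\notin Q_\ell$, then \Cref{lem:cliques} guarantees that the clique alone already covers $p_j$ in $\mathcal{I}_i$ unless $p_i=p^*_{\ell,j}$; in that one remaining case, the edge $(p^*_{\ell,j},p_j)$ added in step (ii) directly covers $p_j$ via $S_{i\to j}$. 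Hence every set cover instance $\mathcal{I}_i$ is satisfied, so $G$ is navigable by the set cover formulation in \Cref{sec:nav_graph_set_cover}.

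The edge count and runtime are a straightforward accounting. Each group $Q_\ell$ contributes $|Q_\ell|(|Q_\ell|-1)$ clique edges plus at most $n-|Q_\ell|$ nearest-neighbor edges, for a total of $|Q_\ell|^2+n-2|Q_\ell|$. Summing over $\ell$ and using $\sum_\ell|Q_\ell|=n$ together with $|Q_\ell|\leq\lceil\sqrt{n}\rceil$, one gets $\sum_\ell|Q_\ell|^2\leq\lceil\sqrt{n}\rceil\cdot n$ and hence at most $\lceil\sqrt{n}\rceil\cdot n+kn-2n$ edges; choosing the partition so that each $|Q_\ell|$ is either $\lfloor\sqrt{n}\rfloor$ or $\lceil\sqrt{n}\rceil$ (rather than uniformly the ceiling) gives the stated bound $2n^{3/2}-n$. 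For runtime, all pairwise distances are computed in $O(n^2)$ time, and for each group $Q_\ell$ a single sweep identifies the nearest neighbor of every outsider in time $O(|Q_\ell|(n-|Q_\ell|))$, which sums to $O(n^2)$.

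The real work has already been done in \Cref{lem:cliques}; the only mild obstacle is the bookkeeping needed to pin down the constant $2n^{3/2}-n$ for every $n$. When $n$ is a perfect square the calculation gives $2n^{3/2}-2n$ (strictly better than claimed), and the ceiling slack in the non-perfect-square case is exactly what one needs to absorb into the $-n$ rather than $-2n$ slack, so the single stated bound covers all $n$.
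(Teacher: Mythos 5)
Your construction, navigability argument via \Cref{lem:cliques}, and edge count are exactly the paper's own proof of this theorem (the paper gives the identical partition-into-$\sqrt{n}$-cliques-plus-nearest-neighbor-edges argument just before the statement, computing $|Q|^2+n-2|Q|$ edges per group and likewise deferring the non-perfect-square bookkeeping). Your observation that the uniform bound $\sum_\ell|Q_\ell|^2\le\lceil\sqrt{n}\rceil\cdot n$ is too crude and that a balanced partition is needed to land on $2n^{3/2}-n$ is correct and, if anything, slightly more careful than the paper's "it can be verified" remark.
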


\newcommand{\NGCover}{\textsc{VoteCover}}

\section{A Nearly Quadratic Time Algorithm}
\label{sec:main_algo}
In this section, we prove \Cref{thm:main_navigability} by presenting an algorithm that returns an $O(\log n)$-approximation to the minimum navigable graph for any set of $n$ points and any distance function in $\tilde{O}(n^2)$ time.

\newcommand{\thres}{\ensuremath{\gamma}\xspace}

\subsection{Vote Cover Subroutine}
Our algorithm is based on efficiently simulating the standard greedy algorithm for each of the $n$ set cover instances that corresponds to the navigable graph problem (see \Cref{sec:prelims}). Our general simulation routine, which we call \NGCover{}, is detailed in \Cref{alg:ng_cover}. \NGCover{} takes as input a source point $p_i \in P$ and a set of points $U \subseteq P\setminus \{p_i\}$ to be covered. 
It returns a set of neighbors for $p_i$ that cover $U$. The reason we specify a set of points $U$ instead of asking to cover all of $P\setminus \{p_i\}$ is because \NGCover{} will ultimately be combined with ``edge preprocessing'' techniques (discussed in \Cref{sec:tech_overview}) that cover some points in advance to reduce runtime.
The algorithm also takes in a positive parameter $\thres$ that controls its success probability, as well as an optional size limit $\ell$, which specifies an early-termination condition; that is, we allow \NGCover{} to return \FAIL in lieu of a cover of size $> 2\ell$. Early termination will also be helpful when combining \NGCover{} with the preprocessing techniques. Our main result on the algorithm is as follows:


\newcommand{\VT}{15}
\newcommand{\CA}{30}
\newcommand{\HF}{4}
\newcommand{\SumCAVT}{\number\numexpr2*\VT+\CA\relax}
\ifnum\numexpr2*\VT\relax>\CA
    \newcommand{\MaxCAVT}{\number\numexpr2*\VT\relax}
\else
    \newcommand{\MaxCAVT}{\CA}
\fi
\newcommand{\VoteThres}{\thres \ln n}
\newcommand{\MaxLimit}{\MaxCAVT \cdot \OPT{}_i \ln n}
\newcommand{\CoverSize}{\SumCAVT \OPT{}_i \ln n}
\newcommand{\CoverCSize}{\CA \cdot \OPT{}_i \ln n}
\newcommand{\CoverVSize}{\number\numexpr2*\VT\relax\cdot \OPT{}_i \ln n}


\begin{theorem}
    \label{thm:ng_cover_analysis}
    Given a point set $P$ preprocessed as in Claim~\ref{clm:sublinear_access}, a source point $p_i \in P$, 
    a subset of points $U \subseteq P \setminus \{p_i\}$ to cover, a size limit $\ell$, and parameter $\thres$, \Cref{alg:ng_cover} returns either $\FAIL$ or a set 
    $Q \subseteq P$ with $|Q| \leq \ell$ such that, for any $p_k \in U$, there is some $p_j \in Q$ satisfying $p_k \in S_{i \to j}$. 
    The algorithm runs in $\tilde O((f + |U|) \cdot \min(\ell, \thres\OPT{}_i))$, where $f$ is the maximum number of sets $S_{i \to j}$ that any $p_k \in U$ is contained in and $\OPT_i$ is the optimal cover size for set cover instance $\mathcal{I}_i$, as defined in \Cref{sec:prelims}. Moreover, if $\ell \geq \min(n, (2\thres + 5)\OPT_i\ln n)$, then \Cref{alg:ng_cover} successfully returns a set with probability $\geq 1 - 1/n^{\thres/4 - 2}$.
\end{theorem}

\begin{algorithm}[htb]
    \caption{VoteCover($P, p_i, U \subseteq P, \ell$)}
    \label{alg:mng-cover}
    \begin{algorithmic}[1]
        \Require A dataset $P$ pre-processed as in \Cref{clm:sublinear_access}, a source point $p_i$, a set $U\subseteq P$ of uncovered elements in set cover instance $\mathcal{I}_i$, a size limit $\ell$, and a parameter $\thres > 0$ that controls success probability.
        \Ensure A set of at most $\ell$ neighbors for $p_i$ that cover $U$, or $\FAIL$.
        \algrule
        \State Initialize the cover $C \gets \varnothing$ and the voter pool $V \gets \varnothing$. 
        \State Initialize $votes \gets$ array of $n$ empty sets.
        \While{$|U| > 0$}
            \State Select uniformly random $p_k$ from $U$. Set $U \gets U \setminus \{p_k\}$, $V \gets V \cup \{p_k\}$ \label{alg:mng-cover:sample-voters-start}\Comment{Promote a random uncovered point $p_k$ to the voter pool.}
            \State \textbf{if} $|V| + |C| > \ell$ \textbf{then} \Return \textsc{Fail}\label{alg:mng-cover:sample-voters-end}\label{alg:mng-cover:v-limit-reached}
            \ForEach{$S_{i \to j}$ containing $p_k$}\label{alg:mng-cover:cast-votes-loop}
                \State $votes[j] \gets votes[j] \cup \{p_k\}$\label{alg:mng-cover:cast-votes}\Comment{$p_k$ votes for all $S_{i \to j}$ containing it.}
                \If{$|votes[j]| \geq \VoteThres$}\label{alg:mng-cover:enough-votes-cond}
                    \State $C \gets C \cup \{p_j\}$\label{alg:mng-cover:enough-votes-start}\Comment{Add $S_{i \to j}$ to the cover if it receives enough votes.}
                    \State \textbf{if} $|V| + |C| > \ell$ \textbf{then} \Return \textsc{Fail}\label{alg:mng-cover:c-limit-reached}
                    \ForEach{$p_v \in votes[j]$}\label{alg:mng-cover:enough-votes:remove-voters}\Comment{Remove votes of all voters for $S_{i \to j}$, which are now covered.}
                        \ForEach{$S_{i \to u}$ containing $p_v$}\label{alg:mng-cover:enough-votes:undo-votes}
                            \State $votes[u] \gets votes[u] \setminus \{p_v\}$\label{alg:mng-cover:undo-vote}
                        \EndFor
                        \State $V \gets V \setminus \{p_u\}$\label{alg:mng-cover:remove-from-v}
                    \EndFor\label{alg:mng-cover:enough-votes:remove-voters-end}
                    \ForEach{$p_v \in U$}\label{alg:mng-cover:enough-votes:remove-nominees-start}\Comment{Remove all points covered by $S_{i \to j}$ from $U$.}
                        \If{$p_v \in S_{i \to j}$}
                            \State $U \gets U \setminus \{p_v\}$\label{alg:mng-cover:remove-from-n}
                        \EndIf
                    \EndFor\label{alg:mng-cover:enough-votes:remove-nominees-end}
                    \State \textbf{break}\label{alg:mng-cover:enough-votes-end}
                \EndIf
            \EndFor
        \EndWhile
        \State \Return $C \cup V$\label{alg:mng-cover:return-cover}
    \end{algorithmic}
    \label{alg:ng_cover}
\end{algorithm}
The standard greedy set cover algorithm initializes an empty cover and sequentially adds the set that covers the most uncovered points. Since we do not have explicit access to sets in our setting and cannot efficiently compute how many elements they cover, \NGCover{} (\Cref{alg:mng-cover}) approximates this process by randomly sampling ``voters'' from all uncovered points, which ``vote'' for all sets $S_{i \to j}$ that cover them. 

In the pseudocode, we use $C$ to represent the cover, which formally contains out-neighbors, $p_j$, to add to $p_i$. If $p_j \in C$, then we have selected set $\mathcal{S}_{i\rightarrow j}$ for set cover instance $\mathcal{I}_i$. We use $V$ to denote the set of uncovered points that have been chosen as voters. 
To select a high-coverage set, we sample an uncovered point from the set $U$, move it to $V$, and have it vote once for every set which contains it. These sets can be efficiently obtained via $\SetOf$ queries. We record the individual votes in an array called $votes$, where $votes[j]$ is a set holding the ids of all voters currently voting for $S_{i \to j}$, i.e., $votes[j] = V \cap S_{i \to j}$.

When repeating this process many times, we expect the sets containing the most uncovered points to receive the most votes. The number of voters we need to sample in order for this to be a likely event depends on the amount of coverage these sets offer, so rather than choose a fixed number of samples, we instead sample points until some set reaches a threshold of $\thres\ln n$ votes. At this point, we are confident that it has near maximal coverage among all remaining sets, so we add the set to $C$. $\thres$ is a constant whose effect will become apparent in our analysis, but in brief, a larger choice of $\thres$ loosens our upper bound on the size of the returned set in exchange for increasing the probability that the set satisfies that bound. 

Rather than restart the voting from scratch each time we choose a set, which would be costly, we continue from the current state, keeping all votes except those cast by voters that are now covered. Since all newly covered voters are contained in $votes[j]$, they are easy to identify and remove. We also update our set of uncovered points, $U$, using $\MemberOf$ queries to identify and remove all points covered by the chosen set.

We then continue to add highly-voted sets to $C$ until $U$ is empty. At this point, there may still be active voters in $V$ whose sets never received enough votes to be chosen. We finish by explicitly covering each of these voters with the set corresponding to itself. We will prove that the number of such sets we need is small.

\subsubsection{Correctness and Approximation Factor}\label{sec:votecover-correctness}
We begin by analyzing the accuracy of \Cref{alg:ng_cover}. Specifically, we show that if $\ell$ is chosen sufficiently large, then the algorithm returns a cover of size $O(\OPT{}_i \log n)$ with high probability.

\begin{lemma}\label{thm:mng-cover:logn-approx}
\NGCover{}$(P, p_i, U, \ell,\thres)$ either returns $\FAIL$ or a cover of $U$ with size $\leq \ell$. Furthermore, if $\ell \geq \min(n, (2\thres + 5)\OPT_i\ln n)$, then it returns a cover with probability $\geq 1 - 1/n^{\thres/4 - 2}$
\end{lemma}

\noindent\textit{Proof outline.} We first show that, if the failure conditions on Lines~\ref{alg:mng-cover:v-limit-reached} and \ref{alg:mng-cover:c-limit-reached} are never reached, the algorithm returns a cover of $U$ with size $O(\OPT{}_i \log n)$, with high probability. To do so, we prove that $V$ always contains a uniform random sample of the remaining uncovered points. As discussed above, this allows us to argue that any $p_j$ satisfying the condition on Line~\ref{alg:mng-cover:enough-votes-start} covers nearly as many uncovered points as the best choice of set. Adapting the standard analysis of greedy set cover, we conclude that the algorithm terminates after adding at most $O(\OPT{}_i \log n)$ sets to $C$. In addition, we show that the size of $V$ never exceeds $O(\OPT{}_i \log n)$ with high probability, so $C \cup V$ must have size at most $O(\OPT{}_i \log n)$, proving the first claim of the theorem. \\
\\
\noindent For the formal proof, we begin with some basic definitions and facts.

\begin{definition}\label{def:mng-cover:uncovered-points}
    Given an initial set $U \subseteq P\setminus \{p_i\}$ of points to cover and a set $C \subseteq P\setminus \{p_i\}$ corresponding to a partial cover, let $U_{\overline{C}}$ denote $U \setminus \bigcup_{p_c \in C} S_{i \to c}$. I.e., $U_{\overline{C}}$ contains all points not covered by $C$.
\end{definition}

At the start of the algorithm, when $C = \varnothing$, $U_{\overline{C}} = U$. Each additional neighbor added to $C$ shrinks the size of $U_{\overline{C}}$. The task of \NGCover{} is to find a small $C$ such that $U_{\overline{C}} = \varnothing$.

\begin{claim}\label{clm:mng-cover:uncovered-points}
    \Cref{alg:mng-cover} maintains the invariant that $U$ and $V$ are disjoint sets such that $U \cup V = U_{\overline{C}}$.
\end{claim}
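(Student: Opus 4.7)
The plan is to prove the invariant by induction on the sequence of modifications that \Cref{alg:mng-cover} makes to $N$, $V$, and $C$. For the base case, at entry to the outer while loop we have $N = U$, $V = \varnothing$, $C = \varnothing$, so $N \cup V = U = U_{\overline{\varnothing}}$ and disjointness is immediate. For the inductive step, I would partition the updates inside the loop body into three kinds: (A) \emph{promoting a voter} on Lines~\ref{alg:mng-cover:move-from-n}--\ref{alg:mng-cover:move-to-v}, which moves a single point $p_k$ from $N$ to $V$; (B) \emph{adding a set to the cover} on Lines~\ref{alg:mng-cover:enough-votes-start}--\ref{alg:mng-cover:enough-votes:remove-voters-end}, which inserts $p_j$ into $C$ and deletes from $V$ every voter that voted for $S_{i \to j}$; and (C) the \emph{nominee cleanup} on Lines~\ref{alg:mng-cover:enough-votes:remove-nominees-start}--\ref{alg:mng-cover:enough-votes:remove-nominees-end}, which removes from $N$ every nominee covered by $S_{i \to j}$.

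Case (A) is immediate: $p_k$ was just drawn from $N$ on Line~\ref{alg:mng-cover:sample-voters-start}, so by the inductive hypothesis $p_k \notin V$, and moving it preserves disjointness while leaving $N \cup V$ and $C$ unchanged. Cases (B) and (C) should be treated together since they jointly record the addition of $p_j$ to $C$; after this addition, $U_{\overline{C}}$ shrinks by exactly $U_{\overline{C_{\text{old}}}} \cap S_{i \to j}$, so I need to verify that (B) and (C) jointly delete precisely $(N \cup V) \cap S_{i \to j}$ from $N \cup V$. Part (C) is transparent since it explicitly removes $N \cap S_{i \to j}$ from $N$. Part (B) uses $votes[j]$ as a proxy for $V \cap S_{i \to j}$, so the heart of the argument is the \emph{auxiliary} invariant $votes[j] = V \cap S_{i \to j}$ for every $j$, maintained throughout the run.

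I would establish this auxiliary invariant by a parallel induction: it holds vacuously at initialization; it is restored by Lines~\ref{alg:mng-cover:cast-votes-loop}--\ref{alg:mng-cover:cast-votes} each time a voter is added to $V$, since the inner \textbf{for each} enumerates exactly the sets containing the new voter; and it is restored by Lines~\ref{alg:mng-cover:enough-votes:undo-votes}--\ref{alg:mng-cover:undo-vote} each time a voter is removed from $V$, since the symmetric inner \textbf{for each} undoes exactly that voter's contributions across all sets it lies in. Given this, case (B) reduces to a one-line set computation, and disjointness cannot be broken by deletions. The only real obstacle is notational: the ``add to cover'' block interleaves updates to $votes$, $V$, $C$, and $N$, so the primary invariant is momentarily violated right after Line~\ref{alg:mng-cover:enough-votes-start} and is only restored once the voter-removal and nominee-removal loops finish. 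I would therefore formally state the invariant as holding at the top of every outer while-loop iteration and at every return point, which is all that the subsequent claims in \Cref{sec:main_algo} actually use.
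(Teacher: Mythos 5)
Your proof is correct and takes essentially the same route as the paper's: induction over the algorithm's modifications to $N$, $V$, and $C$, with the key auxiliary fact $votes[j] = V \cap S_{i \to j}$ used to show that the voter-removal and nominee-removal loops jointly delete exactly $(N \cup V) \cap S_{i \to j}$, matching the shrinkage of $U_{\overline{C}}$. Your handling of the auxiliary invariant (tracking both vote casting and vote undoing, and noting the invariant holds at loop boundaries rather than mid-block) is slightly more explicit than the paper's, but the argument is the same.
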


\begin{proof}
    This is clearly true at the start of the algorithm when $V = \varnothing$, $C = \varnothing$, and thereby $U_{\overline{C}} = U$. As the algorithm executes, elements are never added to $U$ or $V$ other than during a movement from $U$ to $V$ on Line~\ref{alg:mng-cover:sample-voters-start}, which has no effect $U \cup V$. Finally, points are only every removed from $U$ or $V$ when the set corresponding to some out-neighbor, $p_j$, is added to $C$. In this case, we can see that exactly those points covered by $p_j$ are removed from $U$ (Lines~\ref{alg:mng-cover:enough-votes:remove-nominees-start}-\ref{alg:mng-cover:enough-votes:remove-nominees-end}) and $V$ (Lines \ref{alg:mng-cover:enough-votes:remove-voters}-\ref{alg:mng-cover:remove-from-v}). Thus the invariant is maintained.    
\end{proof}

\begin{lemma}\label{lem:mng-cover:correctness}
    If Algorithm~\ref{alg:mng-cover} does not return \FAIL, then it returns a cover of $U$.
\end{lemma}

\begin{proof}
    If Algorithm~\ref{alg:mng-cover} is never interrupted by either failure condition on Lines~\ref{alg:mng-cover:v-limit-reached} or \ref{alg:mng-cover:c-limit-reached}, it can only return on Line~\ref{alg:mng-cover:return-cover}. This line is reached under the condition that $|U| = 0$. Since each iteration of the main loop removes at least one element from $U$, and elements are never added to $U$, we will eventually terminate. Morever, it is clear that $C \cup V$ is a valid cover of $U$ when the algorithm terminates. In particular, by \Cref{clm:mng-cover:uncovered-points}, if $|U| = 0$, then $U_{\overline{C}} = V$. I.e., the only points not covered by $C$ are those in $V$. Since $d(p_k, p_k) = 0 < d(p_i, p_k)$ for any $p_k \in V$, adding these points to the cover ensures they are also covered. 
\end{proof}

\begin{claim}\label{clm:uniform-random-sample}
    \Cref{alg:mng-cover} maintains the invariant that $V$ is a uniform sample (without replacement) of $U_{\overline{C}}$.
\end{claim}

\begin{proof}
    This is trivially true at the start of the algorithm when $V = \varnothing$. During the algorithm, the only additions to $V$ are randomly chosen points from $U$, which by \Cref{clm:mng-cover:uncovered-points}, equals $U_{\overline{C}} \setminus V$. Thus, these random additions preserve $V$ being a random sample of $U_{\overline{C}}$. On the other hand, points are only removed from $V$ when some $p_j$ is added to $C$. The points removed are exactly those in $S_{i \to j}$. If $V$ is uniform sample of $U_{\overline{C}}$, then $V \setminus S_{i \to j}$ is a uniform random sample of $U_{\overline{C}} \setminus S_{i \to j} = U_{\overline{C \cup \{p_j\}}}$, so the invariant continues to hold.
\end{proof}

\begin{claim}\label{obs:exists-opt}
    For any $C \subseteq P$, there is always some $S_{i \to j}$ that contains at least a $\frac{1}{\OPT{}_i}$ fraction of $U_{\overline{C}}$.
\end{claim}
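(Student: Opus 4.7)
The plan is to use the standard pigeonhole argument from the classical greedy set cover analysis. The key observation is that an optimal cover for the full instance $\mathcal{I}_i$ must also cover any subset, in particular $U_{\overline{C}} \subseteq P\setminus\{p_i\}$, and then averaging gives us a set of proportionally large intersection with $U_{\overline{C}}$.

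Concretely, I would first invoke the definition of $\OPT_i$: there exists a collection $\mathcal{O} = \{S_{i \to j_1}, \ldots, S_{i \to j_{\OPT_i}}\}$ of exactly $\OPT_i$ sets whose union equals $P \setminus \{p_i\}$. Since $U \subseteq P \setminus \{p_i\}$ by hypothesis, and $U_{\overline{C}} \subseteq U$ by \Cref{def:mng-cover:uncovered-points}, we have $U_{\overline{C}} \subseteq P \setminus \{p_i\} = \bigcup_{t=1}^{\OPT_i} S_{i \to j_t}$. Consequently
\[
|U_{\overline{C}}| = \Bigl| U_{\overline{C}} \cap \bigcup_{t=1}^{\OPT_i} S_{i \to j_t} \Bigr| \leq \sum_{t=1}^{\OPT_i} |U_{\overline{C}} \cap S_{i \to j_t}|.
\]

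By the pigeonhole principle applied to this sum of $\OPT_i$ nonnegative terms, at least one index $t^\ast$ satisfies $|U_{\overline{C}} \cap S_{i \to j_{t^\ast}}| \geq |U_{\overline{C}}|/\OPT_i$. Taking $j = j_{t^\ast}$ yields the desired set. I do not anticipate a real obstacle: the entire argument is just the observation that any feasible cover of a superset also covers the subset, combined with averaging. The only subtlety worth flagging is confirming that $U \subseteq P \setminus \{p_i\}$ (which is part of the standing assumption on the input to \textsc{NGCover}) so that the optimum for instance $\mathcal{I}_i$ is actually a valid covering family for $U_{\overline{C}}$; this is immediate from the setup in \Cref{thm:ng_cover_analysis}.
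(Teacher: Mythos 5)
Your proof is correct and follows the same argument as the paper: take an optimal cover of $P \setminus \{p_i\}$, observe it also covers $U_{\overline{C}}$, and apply averaging over its $\OPT_i$ sets to find one with intersection at least $|U_{\overline{C}}|/\OPT_i$. The subtlety you flag about $U \subseteq P\setminus\{p_i\}$ is handled identically in the paper's setup.
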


\begin{proof}
    Let $C_{\OPT{}}$ be an optimal cover for $P \setminus \{p_i\}$. We have that $U_{\overline{C}} \subseteq \bigcup_{p_j \in C_{\OPT{}}} (S_{j \to c} \cap U_{\overline{C}})$, which implies that $\sum_{p_j \in C_{OPT}}|S_{i \to j} \cap U_{\overline{C}}| \geq |U_{\overline{C}}|$. Some $p_j \in C_{OPT}$ must therefore satisfy $|S_{j \to c} \cap U_{\overline{C}}| \geq \frac{1}{\OPT{}_i}|U_{\overline{C}}|$.
\end{proof}

\begin{lemma}\label{lem:mng-cover:v-size}
    With probability $\geq 1 - 1/n^{\thres / 4 - 1}$, $|V|$ never exceeds $2 \thres \OPT_i \ln n$.
\end{lemma}

\begin{proof}
    By \Cref{obs:exists-opt}, there is always some set $S^* = S_{i \to j^*}$ containing at least a $\frac{1}{\OPT{}_i}$ fraction of $U_{\overline{C}}$, for any $C \subseteq P$. Thus, any randomly-selected point from $U_{\overline{C}}$ has at least a $\frac{1}{\OPT{}_i}$ probability of being contained in $S^*$. 
    If our random sample $V \subseteq U_{\overline{C}}$ contains $2 \thres \OPT_i \ln n$ points, then by a standard Chernoff bound (e.g., \cite{Wajc:2017}), $S^*$ will contain at least $\thres \ln n$ (the vote threshold from \Cref{alg:mng-cover:enough-votes-cond}) points from $V$ with probability at least:
    \[ 
        \Pr[|S^* \cap V| \geq \thres\ln n] = 1 - \Pr[|S^* \cap V| < (1/2)(2\thres\ln n)] \geq 1 - e^{-(1/2)^2\thres\ln n} \geq 1 - 1/n^{\thres / 4}
    \] 
    That is, with high probability, $S^*$ reaches the vote threshold by the time there are $2 \thres \OPT_i \ln n$ active voters. Subsequently, its voters are removed from $V$. Taking a union bound over at most $n$ states of $V$, we see that with probability $\geq 1 - 1 / n^{\thres/4 - 1}$, $V$ contains at most $2 \thres \OPT_i \ln n$ points at any one time.
\end{proof}

\begin{claim}\label{clm:mng-cover:high-coverage}
    With probability $\geq 1 - 1/n^{\thres/4 - 2}$, every set added to $C$ contains at least a $\frac{1}{5\OPT_i}$ fraction of $U_{\overline{C}}$.
\end{claim}

\begin{proof}
    When the $|V| \leq 2\thres\OPT_i\ln n$, the probability that a set $S_{small} = S_{i \to j_{small}}$ containing less than a $\frac{1}{5\OPT_i}$ fraction of $U_{\overline{C}}$ receives more than $\thres \ln n$ votes can be bounded, again by a Chernoff bound, by
    \begin{align*}
        \Pr[|S_{small} \cap V| \geq \thres \ln n] & 
        \leq e^{-(3/2)^2(4/35)\thres\ln n} < 1/n^{\thres/4}.
    \end{align*}
    Taking a union bound over at most $n - 1$ sets and across at most $n$ states of $V$ yields that all sets containing less than a $\frac{1}{5\OPT_i}$ fraction of $U_{\overline{C}}$ simultaneously fall short of the required $\thres \ln n$ votes with probability at least $1 - (n-1)/n^{\thres/4 - 1} = 1 - 1/n^{\thres/4-2} + 1/n^{\thres/4-1}$. Further union bounding with the $\geq 1 - 1/n^{\thres/4-1}$ probability that the number of active voters never exceeds $2 \thres \OPT_i \ln n$, as given by \Cref{lem:mng-cover:v-size}, we have that with probability $\geq 1 - 1/n^{\thres/4 - 2}$, all sets added to $C$ contain at least a $\frac{1}{5\OPT_i}$ fraction of $U_{\overline{C}}$.
\end{proof}

\begin{lemma}\label{lem:mng-cover:c-size}
    With probability $\geq 1 - 1/n^{\thres/4 - 2}$, $|C|$ never exceeds $5\OPT_i \ln n$.
\end{lemma}

\begin{proof}
    Let $C_0 = \varnothing$ and let $C_i$ denote the state of $C$ after adding the $i^\text{th}$ set. By \Cref{clm:mng-cover:high-coverage}, with probability $\geq 1 - 1/n^{\thres/4 - 2}$, $|U_{\overline{C_{k+1}}}| \leq (1 - \frac{1}{5 \OPT_i})|U_{\overline{C_k}}|$ for all $k$. Under this condition, if we let $t = 5 \OPT_i \ln n$, then
    \begin{align*}
        |U_{\overline{C_t}}| & \leq |U_{\overline{C_0}}|\left(1 - \frac{1}{5\OPT_i}\right)^t
     \leq (n-1) \left(1 - \frac{1}{5\OPT_i}\right)^{5\OPT_i \ln n}
        \leq (n-1)e^{-\ln n} < 1.
    \end{align*}
    So, the algorithm terminates having added at most $5\OPT_i \ln n$ sets to $C$.
\end{proof}


\noindent We are now ready to prove our main result on approximation:
\begin{proof}[Proof of \Cref{thm:mng-cover:logn-approx}]
    By \Cref{lem:mng-cover:correctness}, \NGCover{} (\Cref{alg:ng_cover}) always returns either a cover of $U$ or \FAIL.
    By \Cref{lem:mng-cover:v-size,lem:mng-cover:c-size}, $|V|$ remains $\leq 2\thres\OPT_i\ln n$ and $|C|$ remains $\leq 5\OPT_i\ln n$ with probability $\geq 1 - 1/n^{\thres/4 - 2}$. Therefore, if $\ell \geq \min(n,(2\thres + 5)\OPT_i\ln n)$, with probability $\geq 1 - 1/n^{\thres/4 - 2}$, neither failure condition on Lines~\ref{alg:mng-cover:v-limit-reached} and \ref{alg:mng-cover:c-limit-reached} is ever satisfied. Thus, \NGCover{} returns a cover of $U$, and the returned cover, $C \cup V$, has size $\leq \min(n,|C| + |V|) \leq \min(n, (2\thres + 5)\OPT_i\ln n) \leq \ell$. 
\end{proof}

\subsubsection{Running Time}\label{sec:votecover-runtime}
With \Cref{thm:mng-cover:logn-approx} in place, we next analyze the running time of \Cref{alg:ng_cover}.

\begin{lemma}\label{thm:mng-cover:n-opt-time}
\NGCover{}$(P, p_i, U, \ell,\gamma)$ runs in $\tilde O((f + |U|) \min(\ell, \thres\OPT_i))$ time with probability $\geq 1 - 1/n^{\thres / 4 - 2}$, where $f$ is the maximum number of sets $S_{i \to j}$ that any $p_k \in U$ is contained in.
\end{lemma}

\begin{proof}
Any work outside of the main while-loop trivially takes $O(n)$ time. Lines~\ref{alg:mng-cover:sample-voters-start}-\ref{alg:mng-cover:sample-voters-end} also take $O(n)$ time total, because each iteration removes one element from $U$, which is initialized with size at most $|U| \leq n$ elements. So, we focus on analyzing Lines~\ref{alg:mng-cover:cast-votes-loop}-\ref{alg:mng-cover:enough-votes-end}. We divide the cost into four parts.
\smallskip

\noindent\textit{Cost of casting votes, specifically by voters that are removed from $V$ (Line~\ref{alg:mng-cover:cast-votes}).} 
By Line~\ref{alg:mng-cover:enough-votes-cond}, the number of voters removed from $V$ for a given $p_j$ is exactly $\thres\ln n$. By \Cref{lem:mng-cover:c-size}, the number of sets added to $C$ is at most $5\OPT{}_i \ln n$ with probability $\geq 1 - 1/n^{\thres / 4 - 2}$ if $\ell$ is sufficiently large, or $\ell$ otherwise. Thus, the total number of voters removed from $V$ is at most $O(\min(\ell, \thres\OPT{}_i \log n) \log n)$. Each such voter casts at most $f$ votes, which requires $f$ $\SetOf$ queries, each running in $O(1)$ time. So the total cost of recording votes by voters that are at some point removed from $V$ is $O(f \cdot \min(\ell, \thres\OPT{}_i \log n) \log n)$.\\
\\
\noindent\textit{Cost of casting votes by voters that remain in $V$ (Line~\ref{alg:mng-cover:cast-votes}).} By \Cref{lem:mng-cover:v-size}, the number of voters in $V$ at any time is at most $O(\thres\OPT{}_i \log n)$ if $\ell$ is sufficiently large, or $\ell$ otherwise. Each such voter cast at most $f$ votes, so the total number of votes cast by voters that are never removed from $V$ is $O(f\cdot \min(\ell, \thres\OPT{}_i \log n))$.\\
\\
\noindent\textit{Cost of undoing votes (Lines~\ref{alg:mng-cover:enough-votes-start}-\ref{alg:mng-cover:enough-votes:remove-voters-end}).} When a voter is removed from $V$, the cost of removing its votes is identical to the initial cost of adding them in the first place, which is bounded by  $O(f \cdot \min(\ell, \thres\OPT{}_i \log n) \log n)$.\\
\\
\noindent\textit{Cost of removing covered points from $U$ (Lines~\ref{alg:mng-cover:enough-votes:remove-nominees-start}-\ref{alg:mng-cover:enough-votes:remove-nominees-end}).}  Let $m$ be the initial size of the input $U$.
Each time some $p_j$ is added to $C$, we do a full scan over $U$ and use membership queries to remove any points covered by $S_{i \to j}$. If the algorithm hits the limit $\ell$ of sets added, it is easy to see that this takes at most $O(m\ell)$ time. If not, recall from \Cref{clm:mng-cover:high-coverage} that each set added to $C$ covers at least a $\frac{1}{5\OPT{}_i}$ fraction of $U_{\overline{C}} = U \cup V$ with high probability. It follows that after adding the $t^\text{th}$ set, the cost of the corresponding scan over $U$ is $O(m \cdot (1 - \frac{1}{5\OPT{}_i})^t)$. Thus, the cost of removing covered points from $U$ is bounded by 
\begin{align*}
    O\left(\sum_{t = 0}^{5\OPT{}_i \ln n}m\left(1 - \frac{1}{5\OPT{}_i}\right)^t\right) = O\left(\sum_{t = 0}^\infty m\left(1 - \frac{1}{5\OPT{}_i}\right)^t\right) = O\left(m\OPT{}_i\right).
\end{align*}
So, the total cost of removing covered points is at most $O(m\cdot \min(\ell, \OPT{}_i))$.\\
\\
Combining all costs, we conclude that Algorithm~\ref{alg:mng-cover} runs in time $\tilde O((f + |U|) \min(\ell, \thres\OPT_i))$.
\end{proof}
\noindent \Cref{thm:ng_cover_analysis} follows as an immediate consequence of \Cref{thm:mng-cover:n-opt-time} and \Cref{thm:mng-cover:logn-approx}.

\subsubsection{Graph Construction in $\tilde O(n^{2.5})$}

On its own, \NGCover{} already implies a simple $\tilde O(n^{2.5})$ time algorithm for constructing a near optimally sparse navigable graph. This initial result improves on a naive $O(n^3)$ time implementation of greedy set cover and motivates our stronger $\tilde{O}(n^2)$ time result from \Cref{thm:main_navigability}.

\begin{theorem}\label{thm:mng-simple}
    There is an algorithm that, given a set of points $P$ and a distance function $d$, runs in $\tilde O(n \cdot \OPT{}^*) \leq \tilde O(n^{2.5})$ time and returns a navigable graph $G$ on $P$ with probability $\geq 1 - \frac{1}{n^c}$ for any constant $c \geq 1$, in which each node $p_i \in P$ has $O(\OPT{}_i \log n)$ out-neighbors.
\end{theorem}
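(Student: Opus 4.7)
The plan is to build $G$ vertex by vertex, using \textsc{NGCover} (\Cref{alg:ng_cover}) to choose the out-neighborhood of each $p_i$ independently. After the $\tilde{O}(n^2)$ preprocessing of \Cref{clm:sublinear_access}, for each $p_i \in P$ I will call \textsc{NGCover}$(P, p_i, P \setminus \{p_i\}, \ell)$; by \Cref{thm:ng_cover_analysis}, whenever such a call returns a set $Q \neq \FAIL$, every $p_k \in P \setminus \{p_i\}$ lies in $S_{i \to j}$ for some $p_j \in Q$, which is precisely the local navigability condition of \Cref{def:navigability}. Setting $N_G(p_i) = Q$ for every $i$ therefore yields a navigable graph $G$.

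Because $\OPT_i$ is unknown, the budget $\ell$ will be chosen by exponential doubling: try $\ell = 1, 2, 4, \ldots$, stopping at the first $\ell$ for which \textsc{NGCover} returns a non-$\FAIL$ result. By \Cref{thm:ng_cover_analysis}, success is guaranteed with high probability once $\ell \geq \min(30\,\OPT_i \ln n,\, n)$, so the terminal $\ell$ we try is $O(\OPT_i \log n)$ and the returned cover satisfies $|N_G(p_i)| \leq \ell = O(\OPT_i \log n)$. Summing over $i$ then gives $|E(G)| = O(\OPT^* \log n)$ and also matches the per-vertex out-degree bound claimed in the theorem.

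For the runtime, \Cref{thm:ng_cover_analysis} says a single call with budget $\ell$ costs $\tilde{O}((f + |U|) \min(\ell, \OPT_i)) = \tilde{O}(n \min(\ell, \OPT_i))$, since $|U| = n-1$ and the frequency $f$ is at most $n-1$. The doubled budgets form a geometric sequence capped at $O(\OPT_i \log n)$, so their contributions telescope and the total work for vertex $p_i$ is $\tilde{O}(n \cdot \OPT_i)$. Summing over $i$ yields $\tilde{O}(n \OPT^*)$ on top of the $\tilde{O}(n^2)$ preprocessing; combining with \Cref{thm:opt-upper-bound}'s bound $\OPT^* \leq 2n^{3/2} - n$ gives the claimed $\tilde{O}(n \OPT^*) \leq \tilde{O}(n^{2.5})$ runtime.

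The main technical point is bookkeeping the failure probability rather than any new algorithmic idea. Across all $n$ vertices and all $O(\log n)$ doubling attempts per vertex, we issue at most $O(n \log n)$ \textsc{NGCover} calls, each of which respects the success and runtime guarantees of \Cref{thm:ng_cover_analysis} with probability $\geq 1 - 1/n^c$ for an arbitrarily large constant $c$ hidden in that theorem's analysis. Choosing $c$ large enough (say $c = 4$) and union-bounding over all the calls yields the stated $\geq 1 - 1/n^2$ overall guarantee on navigability, edge count, and runtime simultaneously. Aside from this union-bound accounting, the proof is a direct composition of the subroutine guarantees already established in \Cref{sec:main_algo}.
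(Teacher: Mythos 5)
Your proposal is correct and follows essentially the same route as the paper: preprocess once, call \textsc{NGCover} on each vertex with $U = P\setminus\{p_i\}$, invoke the subroutine's approximation and runtime guarantees per vertex, sum over $i$, and cap via $\OPT^* = O(n^{3/2})$ from \Cref{thm:opt-upper-bound}. The only difference is that you exponentially double the budget $\ell$ per vertex, whereas the paper simply sets $\ell = n$ for every call (which already guarantees success with high probability and a returned cover of size $O(\OPT_i\log n)$ by \Cref{thm:mng-cover:logn-approx}); your doubling is harmless — the geometric sum of call costs still gives $\tilde O(n\,\OPT_i)$ per vertex — but it is an unnecessary complication reserved in the paper for the full $\tilde O(n^2)$ algorithm.
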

Recall that $\OPT{}^* = \sum_{i=1}^n \OPT{}_i$ and we always have $\OPT{}^* \leq 2n^{1.5}$ by \Cref{thm:opt-upper-bound}. So, while we obtain a worst case bound of $\tilde{O}(n^{2.5})$, the result is output sensitive, improving if $G$ has a sparser navigable graph.
\Cref{thm:mng-simple} is obtained by simply running \NGCover{} independently for each $p_i$, as show in Algorithm~\ref{alg:mng-simple}.

\begin{algorithm}[htb]
    \caption{SimpleSparsestNavigableGraph($P, d$)}
    \label{alg:mng-simple}
    \begin{algorithmic}[1]
        \Require A point set $P$ of size $n$ and a distance function $d$. A constant $c \geq 1$ that controls failure probability.
        \Ensure A navigable graph on $P$.
        \algrule
        \State $G \gets (P, \varnothing)$
        \State Preprocess $P$ as in Claim~\ref{clm:sublinear_access}
        \ForEach{$p_i \in P$}
            \State $N_i \gets \NGCover{}\left(P, p_i, P \setminus \{p_i\}, n, 4c + 12\right)$
        \EndFor
        \State \Return $G$
    \end{algorithmic}
\end{algorithm}

\begin{proof}[Proof of \Cref{thm:mng-simple}]
     By \Cref{thm:mng-cover:logn-approx}, we can invoke \NGCover{} on each $p_i$ to successfully cover $P \setminus \{p_i\}$ with at most $O(\OPT{}_i \cdot \log n)$ neighbors simply by setting $\ell$ to $n$. All calls to \NGCover{} simultaneously succeed with probability at least $1 - 1/n^{\gamma/4 - 3}$ where $\gamma = 4c + 12$, which is at least $1 - 1/n^c$ for $n \geq 2$.

    Because there are $n$ sets $S_{i \to j}$ for each $i$, the frequency of any point in any set cover instance is trivially at most $n$. Moreover, the size of $|U| = P \setminus \{p_i\}$ is $n - 1$ for any $i$. Thus, by \Cref{thm:mng-cover:n-opt-time}, the call to \NGCover{} on point $p_i$ runs in $\tilde O((n + n) \min(\ell, c\OPT{}_i)) = \tilde O(n \OPT{}_i)$ time. Therefore, Algorithm~\ref{alg:mng-simple} runs in $\tilde O(\sum_{p_i \in P} n \OPT{}_i) = \tilde O(n \OPT{}^*)$ time. 
\end{proof}

As explained in Section~\ref{sec:generic_applications}, a nice property of Algorithm~\ref{alg:mng-simple} is that it readily applies to related graph construction problems, such as constructing near optimally sparse $\alpha$-shortcut reachability and $\tau$-monotonic graphs. We only require  minor modifications to how we preprocess $P$ to support set access queries.

\subsection{Sparse Navigable Graph Construction in $\tilde O(n^2)$}\label{sec:quadratic-mng}
To obtain the improved $\tilde O(n^2)$ running time of our main result, \Cref{thm:main_navigability}, we make two major additions to the algorithm discussed above. These additions are summarized in \Cref{subsec:preprocess_overview}. For each set cover instance $\mathcal{I}_i$, we choose a value $\beta_i$ and prepend the call of \NGCover{} on $\mathcal{I}_i$ with the following two procedures:
\begin{itemize}
    \item We add $\beta_i$ random edges from $p_i$ to other points in $P$. This covers any points in instance $\mathcal{I}_i$ that are contained in many sets, which would otherwise cast many votes in \NGCover{}.
    \item We assign $p_i$ to an arbitrary group $K \subseteq P$ with size $\beta_i$ and add a bi-directional clique on $K$. Doing so bounds the number of uncovered points across all set cover instances corresponding to nodes in $K$.
\end{itemize}
After these preprocessing steps, we remove all points covered by the edges added, and then run \NGCover{} on the remaining points. We discuss how preprocessing impacts the running time of \NGCover{} in Section~\ref{subsec:mng-full:prelims}. Then, in Section~\ref{subsec:mng-full:algorithm}, we explain how to select $\beta_i$ for each $\mathcal{I}_i$ in order to obtain an $O(\log n)$-approximate minimum navigable graph in $\tilde O(n^2)$ time.

\subsubsection{Random Edges and Cliques}\label{subsec:mng-full:prelims}
In \Cref{clm:random_edges} we showed that, if we add $\beta$ random out-neighbors to $p_i$, then any point contained in $\geq cn\ln(n)/\beta$ sets in set cover instances $\mathcal{I}_i$ is covered with probability $> 1 - 1/n^c$. We leverage this fact to allow \NGCover{} to run faster.

\begin{claim}\label{clm:mng-full:random-edges-cover}
    If $R \subseteq P$ is a uniform random sample of size $\beta$, then running \NGCover{} on point $p_i$ with size limit $\ell$, sufficiently large constant $\thres$, and input set $U_i = P \setminus \{p_i\} \setminus \bigcup_{p_r \in R}S_{i \to r}$, takes $\tilde O((cn/\beta + |U_i|) \ell)$ time with probability $\geq 1 - 1 / n^c$ for any constant $c \geq 1$.
\end{claim}

\begin{proof}
    By \Cref{clm:random_edges} and a union bound, the maximum frequency, $f$, of any point in $U_i$ is $(c+1)n\log n/\beta$ with probability $\geq 1 - 1 / n^c$ for any $c \geq 1$.
    We then simply invoke \Cref{thm:mng-cover:n-opt-time}, which states that \NGCover{} will take time $O(f\cdot \min(\ell, \thres\OPT_i\log n)\log n + |U_i|\min(\ell, \thres\OPT_i)) \leq O((f\log n + |U_i|)\ell) \leq O((cn\log^2(n)/\beta + |U_i|)\ell)$.
\end{proof}

 We first note that this bound already suffices to show that adding roughly $(\OPT{}^*/n)\log n$ random neighbors to each point is enough to construct a graph with $O(\OPT{}^* \log n)$ edges in $\tilde O(n^{2.25})$ time. Obtaining an $\tilde O(n^2)$ bound, however, requires more work. In particular, 
the running time bound in \Cref{clm:mng-full:random-edges-cover} contains two terms. The first term, involving $(n/\beta)\min(\ell, \OPT{}_i)$ can be bounded by $n$ for each $i$ if we add $\beta \approx \OPT{}_i$ random edges to each point, which would still lead to a near optimally sparse graph.  While $\OPT{}_i$ is not known in advance, it can be identified easily via an exponential search. 

The second term involving $|U_i|\min(\ell, \OPT{}_i)$ is more difficult.
In particular, suppose
that $\ell$ is set high enough that \NGCover{} returns a cover of $U_i$ (see \Cref{thm:ng_cover_analysis}). Then this term becomes $\tilde{O}(|U_i|\OPT{}_i)$. Suppose we add $\beta = \OPT^*/n$ edges to each node, which is small enough that we can still construct a graph with ${O}(\OPT^*)$ edges. As shown in \Cref{cor:low_total_points}, after doing so, we will have $\sum_{i=1}^n |U_i| = \tilde{O}(n^2/\beta)$. Suppose each $|U_i|$ has roughly the same size of $\tilde{O}(n/\beta)$. Then we would obtain a total running time of roughly:
\begin{align*}
    \tilde{O}\left(\sum_{i=1}^n |U_i|\OPT{}_i\right) = \tilde{O}\left(\sum_{i=1}^n \frac{n}{\beta} \OPT{}_i\right) = \frac{n^2}{\OPT^*}\tilde{O}\left(\sum_{i=1}^n \OPT{}_i\right) = \tilde{O}(n^2).
\end{align*}
Unfortunately, however, we do not expect $U_1, \ldots, U_n$ to be uniformly sized in general after random edges are added. Moreover, while a similar argument could be used to obtain $\tilde{O}(n^2)$ runtime if $\OPT_i \approx \OPT^*/n$ for all $i$, again this will not hold in general. Different nodes could have widely varying optimal degrees.


Our second preprocessing technique allows us to circumvent these issues by adding edges in a dependent way (via cliques), which allow us to reduce the number of uncovered points in a more targeted way across different set cover instances. Before getting into details, we first consider how the runtime of \NGCover{} is impacted by the addition of a \emph{single} clique on a subset of nodes $K$ (in addition to the random edges)

\begin{claim}\label{clm:mng-full:random-edges-and-cliques}
    If $R_1, \dots, R_n \subset P$ are uniform random samples of size $\beta$ and $K \subseteq P$ has size $\leq \beta$, then running \NGCover{} on all $p_i \in K$ with size limit $\ell$, sufficiently large constant $\gamma$, and input sets $U_i = P \setminus \{p_i\} \setminus \bigcup_{p_r \in R_i} S_{i \to r} \setminus \bigcup_{p_k \in K} S_{i \to k}$ takes $\tilde O(cn \ell)$ total time with prob. $\geq 1 - 1/n^{c+1}$ for any constant $c \geq 1$.
\end{claim}

\begin{proof}
    By \Cref{clm:mng-full:random-edges-cover}, running \NGCover{} on a single instance $p_i$ to cover $P \setminus \{p_i\} \setminus \bigcup_{p_r \in R_i} S_{i \to r}$ (and by extension any subset $U_i$ thereof) takes $O((cn\log^2(n) / \beta + |U_i|)\ell)$ time with probability $\geq 1 - 1/n^{c+2}$ for any $c \geq 1$. Taking a union bound over all instances in $K$, this holds simultaneously across all $\leq \beta$ of those instances with probability $\geq 1 - \beta/n^{c+2} \geq 1 - 1/n^{c+1}$. Recall from \Cref{lem:cliques} that $\sum_{p_i \in K}|U_i| \leq n - \beta \leq n$. Therefore, the total time for running \NGCover{} on all instances in $K$ is:
    \[
        O\left(\sum_{p_i \in K} \left(\frac{cn\log^2 n}{\beta} + |U_i|\right)\ell\right) = O\left(\ell\sum_{p_i \in K}\frac{cn\log^2 n}{\beta} + \ell \sum_{p_i \in K} |U_i|\right) = O\left(cn\ell\log^2 n + n\ell\right) = \tilde O(cn\ell).\qedhere
    \]
\end{proof}

\subsubsection{Full Algorithm}\label{subsec:mng-full:algorithm}
With the above claims in place, we are equipped to augment Algorithm~\ref{alg:mng-simple} with both random edges and arbitrary clique additions. The pseudocode of our final method that does so is given in Algorithm~\ref{alg:mng-full}.
We first give a high-level overview of the algorithm. The main loop, located on Line~\ref{alg:mng-full:main-loop}, iterates over doubling values of $\ell$. This performs exponential search on $\OPT{}_i$ for all instances $\mathcal{I}_i$ in parallel. We know that \NGCover{} with size limit $\ell$ will successfully return a cover for a point $p_i$ before $\ell$ exceeds $\OPT{}_i \ln n$ by more than a constant factor. Thus, we can supplement the cover with $O(\ell)$ other neighbors without violating our promised $O(\log n)$ approximation factor. Under this assumption, we add $\ell$ random edges and $\ell$ clique edges to each $p_i$. We then efficiently filter out any points covered by these edges as described in Section~\ref{subsec:mng-full:prelims}, then run \NGCover{} with limit $\ell$ on each instance to cover the remaining points. For any instances that successfully return covers, we can build a full neighbor list by combining the random edges, the clique edges, and the cover from \NGCover{}. We subsequently remove these instances from contention, i.e., they will not be considered in future iterations. For any instances where \NGCover{} fails, we conclude that $\OPT_i$ must be much larger than the current choice of $\ell$ and simply try again in the next iteration.

\begin{algorithm}[htb]
\caption{SparsestNavigableGraph($P, d$)}
\label{alg:mng-full}
\begin{algorithmic}[1]
\Require A point set $P$ of size $n$, a distance function $d$, and constants $c, \hat{c} > 0$ that control failure probability.
\Ensure A navigable graph on $P$ representing by out-neighborhoods $N_1, \ldots, N_n \subseteq P$ for each $p_i \in P$.
\State Preprocess $P$ as in \Cref{clm:sublinear_access}. Initialize $Q \gets P$.
\For{$\ell \gets 2^1, 2^2, \dots, 2^{\lceil \lg n \rceil}$}\label{alg:mng-full:main-loop}\Comment{Exponential search on $\OPT{}_i$}
    \State $U_{i_1}, U_{i_2}, \dots, U_{i_{|Q|}} \gets \varnothing$ for each $p_{i_j} \in Q$ \label{alg:mng-full:initialize-uncovered}
    \State $R_{i_1}, R_{i_2}, \dots, R_{i_{|Q|}} \gets$ samples of $P$ with size $(c+3)\ell$ for each $p_{i_j} \in Q$\Comment{Choose random edges}
    \State Arbitrarily partition $Q$ into parts $K_1, K_2, \dots, K_{\lceil |Q| / \ell \rceil}$ of size $\leq \ell$\label{alg:mng-full:initialize-cliques}\Comment{Choose arbitrary cliques}
    \For{$K = K_1, K_2, \dots, K_{\lceil |Q| / \ell \rceil}$}\label{alg:mng-full:cliques-loop}
        \ForEach{$p_k \in P \setminus K$}\label{alg:mng-full:filter-covered-start}\Comment{Identify points not yet covered by random edges or clique edges}
            \State $p_i \gets \argmin_{p \in K}d(p, p_k)$\label{alg:mng-full:argmin-clique}
                \If{$p_k \not\in S_{i \to r}$ for all $p_r \in R_i$}\label{alg:mng-full:filter-random-edges}
                    \State $U_i \gets U_i \cup \{p_k\}$\label{alg:mng-full:filter-covered-end}
                \EndIf
        \EndFor
        \ForEach{$p_i \in K$}\label{alg:mng-full:run-covers-start}\Comment{Run \NGCover{} to cover all identified uncovered points}
            \State $C \gets \NGCover{}(P, p_i, U_i, \ell,4\hat{c} + 12)$ \label{line:votecover_call}
            \If{$C \not=$ \textsc{Fail}}
                \State $Q \gets Q \setminus \{p_i\}$\label{alg:mng-full:mark-instance-completed}
                \State $N_i \gets R_i \cup K \cup C \setminus \{p_i\}$\label{alg:mng-full:run-covers-end}
            \EndIf
        \EndFor
    \EndFor
\EndFor
\State \Return Out-neighborhoods $N_1, \ldots, N_n$ representing directed graph $G$.
\end{algorithmic}
\end{algorithm}


\subsubsection{Correctness and Approximation Factor}\label{subsec:mng-full:approx}

Here we will prove that Algorithm~\ref{alg:mng-full} always returns a navigable graph where each node's out-degree is optimal to within an $O(\log n)$ factor. This proof follows easily from results proven earlier in this paper.

\begin{claim}\label{obs:mng-full:correctness}
    Algorithm~\ref{alg:mng-full} returns a navigable graph.
\end{claim}

\begin{proof}
    Since the algorithm keeps increasing $\ell$ to be $\geq n$ if the \NGCover{} call on \Cref{line:votecover_call} fails, it is not hard to see that \NGCover{} eventually returns a solution for all $p_i$ (indeed, the algorithm succeeds with probability $1$ if $\ell \geq n$). So, it suffices to show that the returned solution, $R_i \cup K \cup C \setminus \{p_i\}$ is a proper cover for set cover instance $\mathcal{I}_i$. This is true because $C$ is a cover for $U_i$ and a point $p_k$ is only not added to $U_i$ if it is already covered by an edge in $K$ or in $R_i$. In particular, if $p_i \neq \argmin_{p \in K}d(p, p_k)$ then it is covered by the edge from $p_i$ to $\argmin_{p \in K}d(p, p_k)$ in the clique. 
\end{proof}



\begin{claim}\label{clm:mng-full:large-instances-remain}
    With probability $\geq 1 - 1/n^{\hat{c}}$, Algorithm~\ref{alg:mng-full} maintains the invariant that $Q$ only contains points $p_i$ where $\OPT_i > \ell/((16\hat{c} + 58)\ln n)$.
\end{claim}

\begin{proof}
    From \Cref{thm:mng-cover:logn-approx}, when run with  $\ell \geq (8\hat{c} + 29) \OPT_i \ln n$, \NGCover{} successfully returns a cover of size $\leq \ell$ with probability $\geq 1 - 1/n^{\hat{c} + 1}$. Since $\ell$ doubles with each iteration of the main loop, it follows that for each $i$, there is an iteration prior to $\ell$ exceeding $(16\hat{c} + 58)\OPT_i \ln n$ where $(8\hat{c} + 29) \OPT{}_i \ln n \leq \ell < (16\hat{c} + 58) \OPT_i \ln n$. By a union bound, it holds simultaneously over all $i$ that \NGCover{} returns a cover for $\mathcal{I}_i$ on that corresponding iteration with probability $\geq 1 - 1 / n^{\hat{c}}$, and possibly even earlier. Subsequently, the corresponding $p_i$ are removed from $Q$ by Line~\ref{alg:mng-full:mark-instance-completed}. Thus, with this probability, all $p_i$ remaining in $Q$ satisfy $\ell < (16\hat{c} + 58) \OPT_i \ln n$, or equivalently $\OPT_i > \ell/((16\hat{c} + 58) \ln n)$.
\end{proof}

\begin{corollary}
\label{cor:final_approx}
    With probability $\geq 1 - 1/n^{\hat{c}}$, the graph returned by Algorithm~\ref{alg:mng-full} has $O(c\hat{c}\OPT{}^* \log n)$ edges, and each $p_i$ has $O(c\hat{c}\OPT{}_i \log n)$ out-neighbors.
\end{corollary}

\begin{proof}
    Each point's set of neighbors is constructed as soon as \NGCover{} successfully returns some set $C$ for that point. By \Cref{clm:mng-full:large-instances-remain}, with probability $\geq 1 - 1 / n^{\hat{c}}$, this happens for $p_i$ when $\ell \leq (16\hat{c} + 58) \OPT{}_i \ln n$. On the corresponding iteration of the main loop, $R_i$ has size $(c + 3)\ell$, and all partitions $K$ have size $\leq \ell$. Therefore,  each $p_i$ receives at most $|R \cup K \cup C| \leq (c + 5)\ell \leq (c + 5)(16\hat{c} + 58) \OPT{}_i \ln n = O(c\hat{c}\OPT_i \log n)$ out-neighbors.
    By extension, the total number of edges in the graph is $O(c\hat{c}\OPT{}^* \log n)$. 
\end{proof}


\subsubsection{Running Time}

To finish the proof of \Cref{thm:main_navigability}, we analyze the running time for Algorithm~\ref{alg:mng-full}.

\begin{lemma}
    Each iteration of the main loop takes $\tilde O(cn^2)$ time with probability $\geq 1 - 1 / n^{c + 1}$.
\end{lemma}

\begin{proof}
    Lines~\ref{alg:mng-full:cliques-loop}-\ref{alg:mng-full:run-covers-end} constitute the nontrivial running time of a given iteration of the main loop. For each clique $K$, these lines perform two main steps.
    \begin{enumerate}
        \item Lines~\ref{alg:mng-full:filter-covered-start}--\ref{alg:mng-full:filter-covered-end}: The algorithm initializes the set cover instances within $K$. This involves filtering out all points in these instances which are already covered either by clique edges or random edges.
        \item Lines~\ref{alg:mng-full:run-covers-start}--\ref{alg:mng-full:run-covers-end}: The algorithm runs \NGCover{} on each set cover instance in $K$ to cover all remaining points that Lines~\ref{alg:mng-full:filter-covered-start}--\ref{alg:mng-full:filter-covered-end} identified as not yet covered.
    \end{enumerate}

    \noindent\textit{Initializing set cover instances.} Line~\ref{alg:mng-full:argmin-clique} is done using a single scan over $K$, computing the distance between $p_k$ and each member of $K$ to determine the minimum among them. This takes $O(|K|) = O(\ell)$ time per execution of Line~\ref{alg:mng-full:argmin-clique}. Likewise, Line~\ref{alg:mng-full:filter-random-edges} performs a scan over $R_i$, which takes $O(|R_i|) = O(\ell)$ time. Lines~\ref{alg:mng-full:argmin-clique} and \ref{alg:mng-full:filter-random-edges} are run $n - |K| = O(n)$ times per clique, of which there are at most $O(n / \ell)$. Thus, the total amount of work across all cliques is $O(\ell) \cdot O(n) \cdot O(n / \ell) = O(n^2)$.\\
    \\
    \noindent\textit{Running \NGCover{}.} By \Cref{clm:mng-full:random-edges-and-cliques}, the calls to \NGCover{} for a given clique $K$ run in combined time $O(cn\ell)$ with probability $\geq 1 - 1/n^{c+2}$. Since there are at most $\lceil n / \ell\rceil$ cliques, this takes $O(cn\ell\log^2 n) \cdot O(n / \ell) = O(cn^2\log^2 n)$ time in total with probability $\geq 1 - 1/n^{c+1}$ (by a union bound).
\end{proof}



\begin{corollary}\label{cor:final_runtime}
    Algorithm~\ref{alg:mng-full} runs in $\tilde O(cn^2)$ time with probability $\geq 1 - 1/n^c$.
\end{corollary}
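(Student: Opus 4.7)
The plan is to combine the previous lemma, which bounds the per-iteration cost of the inner body (Lines~\ref{alg:mng-full:initialize-uncovered}--\ref{alg:mng-full:add-to-lambda}), with the two observations bounding the number of iterations of the $\ell^*$-loop and the $\ell$-loop. All remaining work (preprocessing, sampling $R_i$'s, partitioning into cliques, evaluating the restart condition) will be shown to fit within the same $\tilde{O}(n^2)$ budget.

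First I would account for the one-time preprocessing on Line~1: by \Cref{clm:sublinear_access} this costs $O(n^2 \log n) = \tilde{O}(n^2)$ time and supports all subsequent $\MemberOf$, $\FreqOf$, and $\SetOf$ queries in $O(1)$ time. Next, fix an iteration of the $\ell^*$-loop. Within such an iteration, constructing the $n$ random samples $R_1, \dots, R_n$ takes $O(n \cdot (\ell^*/n)\log n) = O(\ell^* \log n)$ time, and this is $\tilde{O}(n^2)$ since the restart condition forces the algorithm to terminate once $\ell^*$ exceeds $n^2$. The random partition of $Q$ into cliques at each iteration of the $\ell$-loop is $O(n)$, and the restart check itself is $O(1)$. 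The dominant cost per iteration of the $\ell$-loop is therefore the body of Lines~\ref{alg:mng-full:initialize-uncovered}--\ref{alg:mng-full:add-to-lambda}, which the preceding lemma bounds by $\tilde{O}(n^2)$ with probability at least $1-1/n^2$.

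I would then multiply through the loop counts. By the two observations immediately preceding this corollary, the $\ell^*$-loop runs for $O(\log n)$ iterations and, within each, the $\ell$-loop runs for $O(\log n)$ iterations, for a total of $O(\log^2 n)$ invocations of the inner body. Taking a union bound over all $O(\log^2 n)$ such invocations, every invocation runs in $\tilde{O}(n^2)$ time simultaneously with probability at least $1 - O(\log^2 n)/n^2 \geq 1 - 1/n$ for large enough $n$. Summing, the total running time is $O(\log^2 n) \cdot \tilde{O}(n^2) = \tilde{O}(n^2)$, as claimed.

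The main obstacle I foresee is merely bookkeeping rather than a deep technical step: one must verify that the auxiliary per-iteration work (drawing $R_i$'s, clique partitioning, and in particular checking the restart condition $\lambda + |Q|\cdot \ell > \ell^* \ln n$) does not asymptotically dominate the cost already accounted for in the previous lemma, and that the probabilistic guarantees of \Cref{thm:mng-cover:logn-approx}, \Cref{clm:mng-random-edges:runtime}, and \Cref{cor:final_approx} all remain valid after the union bound over the $O(\log^2 n)$ inner iterations. Given the per-iteration failure probability of $1/n^2$, this union bound loses only a $\mathrm{polylog}(n)$ factor and yields the claimed high-probability bound of $1 - 1/n$.
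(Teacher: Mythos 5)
Your proposal is correct and follows essentially the same route as the paper: bound the inner body by the preceding lemma, multiply by the $O(\log n)\times O(\log n)$ loop counts from the two observations, and union bound the per-invocation failure probability of $1/n^2$ over the $O(\log^2 n)$ invocations to get $1-1/n$. The extra bookkeeping you include (preprocessing, sampling the $R_i$'s, clique partitioning) is a harmless elaboration the paper leaves implicit.
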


\begin{proof}
    This follows easily from the previous lemma. The main loop runs for at most $\lceil \log_2 n \rceil$ iterations, each taking $O(n^2 \log^2 n)$ time with probability $\leq 1 - 1/n^{c + 1}$. Taking a union bound over $\leq \lceil \log_2 n \rceil$ iterations, this holds for all levels simultaneously with probability $\geq 1 - \lceil \log_2 n \rceil/n^{c + 1}> 1 - 1/n^c$, in which case the algorithm takes a total of $O(n^2 \log^3 n)$ time.
\end{proof}

\noindent\Cref{thm:main_navigability} follows as an immediate consequence of \Cref{cor:final_approx} and \Cref{cor:final_runtime}.
\section{Hardness}
In this section we establish hardness results to show that the $O(\log n)$ approximation achieved by our algorithm for constructing minimum sparsity navigable graphs is essentially tight for any polynomial-time algorithm, even when the input point set is in low-dimensional Euclidean space (e.g., $O(\log(n))$ dimensions). 
We start with preliminary terminology, definitions and lemmas we will use in proving the results.
\label{sec:lower_bounds}
\subsection{Preliminaries}
We first introduce some additional terminology related to the ``local'' definition of navigability in \Cref{def:navigability}.
\begin{definition}
Given point set $P$ and distance metric $d : P\times P\rightarrow \mathbb{R}^+$, and given a point $p\in P$ and a set $P^\prime \subseteq P$, $P^\prime \subseteq P$ is a \textbf{navigable neighborhood} for $p$ if and only if for every $p'' \in P$ where $p \neq p''$, there exists some $p'\in P^\prime$ such that $d(p', p'') < d(p, p'')$.
\end{definition}

\begin{definition}
Given point set $P$ and distance metric $d : P\times P\rightarrow \mathbb{R}^+$, and given a point $p\in P$ and a set $P^\prime \subseteq P$, we say that $P^\prime \subseteq P$ is a \textbf{minimum navigable neighborhood} for $p$ if and only if for every navigable neighborhood $P^{\prime\prime}\subseteq P$ we have $\left|P^\prime\right|\leq \left|P^{\prime\prime}\right|$.
We say that $P^\prime \subseteq P$ is a $c$-\textbf{approximate minimum navigable neighborhood} for $p$ if and only if for every navigable neighborhood $P^{\prime\prime}\subseteq P$ we have $\left|P^\prime\right|\leq c\cdot\left|P^{\prime\prime}\right|$.
\end{definition}
A graph is navigable if and only if the out-neighborhood of every node $p$ is a navigable neighborhood for $p$. Further, a graph is a minimum navigable graph if and only if the out-neighborhood of every $p$ is the minimum navigable neighborhood. Note however, that if we have a $c$-approximate minimum navigable graph in terms of \emph{total number of edges}, it does not mean that any particular neighborhood is $c$-approximate minimum. As such, proving hardness results nearly minimizing total degree (equivalently, average degree) is more challenging than proving such results for nearly minimizing the neighborhood size of any particular $p$.

\subsubsection{Defining Covering Point Sets}
Our hardness results will be based on the hardness of solving set cover, either exactly and approximately. As such, we start by describing a transformation from a set cover instance to a point set, with the property that navigable neighborhoods for the point set can be tranlated into a solution to the original set cover instance. We will instantiate this transformation in different settings to obtain various hardness results.
\begin{definition}
Given a set cover instance $(U=[n],\mathcal{F}=\{S_1,...,S_m\})$, a point set $P$ is called a \textbf{representative point set} for $(U,\mathcal{F})$ if $P_{(U,\mathcal F)}=\{p_i:i\in[n]\}\cup\{p_{S_j}:S_j\in \mathcal{F}\}\subseteq P$, meaning $P$ contains a point for every element, which we term element points, and every set, which we term set points.
Further, $P$ contains a distinguished point, $c\in P\setminus P_{(U,\mathcal F)}$ which we term the centroid.
Points in $(P\setminus P_{(U,\mathcal F)})\setminus \{c\}$, i.e the remaining points, are called solution points. We denote the set of solution points by $P_{SC}$.
\end{definition}

\begin{definition}
Given a set cover instance $(U=[n],\mathcal{F}=\{S_1,...,S_m\})$, a point set $P$ is called a \textbf{covering point set}
 for $(U,\mathcal{F})$ with respect to distance metric $d$ on $P$, if it is a representative point set, and the following properties hold:
\begin{enumerate}
    \item $\forall p\in P_{SC},p^\prime\in P\setminus\{p,c\}$, $d(p,c)<d(p,p^\prime)$. I.e., $c$ is the unique nearest neighbor for any solution point.
    \item $\forall i,j\in[m],p\in P_{SC}$, $d(p_{S_i},p_{S_{j}})<d(p_{S_i},p)$. I.e., any set point is closer to any other set point than to a solution point.
    \item $\forall p\in P_{SC},p^\prime \in P,i\in[n]$, $d(p^\prime,p_i)<d(p,p_i)$ if and only if $p^\prime=p_i$ or $p^\prime=p_{S_j}$ where $i\in S_j$. I.e., moving from a solution point to another point gets us closer to an element point $p_i$ if and only if the other point is equal to $p_i$ or is the point corresponding to a set that contains $i$.
    \item 
    $\forall p,p'\in P_{SC},p''\in (P\setminus P_{SC})\setminus \{c\}$, $d(p,p')<d(p,p'')$. I.e., any solution point is closer to any other solution point than it is to any element or set point.
\end{enumerate}
\end{definition}
\subsubsection{Properties of Covering Point Sets}
\begin{observation}
In a covering point set, an element point or a set point has a minimum navigable neighborhood which size is at most $m+n+1$.
\end{observation}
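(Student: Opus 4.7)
The plan is to exhibit a navigable neighborhood $N$ for $p$ whose size is at most $n+m+1$; since the minimum navigable neighborhood is no larger than any particular navigable neighborhood, this immediately yields the claim. Concretely, I would take
\[
N := \bigl(P_{(U,\mathcal{F})} \setminus \{p\}\bigr) \cup \{c\},
\]
which has size $|P_{(U,\mathcal{F})}|-1+1 = n+m \leq n+m+1$ whenever $p$ is itself an element or set point.

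The verification proceeds by a short case analysis over the possible targets $p_k \in P$, $p_k \neq p$. If $p_k$ is another element point or set point, then $p_k \in N$ and serves as its own witness, since $d(p_k,p_k)=0 < d(p,p_k)$. If $p_k = c$, then $c \in N$ is its own witness, since $d(c,c)=0 < d(p,c)$ as $p \neq c$. Finally, if $p_k \in P_{SC}$ is a solution point, property 1 of a covering point set guarantees $d(p_k,c)<d(p_k,p')$ for every $p' \in P\setminus\{p_k,c\}$; instantiating this with $p'=p$ and using symmetry of $d$ gives $d(c,p_k)<d(p,p_k)$, so $c \in N$ serves as the witness.

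I expect no genuine obstacle here: the one point that could have appeared subtle is navigating from $p$ to the centroid $c$, where a priori it is not obvious that some specific covering-set property forces a point closer to $c$ than $p$ to exist. This is handled for free by simply including $c$ itself in $N$: since $d(c,c)=0$, the definition of navigable neighborhood is satisfied trivially for the target $p_k=c$. The only modest bookkeeping is confirming the cardinality count, which gives $n+m$ and hence fits within the stated bound of $n+m+1$.
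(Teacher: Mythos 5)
Your proposal is correct and matches the paper's proof essentially exactly: both take the neighborhood consisting of all element points, set points, and the centroid (minus $p$ itself), and both observe that solution points are handled by the centroid while all other targets lie in the neighborhood and witness themselves. Your case analysis is in fact slightly more careful than the paper's, correctly invoking property (1) for the solution-point case (the paper cites property (4), which appears to be a typo).
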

\begin{proof}
Let $p$ be an element or a set point, and consider $N=(P\setminus P_{SC})\setminus{\{p\}}$ as a candidate neighborhood for $p$. All set and element points are in $N$ and so is $c$. Further, according to property (1) of covering  point sets, $c$ is closer than $p$ to all points in $P_{SC}$, thus $N$ is a navigable neighborhood with size at most $n+m+1$. 
\end{proof}
\begin{observation}\label{lem:solutionPointsMakeSolutions}
Given a covering point set for a set cover instance $(U,\mathcal{F})$, a solution point $p\in P_{SC}$, and a navigable neighborhood $N\subseteq P$ for $p$, we can efficiently (in time polynomial in n,m) compute a set cover for $(U,\mathcal{F})$ with size at most $|N|-1$.
\end{observation}
\begin{proof}
Define the set $\hat{N}=(N\setminus P_{SC})\setminus\{c\}$. According to property (4) of covering point sets, and since $N$ is a navigable neighborhood for $p$, $\hat{N}$ is a strict subset of $N$. In particular, we would only have $\hat{N} = N$ if $N$ only contains set or element points. This cannot be the case by property (4) of covering point sets, as no set or element points are closer to $p' i\in P_{SC}$ than $p$ is. Further, according to property (3) of covering point sets, for every $i\in [n]$, either $p_i\in N$ or $p_{S_j}\in N$ where $i\in S_j$. Thus, if we swap every element point in $\hat{N}$ with a set point of an arbitrary set that contains the element, we get a collection of at most $|N|-1$ set points, whose corresponding sets cover all elements in $U=[n]$.
\end{proof}
\begin{lemma}
In a covering point set, the size of a minimum navigable neighborhood for a solution point is exactly $\OPT{}+1$ where $\OPT{}$ is the size of the minimum set cover for $(U,\mathcal{F})$.
\end{lemma}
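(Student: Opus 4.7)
The claim has two directions, and I would handle them independently.

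For the lower bound, I would simply invoke Observation~\ref{lem:solutionPointsMakeSolutions}: any navigable neighborhood $N$ for a solution point $p$ yields a set cover of size at most $|N|-1$, so $|N|-1 \geq \OPT{}$, i.e., $|N| \geq \OPT{}+1$. This immediately shows that the minimum navigable neighborhood has size at least $\OPT{}+1$, so no additional work is needed here.

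For the upper bound, I would construct an explicit navigable neighborhood for $p$ of size $\OPT{}+1$. Let $\{S_{j_1},\ldots,S_{j_{\OPT{}}}\}$ be a minimum set cover for $(U,\mathcal{F})$, and set $N := \{p_{S_{j_1}},\ldots,p_{S_{j_{\OPT{}}}}\} \cup \{c\}$. Navigability of $N$ for $p$ would then be verified by casework on the target point $p_k \neq p$:
\begin{enumerate}
    \item If $p_k = p_i$ is an element point, pick $S_{j_\ell}$ in the cover with $i \in S_{j_\ell}$; property~(3) of covering point sets gives $d(p_{S_{j_\ell}}, p_i) < d(p, p_i)$.
    \item If $p_k = p_{S_j}$ is any set point, then $p_{S_{j_1}} \in N$ satisfies $d(p_{S_{j_1}}, p_{S_j}) < d(p_{S_j}, p)$ by property~(2) when $j \neq j_1$, and when $j = j_1$ the point $p_{S_j}$ itself lies in $N$ at distance $0$.
    \item If $p_k \in P_{SC}\setminus\{p\}$ is another solution point, then by property~(1) the centroid $c$ is its unique nearest neighbor in $P\setminus\{p_k\}$, so in particular $d(c, p_k) < d(p, p_k)$.
    \item If $p_k = c$, then $c \in N$ itself gives distance $0 < d(p, c)$.
\end{enumerate}
Thus $N$ is navigable for $p$ and $|N| = \OPT{}+1$, so the minimum navigable neighborhood size is at most $\OPT{}+1$.

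Combining the two directions yields equality. I do not anticipate a real obstacle: the lower bound is a one-line appeal to a prior observation, and the upper bound is a short casework check, where the only thing to be careful about is the ``self-distance'' case $p_k = p_{S_{j_\ell}} \in N$ for some $\ell$, which is handled trivially because $N$ contains $p_k$ itself. The reason each of properties (1)--(4) of covering point sets appears is precisely to handle one of the four cases above, which is a good sanity check that no case has been missed.
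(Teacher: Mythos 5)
Your proposal is correct and follows essentially the same route as the paper's proof: the lower bound via Observation~\ref{lem:solutionPointsMakeSolutions}, and the upper bound by taking the set points of a minimum cover together with the centroid and checking navigability via properties (1)--(3). Your version is merely more explicit about the casework (in particular the $p_k = c$ and $p_k \in N$ self-distance cases, which the paper leaves implicit).
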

\begin{proof}
Let $p\in P_{SC}$, if $\mathcal{F}^\prime\subseteq \mathcal {F}$ is a set cover,  then $N=\{p_{S_j}:S_j\in \mathcal{F}'\}\cup \{c\}$ i,e all points representing sets in $\mathcal{F}'$ and the centroid, are a navigable neighborhood for $p$. To see why this is true - note that $c$ is in $N$, and any set in $\mathcal{F^\prime}$ (which is also in $N)$ gets us closer to all other sets according to property (2) of covering point sets. Further, for every $i\in[n]$ there exists $S_j\in \mathcal{F^\prime}\subseteq N$ such that $i\in S_j$ and thus according to property (3) of covering point sets $d(p_{S_j},p_i)<d(p,p_i)$ - i.e we can get closer to all element points. This implies that the size of the minimum navigable neighborhood is at most $\OPT{}+1$. In the other direction - note that if $N$ is a navigable neighborhood for $p$, then according to \cref{lem:solutionPointsMakeSolutions}, there exists a set cover of size $|N|-1$, and thus $|N|\geq \OPT{}+1$. 
\end{proof}

As this lemma shows, the main strength of covering point sets, is that the problem of finding a navigable neighborhood for a solution point corresponds to computing a set cover for the original instance. A crucial part, demonstrated in the previous observations, is that the size of minimum navigable neighborhoods for set and element points does not scale with the number of solution points (due to the centroid), so by including moderately many solution points in the construction we can claim that in a navigable graph with approximate-minimum average degree, some solution point has a navigable neighborhood that reasonably approximates the minimum. This will be central to the way we use these constructions. 
\begin{corollary}\label{cor:minimumNavGraphForCoveringPointSets}
The total number of edges in a minimum navigable graph for a covering point set is at most $(n+m)(n+m+1)+m+n+|P_{SC}|+(\OPT{}+1)\cdot |P_{SC}|$.
\end{corollary}
\begin{lemma}\label{lem:solutionForCoveringPointSetsSolvesSetCover}
Let $(U,\mathcal{F})$ be a set cover instance, and let $P$ be a covering point set. Then, if $|P_{SC}|\geq(m+n)^3$, and $G$ is a navigable graph for $P$ which average degree is $r$-approximate minimum, then we can use $G$ to efficiently (in time polynomial and $m,n$), compute a $\left(r+o(1)+\Theta\left(\frac{r}{\OPT{}}\right)\right)$-approximate minimum set cover for $(U,\mathcal{F})$.
\end{lemma}
\begin{proof}
First we show there must exist some solution point in $P$ which degree in $G$ is at most $\left(r+o(1)+\frac{1}{\OPT{}}\right)\cdot (\OPT{}+1)$. Indeed let $\lambda>0$ such that all solution points have degree at least $\lambda \cdot (\OPT{}+1)$. This implies that the sum of degrees in $G$ is at least $\lambda(m+n)^3\cdot(\OPT{}+1)$. On the other hand, we know from \cref{cor:minimumNavGraphForCoveringPointSets}, and since $G$'s average degree is $r$-approximate minimum that the sum of degrees in $G$ is at most: 
\[
r\cdot\left(\left(n+m\right)\left(n+m+1\right)+m+n+\left(m+n\right)^{3}+\left(\OPT{}+1\right)\cdot\left(m+n\right)^{3}\right)
\]
So:
\[
\lambda\left(m+n\right)^{3}\cdot\left(\OPT{}+1\right)\leq r\cdot\left(\left(n+m\right)\left(n+m+1\right)+m+n+\left(m+n\right)^{3}+\left(\OPT{}+1\right)\cdot\left(m+n\right)^{3}\right)
\]
Thus:
\[
\lambda\leq\frac{r\cdot\left(\left(n+m\right)\left(n+m+1\right)+m+n+\left(m+n\right)^{3}+\left(\OPT{}+1\right)\cdot\left(m+n\right)^{3}\right)}{\left(m+n\right)^{3}\cdot\left(\OPT{}+1\right)}
\]
\[
\leq r+\frac{\Theta\left(\left(m+n\right)^{2}\right)}{\left(m+n\right)^{3}\cdot\left(\OPT{}+1\right)}+\frac{1}{\OPT{}+1}\leq r+o\left(1\right)+\frac{1}{\OPT{}}
\]
So, there exists a solution point with neighborhood in $G$ which size is at most: 
\begin{equation}\label{eq:reductionApproxRatio}
\left(r+o(1)+\frac{1}{\OPT{}}\right)\cdot \left(\OPT{}+1\right)=\OPT{}\cdot\left(r+o(1)+\Theta\left(\frac{r}{\OPT{}}\right)\right)
\end{equation}
Now, recall \cref{lem:solutionPointsMakeSolutions}, which implies that for each solution point, we can efficiently compute a set cover which size is at most the size of its neighborhood. Thus, we can use the solution point that has minimal degree, which as we've established is at most $\OPT{}\cdot\left(r+o(1)+\Theta\left(\frac{r}{\OPT{}}\right)\right)$, to compute a set cover of this size, i.e a $\left(r+o(1)+\Theta\left(\frac{r}{\OPT{}}\right)\right)$-approximate minimum set cover.
\end{proof}
\subsection{Hardness for General Distance Metrics}
The first step towards establishing hardness of approximation is, given a set cover instance $(U,\mathcal{F})$, to show how to actually construct a covering point set for this instance.
This is quite straightforward if we are allowed to use a general metric.
Specifically, we define a mapping $\varphi(U,\mathcal{F})=(P,d)$ that maps the set cover instance to a representative point set $P$ and distance metric $d$ which are specified as follows:
$P$ has $(n+m)^2$ solution points, $d$ is a symmetric function, such that $d(p,p^\prime)=0\iff p=p^\prime$. 
We specify $d$ in table form (see \Cref{fig:general_metric_point_distances}).

\begin{center}
\setlength{\fboxrule}{0.8pt} 
\setlength{\fboxsep}{8pt}    

\begin{figure}[htbp]
  \centering
    \scalebox{0.9}{
      \begin{tabular}{ c|c|c|c|c } 
  & Element points & Set points & Solution points & $c$ \\
 \hline
 Element points & 2 & $d\left(p_{i},p_{S_{j}}\right)=\begin{cases}
1 & i\in S_{j}\\
2 & else
\end{cases}$ & 2  & 2 \\
   \hline
 Set points & " & 1 & 2 & 2 \\
   \hline
 Solution points & " & " & 1.5 & 1 \\
 \hline
 $c$ & " & " & " & 0 \\
\end{tabular}
    }
  \caption{Specification of distance metric $d$ on different types of points in $P$.}
\label{fig:general_metric_point_distances}
\end{figure}
\end{center}

Note that $d$ satisfies triangle inequality since the values it assigns to distances between distinct points are in $\{1,1.5,2\}$.
It is straightforward to verify that $(P,d)$ is a covering point set: For property (1) - Observe that the distance of $c$ from a solution point is smaller than all other distances in the solution points column. For property (2) observe that the distance between set points is $1$, while the distance between a set point and a solution point is $2$. For property (3), observe that the only point that has a distance smaller than $1.5$ from an element, is the element itself, and any set that contains it. For property (4) observe that the distance between solution points is $1.5$, while their distance from set and element points is $2$.
\begin{theorem}\label{thm:HardnessOfApproxGeneral}
For any $\varepsilon>0$, unless $P=NP$, there does not exist a polynomial-time algorithm that, given a set $P$ of $n$ points and a distance metric $d : P \times P \to \mathbb{R}^+$, constructs a navigable graph with average degree $\leq (\frac{1}{3}\cdot ln(n)-\varepsilon)\cdot \OPT{}$, where $\OPT{}$ is the minimum average degree of any navigable graph on $(P,d)$.
\end{theorem}
\begin{proof}
Let $\varepsilon >0$, and assume the existence of such an algorithm. Set $0<\varepsilon ^\prime=\frac{\varepsilon}{2}<\varepsilon$, and observe section $6.2$ of \cite{DinurSteurer:2014}: The instances their reduction constructs, that are $NP$-hard to $\log(n)-\varepsilon ^\prime$ approximate, have 
$N=n_{1}^{D+1}$ elements and  $M=n_{1}^{2}$ sets where $D>1$ is inversely proportional to $\varepsilon ^\prime$. So, $M\leq N$. Further, the size of the optimal solution for such an instance satisfies $\OPT{}\in\Omega\left(n_{1}\right)=\Omega\left(N^{\frac{1}{D+1}}\right)$. We show that the existence of the hypothesized algorithm implies an efficient algorithm to construct a $\ln(N)-\varepsilon^\prime$-approximate minimum set cover for such instances, implying $P=NP$.
Consider such an instance $(U,\mathcal{F})$ with $N$ elements, $M$ sets and an optimal solution of size $\OPT{}$ which all behave as we've stated above and construct a covering point set $P$ and distance metric $d$ as described in the preceding discussion - note that this point set has $N^\prime= M+N+1+(M+N)^2$ points.
We can run the hypothesized algorithm on $(P,d)$ and retrieve a navigable graph which average degree is $r$-approximate minimum where $r=\frac{1}{3}\cdot \ln(N')-\varepsilon$. According to \cref{lem:solutionForCoveringPointSetsSolvesSetCover}, we can then use this graph to efficiently compute a set cover for $(U,\mathcal{F})$ which size is at most 
\[
\OPT{}\left(r+o(1)+\Theta\left(\frac{r}{\OPT{}}\right)\right)=\OPT{}(r+o(1))
\]
\[
=\OPT{}\left(\frac{1}{3}\cdot \ln(N')-\varepsilon+o(1)\right)\leq \OPT{}\left(\frac{1}{3}\cdot \ln(N')-\varepsilon'\right)   
\]
\[
\OPT{}\left(\frac{1}{3}\cdot \ln(M+N+1+(M+N)^2)-\varepsilon'\right)
\]
\[
\leq \OPT{}\left(\frac{1}{3}\cdot \ln(N^3)-\varepsilon'\right)=\OPT{}(\ln(N)-\varepsilon')
\]
So, we have a $\ln(N)-\varepsilon '$ approximation in polynomial time for $NP$-hard-to-approximate instances, which implies $P=NP$.
\end{proof}

\subsection{Hardness for Euclidean Metrics}\label{subsec:EuclideanHardness}
To show hardness of approximation for Euclidean metrics, we demonstrate how a covering point set can be constructed in Euclidean space. From there, the proof for hardness of approximation is identical to \Cref{thm:HardnessOfApproxGeneral} and can be derived as a corollary.
Given a set cover instance $(U,\mathcal{F})$ we define a mapping $\varphi'(U,\mathcal{F})$ which maps the set cover instance to a representative point set $P\subseteq \mathbb{E}^{(m+n+2+(m+n)^{3})}$ with $(m+n)^3$ solution points, which we denote by $P_{SC}=\{\ell_1,...,\ell_{(m+n)^3}\}$, as follows:
\begin{itemize}
    \item $\forall S_{j}\in\mathcal{F}:p_{S_{j}}=\mathbf{e}_{j}+\mathbf{e}_{n+m+1}$.
    \item $\forall i\in\left[n\right]:p_{i}=n\cdot \mathbf{e}_{m+i}+\sum_{j:i\in S_{j}}\mathbf{e}_{j}$
    \item $\forall k\in\left[\left(m+n\right)^{3}\right]:\ell_{k}=\frac{1}{2}\cdot \mathbf{e}_{n+m+2}+\sqrt{\frac{3}{4}}\cdot \mathbf{e}_{n+m+2+k}$
    \item $c=\mathbf{e}_{n+m+2}$.
\end{itemize}
where $\mathbf{e}_i$ denotes the i'th standard basis vector.
Observe that the distance of $c$ or a solution point, from an element point $e_i$ is equal to $\sqrt{n^{2}+f_i+1}$ where $f_i$ is the frequency of $i$, i.e $\lvert \{S_j:i\in S_j\}\rvert$. Further, the distance of $p_{S_j}$ from $p_i$ is either $\sqrt{n^{2}+f_i}$ or $\sqrt{n^{2}+f_i+2}$ depending on whether $i\in S_j$ or $i\notin S_j$, \cref{fig:euclidean_point_distances} shows a summary of the distances between the different types of points in the representative point set we've constructed in table form.
\begin{center}
\setlength{\fboxrule}{0.8pt} 
\setlength{\fboxsep}{8pt}    

\begin{figure}[htbp]
  \centering
    \scalebox{0.9}{
      \begin{tabular}{ c|c|c|c|c } 
  & Element points & Set points & Solution points & $c$ \\
 \hline
 Element points & $\geq\sqrt{2n^{2}}$ & $d\left(p_{i},p_{S_{j}}\right)=\begin{cases}
\sqrt{n^{2}+f_i} & i\in S_{j}\\
\sqrt{n^{2}+f_i+2} & else
\end{cases}$ & $\sqrt{n^{2}+f_i+1}$  & $\sqrt{n^{2}+f_i+1}$ \\
   \hline
 Set points & " & $\sqrt{2}$ & $\sqrt{3}$ & $\sqrt{3}$ \\
   \hline
 Solution points &" & " & $\sqrt{1.5}$ & 1 \\
 \hline
 $c$ & " & " & " & 0 \\

\end{tabular}
    }
  \caption{Euclidean distance between different types of points in $\varphi'(U,\mathcal{F}).$}
  \label{fig:euclidean_point_distances}
\end{figure}
\end{center}

Again, it is straightforward to verify that the resulting point set is indeed a covering point set, for example observe that if $\ell_j$ is a solution point, and $p_i$ is an element point, then $d(\ell_j,p_i)=\sqrt{n^2+f_i+1}$. Observing the table row corresponding to distances from element points we can see that the only point closer to $p_i$ than $\ell_j$ is, is a point corresponding to a set containing $i$, or $p_i$ itself, so property (3) of covering point sets is satisfied.
Overall we can derive the following theorem as a corollary:
\begin{theorem}\label{thm:HardnessOfApproxEuclidean}
For any $\varepsilon>0$, unless $P=NP$, there does not exist a polynomial-time algorithm that, given a set $P$ of $n$ points in $d$-dimensional Euclidean space, constructs a navigable graph with average degree $\leq (\frac{1}{3}\cdot ln(n)-\varepsilon)\cdot \OPT{}$, where $\OPT{}$ is the minimum average degree of any navigable graph on $P$ with respect to the Euclidean metric.
\end{theorem}

\subsection{Hardness for Low-dimensional Euclidean Metrics}\label{subsec:LowEuclideanHardness}
Our next goal is to establish hardness of approximation results for constructing navigable graphs in low-dimensional Euclidean space, specifically when $d$ is polylogarithmic in $n$. The main idea is to construct a point set similar to the one we saw in the previous section project it down to a low dimensional space using a JL \cite{DasguptaGupta:2003} transform, while maintaining it being a covering point set. 
Note that the point set presented in \cref{subsec:EuclideanHardness} is not usable for this purpose - for example if we want to maintain property (3) we have to make sure whatever distortion factor $\varepsilon$ we use will maintain: $(1-\varepsilon)\cdot (n^2+f_i+1) > (1+\varepsilon)\cdot (n^2+f_i)$ which would generally require $\varepsilon$ to be inverse-polynomial, thus the projection will land in polynomial-dimensional space.
Our way of solving this issue is looking at set cover instances with bounded-frequency, and using known hardness of approximation results for those instances.
We start by defining, for set cover instances with bounded frequencies, covering point sets which maintain their properties under JL transform with a distortion factor of $\Theta\left(\frac{1}{k^2}\right)$ where $k$ is an upper bound on the frequency.
Given a set cover instance $(U,\mathcal{F})$ we define a parameterized mapping $\varphi(U,\mathcal{F},f)$ which maps the set cover instance to a representative point set $P\subseteq \mathbb{E}^{(m+n+2+(m+n)^{3})}$ with $(m+n)^3$ solution points, which we denote by $P_{SC}=\{\ell_1,...,\ell_{(m+n)^3}\}$, as follows:
\begin{itemize}
    \item $\forall S_{j}\in\mathcal{F}:p_{S_{j}}=\mathbf{e}_{j}+\mathbf{e}_{n+m+1}$.
    \item $\forall i\in\left[n\right]:p_{i}=f\cdot \mathbf{e}_{m+i}+\sum_{j:i\in S_{j}}\mathbf{e}_{j}$
    \item $\forall k\in\left[\left(m+n\right)^{3}\right]:\ell_{k}=\frac{1}{2}\cdot \mathbf{e}_{n+m+2}+\sqrt{0.65}\cdot \mathbf{e}_{n+m+2+k}$
    \item $c=\mathbf{e}_{n+m+2}$.
\end{itemize}
\cref{fig:low_dim_point_distances} shows a summary of the distances between the different types of points in the representative point set we've constructed in table form.
\begin{center}
\setlength{\fboxrule}{0.8pt} 
\setlength{\fboxsep}{8pt}    

\begin{figure}[htbp]
  \centering
    \scalebox{0.9}{
      \begin{tabular}{ c|c|c|c|c } 
        
         & Element points & Set points & Solution points & $c$ \\
        \hline
        Element points &
          $\geq\sqrt{2f^{2}}$ &
          $d\!\left(p_{i},p_{S_{j}}\right)=
            \begin{cases}
              \sqrt{f^{2}+f_i} & i\in S_{j}\\
              \sqrt{f^{2}+f_i+2} & \text{otherwise}
            \end{cases}$ &
          $\sqrt{f^{2}+f_i+0.9}$ &
          $\sqrt{f^{2}+f_i+1}$ \\ \hline
        Set points       & "               & $\sqrt{2}$   & $\sqrt{2.9}$ & $\sqrt{3}$   \\ \hline
        Solution points  & "               &         "     & $\sqrt{1.3}$ & $\sqrt{0.9}$ \\ \hline
        $c$              &    "            &      "        &   "           & 0            \\ 
      \end{tabular}
    }
  \caption{Distances between different types of points in $\varphi(U,\mathcal{F},f).$}
  \label{fig:low_dim_point_distances}
\end{figure}
\end{center}
\begin{lemma}\label{lem:distortion_resistant_covering_point_set}
If $(U,\mathcal{F})$ is a set cover instance in which the frequency of every element is exactly $k\geq 2$, and let $d$ be a distance metric on $P=\varphi(U,\mathcal{F},k)$ that satisfies: 
\[
\forall p\neq p'\in P:d(p,p')\in 
\left[(1-\delta)\cdot \lVert p-p' \rVert,(1+\delta)\cdot \lVert p-p' \rVert\right]
\]
where $\lVert \cdot \rVert$ denotes the euclidean norm, then: If $\delta \leq \frac{1}{40k^2}$, $P$ is a covering point set for $(U,\mathcal{F})$ w.r.t $d$.
\end{lemma}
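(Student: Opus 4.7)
My plan is to verify each of the four defining properties of a covering point set individually for $P = \varphi(U,\mathcal{F},k)$ under the distance $d$. Every such property boils down to a finite collection of strict inequalities of the form $\|x-y\| < \|x-z\|$ between pairs of Euclidean distances drawn from \cref{fig:low_dim_point_distances}. For such an inequality to survive the distortion, i.e.\ to give $d(x,y) < d(x,z)$, it suffices that $(1-\delta)\|x-z\| > (1+\delta)\|x-y\|$, which after squaring becomes $\|x-z\|^2 / \|x-y\|^2 > \left((1+\delta)/(1-\delta)\right)^2 = 1 + 4\delta + O(\delta^2)$. So I would find the worst squared-distance ratio appearing across the four properties and check that it exceeds $1+4\delta$ whenever $\delta \leq 1/(40k^2)$.

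For properties (1), (2), and (4), the relevant pairs of Euclidean distances come from the handful of fixed small values $\sqrt{0.9}, \sqrt{1.3}, \sqrt{2}, \sqrt{2.9}, \sqrt{3}$ together with the much larger distances to element points. The tightest squared ratios are $1.3/0.9$ (solution-to-$c$ versus solution-to-solution in property (1)) and $2.9/2 = 1.45$ (set-to-set versus set-to-solution in property (2)), both comfortably larger than $1.44$. Since $1+4\delta \leq 1 + 1/(10 k^2) \leq 1.025$ for $k \geq 2$, these properties survive with plenty of slack.

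The core work is in property (3), whose comparisons all involve squared distances of the form $k^2 + k + c$ for small constants $c \in \{0, 0.9, 1, 2\}$, giving squared ratios of order $1 + \Theta(1/k^2)$. I would handle the three non-trivial cases in turn: $p' = p_{S_j}$ with $i \in S_j$ (requires $\sqrt{k^2+k} < \sqrt{k^2+k+0.9}$, squared ratio $1+ 0.9/(k^2+k)$); $p' = p_{S_j}$ with $i \notin S_j$ (requires $\sqrt{k^2+k+0.9} < \sqrt{k^2+k+2}$, squared ratio $1+ 1.1/(k^2+k+0.9)$); and $p' = c$ (requires $\sqrt{k^2+k+0.9} < \sqrt{k^2+k+1}$, squared ratio $1 + 0.1/(k^2+k+0.9)$). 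The tightest of these is the $c$ case, which forces $\delta$ to be roughly at most $1/(40(k^2+k+0.9))$, and this is the comparison for which the bound $1/(40k^2)$ is tailored---after using $k \geq 2$ to absorb the lower-order terms in the denominator (and tracking the $O(\delta^2)$ expansion carefully), $\delta \leq 1/(40k^2)$ yields the required inequality.

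The main obstacle is therefore the bookkeeping for this last case of property (3), where the gap between $\sqrt{k^2+k+0.9}$ and $\sqrt{k^2+k+1}$ shrinks like $1/k^2$ and must be balanced against the $\Theta(\delta)$ distortion. A secondary subtlety is that distinct solution points $\ell, \ell'$ satisfy $\|\ell - p_i\| = \|\ell' - p_i\|$ exactly, so their distorted distances may flip arbitrarily; since the Euclidean inequality is non-strict, I would argue that this does not falsify the iff in property (3) for its intended downstream use in \cref{lem:solutionPointsMakeSolutions}, where solution-point witnesses in a navigable neighborhood are already absorbed into the $|N \cap P_{SC}|$ term that is subtracted before building the set cover.
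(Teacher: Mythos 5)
Your proposal is correct and follows essentially the same route as the paper: verify each of the four covering-point-set properties by comparing pairs of Euclidean distances from the table, with the binding constraint being the property-(3) comparison of $\sqrt{k^2+k+0.9}$ against $\sqrt{k^2+k+1}$, which is exactly the computation the paper carries out (by rationalizing $\sqrt{k^2+k+1}-\sqrt{k^2+k+0.9}=0.1/\bigl(\sqrt{k^2+k+1}+\sqrt{k^2+k+0.9}\bigr)$ rather than your squared-ratio bookkeeping). Two small remarks: strictly the tight case needs $\delta\lesssim 1/\bigl(40(k^2+k+1)\bigr)$, so the stated $1/(40k^2)$ is loose by a constant factor---a slip the paper's own proof shares when it bounds $\bigl(\sqrt{k^2+k+0.9}+\sqrt{k^2+k+1}\bigr)^2$ by $4k^2$---and your observation that tied solution-point distances to an element point can flip under distortion is a genuine subtlety that the paper's proof does not address at all.
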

\begin{proof}
First note, that if the frequency of every element in the set cover instance is $k$, the only changes in distances in $\varphi(U,\mathcal{F},k)$ are in the category of element points.
Specifically, the distance between $c$ and an element point is now $\sqrt{k^2+k+1}$, the distance between a solution point and an element point is $\sqrt{k^2+k+0.9}$.
The distance between an element point and a set point is $\sqrt{k^2+k}$ or $\sqrt{k^2+k+2}$, depending on whether the element is a member of the set.
Now, to verify $\varphi(U,\mathcal{F},f)$ is a covering point set w.r.t $d$:
Note that:
\[
0.9\cdot(1+\delta)\leq 0.9+\frac{0.9}{40}< \sqrt{1.3}-\frac{\sqrt{1.3}}{40}\leq \sqrt{1.3}\cdot(1-\delta)
\]
This implies that $c$ is still strictly closer to a solution point than a different a solution point is. The Euclidean distance between solution points to other classes of points is only greater, which implies $c$ is still the unique nearest neighbor of all solution points, so property (1) is maintained.
Now, observe:
\[
\frac{\left(\sqrt{k^{2}+k+0.9}+\sqrt{k^{2}+k+1}\right)^{2}}{0.1}\leq\frac{4k^{2}}{0.1}=40k^{2}.
\]
Thus:
\[
\frac{\sqrt{k^{2}+k+0.9}+\sqrt{k^{2}+k+1}}{\sqrt{k^{2}+k+1}-\sqrt{k^{2}+k+0.9}}=\frac{\left(\sqrt{k^{2}+k+0.9}+\sqrt{k^{2}+k+1}\right)^{2}}{\left(\sqrt{k^{2}+k+1}-\sqrt{k^{2}+k+0.9}\right)\left(\sqrt{k^{2}+k+0.9}+\sqrt{k^{2}+k+1}\right)}
\]
\[
\frac{\left(\sqrt{k^{2}+k+0.9}+\sqrt{k^{2}+k+1}\right)^{2}}{0.1}\leq 40k^2.
\]
So:
\[
\frac{\sqrt{k^{2}+k+0.9}+\sqrt{k^{2}+k+1}}{40k^2}\leq\sqrt{k^{2}+k+1}-\sqrt{k^{2}+k+0.9},
\]
which implies
\[
\sqrt{k^{2}+k+0.9}\cdot \left(1+\frac{1}{40k^2}\right)\leq\sqrt{k^{2}+k+1}\cdot \left(1-\frac{1}{40k^2}\right)
\]
\[
\to \sqrt{k^{2}+k+0.9}\cdot \left(1+\delta\right)\leq\sqrt{k^{2}+k+1}\cdot \left(1-\frac{1}{\delta }\right).\qedhere
\]
\end{proof}
This implies that given an element point $p_i$ and a solution point $\ell_r$, $c$ is still further from $p_i$ than $\ell_r$ is. Further, any other point $p\in \varphi(U,\mathcal{F},k)$ that does not correspond to a set containing $i$ is further from $p_i$ than $c$ is in the Euclidean metric, so they also remain further from $p_i$ than $\ell_r$ is under $d$. For a point corresponding to a set containing $i$, i.e of the form $p_{S_j}$ where $i\in S_j$, it can similarly be verified that:
\[
\sqrt{k^{2}+k}\cdot \left(1+\delta\right)\leq\sqrt{k^{2}+k+0.9}\cdot \left(1-\frac{1}{\delta }\right).
\]
This means moving from a solution point to $p_{S_j}$ where $i\in S_j$ gets us closer to $p_i$ under $d$. Overall property (3) is maintained.
The other properties can be similarly verified.
\subsubsection{Hardness in Polylogarithmic Dimensions}
\begin{theorem} There exist constants $a,b>0$ such that for every $\varepsilon >0$, unless $NP\subseteq DTIME\left(n^{O(\log\log(n))}\right)$, there is no polynomial-time algorithm that, given a set $P$ of $n$ points in $d$-dimensional Euclidean space where $d\in O\left(\left(\log n\right)^{1+\varepsilon }\right)$, constructs a navigable graph with average degree is $\leq \left(\frac{1}{2a^{\kappa_\varepsilon}}\cdot  \left(\log(n)\right)^{\kappa_\varepsilon}-0.1\right) \cdot \OPT{}$, where $\kappa_\varepsilon=\min\left\{\frac{1}{b},\frac{\varepsilon}{4}\right\}$ and $\OPT{}$ is the minimum average degree of any navigable graph on $P$ with respect to the Euclidean metric.
\end{theorem}
\begin{proof}
Let $\varepsilon>0$ and assume that for every $a,b>0$, such an algorithm exists. We show that this implies a polynomial-time algorithm that produces a $\frac{1}{2}\cdot\log(N)-0.01$ approximation for the $E_k$-vertex-cover instances produced in the reduction in Section 6.2 of \cite{DinurGuruswami:2005}, where $N$ is the number of hyperedges, which implies $NP\subseteq DTIME\left(n^{O(\log\log(n))}\right)$. Note that within the reduction, setting $k=\left(\log(n)\right)^\frac{1}{b'}$ for a sufficiently large constant $b'$ yields hard instances in which the number of hyperedges $N$ is equal to $n^{O(\log\log(n))}$, the number of vertices $M$ is at most $n^{O(\log\log(n))}$, and the size of a minimum vertex cover is at least $n^{O(\log\log(n))}$ (since the degree is bounded by $n$), and in particular with this value of $k$ we ensure $M+N+1+(M+N)^3\leq N^{a'}$ for some constant $a'$.
We set $b=b'$ and $a=a'$.
Now, set $k=\log^{\kappa_\varepsilon}(n)$ in the reduction and let $G$ be a hypergraph constructed by the reduction. The set cover instance $\left(U_{G}=E,\mathcal{F}_{G}=V\right)$ induced by $G$, in which for every $e\in E$ and $v\in V$ we say that $e\in v$ if and only if $v$ is adjacent to $v$, has $N$ elements and $M$ sets. Further, the frequency of every element is exactly $k$. Note that $V^{\prime}\subseteq V$ is a vertex cover for $G$ if and only if it is a set cover for $\left(U_{G},\mathcal{F}_{G}\right)$. 
We construct $\varphi(U_G,\mathcal{F}_G,k)$, and apply a deterministic polynomial-time JL transform \cite{DasguptaGupta:2003} to the result with distortion parameter $\delta=\frac{1}{40k^2}$, obtaining a projected point set $P'\in \mathbb{E}^{\log(n)\cdot\Theta(k^4)}=\mathbb{E}^{\Theta\left(\left(\log n\right)^{1+\varepsilon}\right)}$.
Since the frequency of every element in $U_G$ is exactly $k$, according to \Cref{lem:distortion_resistant_covering_point_set}, $P'$ is a covering point set for $(U_G,\mathcal{F}_G)$. We now apply the hypothesized algorithm to $P'$ and obtain a navigable graph $H_{P'}$ whose degree is $(\frac{1}{2a^{\kappa_\varepsilon}}\cdot  \left(\log(N')\right)^{\kappa_\varepsilon}-0.1)$-approximate minimum, where $N'$ is the number of points in $P'$. Observe that $N'=N+M+1+(M+N)^3\leq N^{a}$.
Thus, the degree is $r$-approximate minimum where
$r\leq (\frac{1}{2a^{\kappa_\varepsilon}}\cdot  \left(\log(N^a)\right)^{\kappa_\varepsilon}-0.1)=\frac{1}{2}\cdot \log(N)-0.1$.
Now, according to \Cref{lem:solutionForCoveringPointSetsSolvesSetCover}, we can use $H_{P'}$ to efficiently construct a set cover for $(U_G,\mathcal{F}_G)$ that is $\left(r+o(1)+\Theta(\frac{r}{OPT})\right)$-approximate minimum.
Note:
\[
r+o(1)+\Theta\left(\frac{r}{OPT}\right)
\leq r+o(1)+\Theta\left(\frac{r}{n^{O(\log\log(n))}}\right)\leq r+o(1)
\]
\[
=\frac{1}{2}\cdot \log(N)-0.1+o(1)\leq \frac{1}{2}\cdot \log(N)-0.01
\]
Thus, we have an efficient algorithm that produces a $\left(\frac{1}{2}\cdot \log(N)-0.01\right)$-approximate minimum solution.
\end{proof}
\subsubsection{Hardness in $\Theta(\log(n))$ Dimensions}
\begin{theorem}
For every constants $k\geq 2,\varepsilon >0$, there exists $C(k)$, a function of $k$, such that unless $P=NP$, there does not exist a polynomial-time algorithm that, given a set $P$ of $N$ points in 
$(C(k)\cdot \log(N))$-dimensional Euclidean space, returns a navigable graph on $P$ with average degree $\leq (k-1.5-\varepsilon) \cdot \OPT{}$, where $\OPT{}$ is the minimum average degree of any navigable graph on $P$ with respect to the Euclidean metric.
\end{theorem}
\begin{proof}
We assume such an algorithm exists, and then use it to construct a polynomial time algorithm that produces a $(k-1-\varepsilon)$-approximate solution for the $E_k$-vertex-cover problem, which according to section 5 of \cite{DinurGuruswami:2005} would imply $P=NP$.
Let $k\geq2$, $\varepsilon>0$, and set $\varepsilon'=1.5\varepsilon>\varepsilon$.
Let $G=\left(V,E\right)$ be an $E_{k'}$-vertex-cover instance that is hard to $(k'-1-\varepsilon')$-approximate for $k'=2k$ as constructed in the reductions proving hardness of approximation for hypergraph vertex cover by \cite{DinurGuruswami:2005}. Let $n$ denote the number of vertices in $G$, and $m$ denote the number of edges.
Now, consider the set cover instance induced by $G$: $\left(U_{G}=E,\mathcal{F}_{G}=V\right)$ in which for every $e\in E$ and $v\in V$ we say that $e\in v$ if and only if $v$ is adjacent to $v$, it has $m$ elements and $n$ sets. Further, the frequency of every element is exactly $k'$. Note that $V^{\prime}\subseteq V$ is a vertex cover for $G$ if and only if it is a set cover for $\left(U_{G},\mathcal{F}_{G}\right)$. Now, we construct $P=\varphi\left(U_{G},\mathcal{F}_{G},k'\right)$ and apply a deterministic polynomial-time JL transform \cite{DadushDaniel:SparseJL} to the result with distortion parameter $\delta=\frac{1}{40k^{2}}$, and obtain a projected point set $P'\in\mathbb{E}^{\log(N)\cdot \Theta(k^4)}$, i.e in $\log(N)\cdot C(k)$ dimensions where $C(k)$ is a constant depending on $k$, and $N$ is the number of points in $P'$. Note that $N=m+n+\left(m+n\right)^{3}+1$, so $N$ is  polynomial in $\left(m,n\right)$. According to \Cref{lem:distortion_resistant_covering_point_set}, since the frequency of every element in $U_{G}$ is exactly $k'$, $P'$ is a covering point set for $\left(U_{G},\mathcal{F}_{G}\right)$ with respect to the Euclidean distance metric.
We can now apply the hypothesized algorithm to $P'$ to obtain a navigable graph (in time polynomial in $N$, and thus in $m,n$) $H_{P'}$ whose degree is $r$-approximate minimum where $r=k-1.5-\varepsilon$. Note that according to \Cref{lem:solutionForCoveringPointSetsSolvesSetCover} we can then use $H_{P'}$ to construct a set cover for $\left(U_{G},\mathcal{F}_{G}\right)$ whose average degree is $\left(r+o\left(1\right)+\Theta\left(\frac{r}{\OPT{}}\right)\right)$-approximate minimum. Observing the proof of \Cref{lem:solutionPointsMakeSolutions} and specifically \Cref{eq:reductionApproxRatio}, we can actually conclude that $H_{P'}$ can be used to efficiently compute a $\left(2r+2+o\left(1\right)\right)$-approximate minimum set cover, because:
\[
\left(r+o\left(1\right)+\frac{1}{\OPT{}}\right)\cdot\left(\OPT{}+1\right)=\left(r+o\left(1\right)+\frac{1}{\OPT{}}+\frac{r}{\OPT{}}+\frac{1}{\OPT{}^{2}}\right)\cdot \OPT{}\leq\left(2r+o\left(1\right)+2\right)\cdot \OPT{}
\]
Overall, we have a $\left(2r+o\left(1\right)+2\right)$-approximate minimum set cover for $(U_G,\mathcal F_G)$. 
Note that:
\[
2r+o\left(1\right)+2=2k-3-2\varepsilon+o\left(1\right)+2=k'-1-2\varepsilon+o\left(1\right)\leq k'-1-\varepsilon'
\]
Thus, we have a $(k'-1-\varepsilon')$-approximate minimum solution in polynomial time for the hard $E_{k'}$-vertex cover instance we started with, which according to \cite{DinurGuruswami:2005} implies $P=NP$.
\end{proof}

\subsubsection{Concluding Remarks}
We conjecture that the problem of constructing minimum navigable graphs is $\Theta(\log(n))$-hard to approximate even when restricting to polylogarithmic-dimensional Euclidean spaces. We note that this is related to a conjecture previously presented in the literature - Conjecture 6.1 of \cite{DinurVenkatesanKhot}. If that conjecture is proven, the same techniques we used in our reductions can likely be applied to its hard instances to
show a hardness of approximation factor of $\Theta(\log(n))$ in polylogarithmic-dimensional Euclidean space.

\section{Applications to Other Graph Construction Problems}
\label{sec:generic_applications}
Beyond constructing navigable graphs, in this section we argue that sublinear-time set cover results, including \Cref{thm:general_sublinear} or results from prior work \cite{IndykMahabadiRubinfeld:2018}, can also be used to obtain fast constructions for $\alpha$-shortcut reachable and $\tau$-monotonic graphs, which have similarly received recent attention due to their applications in nearest neighbor search \cite{SubramanyaDevvritKadekodi:2019,IndykXu:2023,PengChoiChan:2023}.
As discussed in \Cref{sec:nav_graph_set_cover}, both of these graph construction problems can be formulated as $n$ simultaneous set-cover instances, one for each point in a dataset $P = \{p_1, \ldots, p_n\}$.

As before, for the instance corresponding to node $p_i$, the elements that need to be covered are all other points $p_k$, $k \neq i$. We also have a set corresponding to each $p_j$, $j \neq i$.
For $\alpha$-shortcut reachability we will call this set $A_{i\rightarrow j}$, and for $\tau$-monotonicity, we'll call it $T_{i\rightarrow j}$.
$A_{i\rightarrow j}$ contains any point $p_k$ such that $d(p_j,p_k) < \frac{1}{\alpha} d(p_i,p_k)$, and $T_{i\rightarrow j}$ contains any point $p_k$ such that $d(p_j,p_k) < d(p_i, p_k) - \tau$, as well as $p_j$ itself.\footnote{If $d(p_i, p_k) < \tau$, it is not possible to find a $j$ with $d(p_j,p_k) < d(p_i, p_k) - \tau$. The definition of $\tau$-monotonicity requires that, in this case, $p_i$ is directly connected to $p_k$. Equivalently, we can add $p_k$ to $T_{i\rightarrow k}$ to ensure that an edge to $p_k$ covers $p_k$.}
A graph is $\alpha$-shortcut reachable (resp. $\tau$-monotonic) if and only if for each $p_i$, the collection $A_{i\rightarrow j}$ (resp. $T_{i \to j}$), taken over outneighbors $j$, forms a set cover for $p_i$'s instance.

To leverage \Cref{thm:general_sublinear}, we need to provide efficient implementations of the $\MemberOf$, $\FreqOf$, and $\SetOf$ queries for the corresponding set cover problems. $\MemberOf$ queries are trivial -- checking if a point is contained in $A_{i\rightarrow j}$ or $T_{i\rightarrow j}$ just requires a single distance comparison. 
However, we need to slightly modify our approach to FreqOf and SetOf queries. As in \Cref{clm:sublinear_access}, we build a distance-based permutation matrix in $O(n^2 \log n)$ time. However, we no longer answer a $\FreqOf(p_i,p_j)$ query by returning $\pi_j^{-1}(i) - 1$. Instead, we precompute the answer to FreqOf for every query pair $p_i, p_j$ and store them in a lookup table. To efficiently compute the answers to FreqOf queries, we make the following observation: as for the standard navigability problem, all $A_{i \to j}$ (or equivalently $T_{i \to j}$) containing $p_k$ are given by a prefix of $\pi_k$. We can find the length of this prefix via binary search on $\pi_k$ for the rightmost $j$ satisfying $d(p_j, p_k) < \frac{1}{\alpha}d(p_i, p_k)$ (or analogously $d(p_j, p_k) < d(p_i, p_k) - \tau$ for $\tau$-monotonicity). A binary search suffices here because $\frac{1}{\alpha}d(p_i, p_k)$ is fixed for a given pair $p_i, p_k$, and $\pi_k$ stores all $j$ in sorted order by $d(p_j, p_k)$. The prefix length obtained from this binary search gives us the answer to the corresponding FreqOf query. Thus, we can precompute the answers to all $n^2$ possible FreqOf queries in $O(n^2 \log n)$ time. These FreqOf queries then trivially give us SetOf queries.

With $\MemberOf$, $\FreqOf$, and $\SetOf$ now available in $O(1)$ time, we can immediately apply \Cref{thm:general_sublinear} to obtain our main corollary on $\alpha$-shortcut reachable and $\tau$-monotonic graphs.

\begin{proof}[Proof of \Cref{cor:alpha-tau}]
Consider first the $\alpha$-reachability problem. Let $\OPT_1^\alpha, \ldots, \OPT_n^\alpha$ denote the optimum value for each minimum set cover instance, so that the minimum total number of edges in $G$ is $\OPT^\alpha = \sum_{i=1}^n \OPT_i^\alpha$. Applying \Cref{thm:general_sublinear}, we obtain an $O(\log n)$-approximation to every set cover instance in time:
\begin{align*}
\tilde{O}\left(\sum_{i=1}^n \min\left(n\cdot \OPT_i^\alpha, \frac{n^2}{\OPT_i^\alpha}\right)\right) \leq \tilde{O}\left(\sum_{i=1}^n \min\left(n\cdot \OPT_i^\alpha, n^{1.5}\right)\right) = \tilde{O}\left(\min\left(n \cdot \OPT^\alpha, n^{2.5}\right)\right)
\end{align*} 
The inequality follows from noting that one of $n^2/\OPT_i^\alpha$ or $n\cdot OPT_i^\alpha$ is always less than $n^{1.5}$.

The proof for $\tau$-monotonicity is identical. We note that for both problems we obtain the strong guarantee of nearly minimizing the degree of \emph{every node simultaneously}, so we return a graph with both average degree and maximum degree within $O(\log n)$ of optimal. 
\end{proof}

As a final note, we point out that the techniques that we applied in Algorithm~\ref{alg:mng-full} to improve on $\tilde O(n^{2.5})$ to $\tilde O(n^2)$ for the navigable graph problem do not directly apply to either $\alpha$-shortcut reachability or $\tau$-monotonicity  (see \Cref{sec:prelims} for a summary of these techniques). In particular, adding random edges or fixed cliques is not guaranteed to significantly reduce the number of uncovered elements across all $n$ set cover instances for these stronger versions of navigability. It is interesting to ask if it is possible to further improve on \Cref{cor:alpha-tau}, possibly with other techniques.

\section*{Acknowledgements} CM was supported by NSF Award IIS-2106888. LD and RW were supported by NSF awards CCF-2403235 and CNS-231719.

\bibliographystyle{plain}
\bibliography{references}
\appendix
\section{Sublinear Set Cover Algorithm}

\newcommand{\AddSet}{\textsc{AddSet}\xspace}
\newcommand{\GenVoteCover}{\textsc{GeneralVoteCover}\xspace}
\newcommand{\EarlyTermGVC}{\textsc{EarlyTerminationGVC}\xspace}
\newcommand{\LimVoteCover}{\textsc{LimitedVoteCover}\xspace}
\newcommand{\LazyETGVC}{\textsc{LazyEarlyTerminationGVC}\xspace}
\newcommand{\LazyLimVoteCover}{\textsc{LazyLimitedVoteCover}\xspace}

\label{sec:sublinear-set-cover}
In this section, we prove that the techniques presented in \Cref{sec:main_algo} can be applied to the more general problem of solving set cover in sublinear time. In particular, we show:
\begin{theorem}
    \label{thm:general_sublinear}
Suppose that we have an unweighted minimum set cover instance with $n$ elements, $m$ sets, and minimum value $\OPT$, and that we can issue SetOf and Membership queries in $O(1)$ time. Then, there is a randomized simulation of the greedy set cover algorithm that, with high probability, produces an $O(\log n)$ approximation to the set cover instance in 
$\tilde{O}\left(\min\left(m + n\cdot \OPT, \frac{mn}{\OPT}\right)\right)$
 time.
\end{theorem}
We stress that, while our approach is new, the bound of \Cref{thm:general_sublinear} is not: as discussed in \cite{IndykMahabadiRubinfeld:2018}, an unpublished modification of \cite{KoufogiannakisYoung:2014} combined with another algorithm from \cite{IndykMahabadiRubinfeld:2018} can achieve the same runtime \cite{Vakilian:2025}. Among more easily accessible prior results, 
\Cref{thm:general_sublinear} can be compared to Theorem 4.4 in \cite{IndykMahabadiRubinfeld:2018}, which for any $\alpha > 1$, obtains an $O((\alpha + 1) \log n)$ approximation in almost the same time, but with $m + n\cdot \OPT$ replaced with $m\left(\frac{n}{\OPT}\right)^{1/\alpha} + n\cdot \OPT$.\footnote{Technically, \cite{IndykMahabadiRubinfeld:2018} combines SetOf queries with ``Element Of'' (\EltOf) queries instead of Membership queries. While these queries are incomparable (neither is stronger than the other), the algorithms from  \cite{KoufogiannakisYoung:2014} and \cite{IndykMahabadiRubinfeld:2018} can be easily modified to work with Membership queries, and our algorithm can be modified to work with EltOf queries without changing the runtime.} The fact that these previous results can be matched with a greedy-type algorithm may be of interest.  Indeed, our approach is similar to graph construction algorithms used in practice: the ``slow preprocessing'' variant of the Vamana/DiskANN also constructs a navigable graph by greedily building a set cover for each node, although a heuristic rule is used to select which set to added to the current solution at each step \cite{SubramanyaDevvritKadekodi:2019}. 

Before proving \Cref{thm:general_sublinear}, we first review the general task. The problem of finding a minimum set cover is a classic \npcomplete problem that has been studied in a variety of settings.
The input to a set cover problem includes a \defn{set system}, which is a pair $(U=\{e_1,...,e_n\},\mathcal{F}=\{S_1,...,S_m\})$ where each $S_i\subseteq U$.
The output to the minimum set cover problem is a minimum subset of $\mathcal{F}$ which includes a set covering each $e_i\in U$. 
While set cover is \npcomplete, the greedy algorithm which chooses the set covering the maximum number of remaining elements, is a $O(\log n)$ approximation which runs in $O\left(\sum_{i=1}^m\left|S_i\right|\right)$ time.
Unless $\p=\np$, this is essentially the best possible approximation ratio for a polynomial-time algorithm~\cite{DinurSteurer:2014}.

Our results are in a sublinear model of the problem, where we do not explicitly interact with $U$ and $\mathcal{F}$, but rather through specific queries which we assume can be performed in constant time.
Specifically, we assume access to the \SetOf query described in Section~\ref{sec:prelims}, which returns for a given element $e \in U$ all sets in $\mathcal{F}$ containing $e$. In addition, we assume access to the related \EltOf query, which returns for a given set $S \in \mathcal{F}$ all elements contained in $S$. Note that with $O(\log n)$ calls to $\EltOf$, we can also evaluate a $\FreqOf$ queries for each element. 


We now present an emulation of the greedy algorithm in the sublinear model, then progressively modify it to achieve two algorithms which will together provide a proof for \cref{thm:general_sublinear}.

\subsection{The greedy algorithm in the sublinear model}
We first present an implementation of the classical greedy algorithm in the sublinear model, \Cref{alg:GreedySetCover}, which we will later use as a subroutine.

\begin{algorithm}
\caption{GreedySetCover($U,\mathcal{F}$)}
\label{alg:GreedySetCover}
\begin{algorithmic}[1]
\State $C\gets \varnothing$.
\State Initialize an empty max priority queue $P$.
\ForEach{$S\in \mathcal{F}$}
\State Compute $|S|$ using \EltOf queries.
\State Insert $S$ into $P$ using $|S|$ as the priority key.
\EndFor
\While{$U\neq \varnothing $}
\State Pop the top priority set $S$ from $P$.
\State $U\gets U\setminus S, C\gets C\cup \{S\}$.
\State Enumerate the elements in $S$ using \EltOf queries.
\ForEach{$e\in S$}
\State Enumerate all sets $S^\prime$ such that $e\in S^\prime$ using $\SetOf$ queries.
\State For each such set $S^\prime$, decrease the key of $S^\prime$ in $P$ by $1$. 
\EndFor
\EndWhile
\State \Return{C}
\end{algorithmic}
\end{algorithm}

\begin{lemma}
    \Cref{alg:GreedySetCover} runs in $O\left(\sum_{S\in\mathcal{F}}|S|\cdot \log(m)\right)$ time, and achieves a $\ln(n)$ approximation ratio.
\end{lemma}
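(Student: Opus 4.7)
The plan is to analyze the approximation ratio and the running time separately; both arguments reduce to classical facts once the right invariant is set up.

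\medskip
For the approximation ratio, I would argue that \Cref{alg:GreedySetCover} faithfully emulates the standard offline greedy set cover algorithm. The key invariant is that, at the start of each iteration of the \textbf{while}-loop, the key of every $S \in \mathcal{F}$ in the priority queue equals $|S \cap U|$, the number of currently uncovered elements in $S$. This holds initially since $U$ equals the ground set and $|S|=|S\cap U|$. When a set $S$ is popped and its elements are removed from $U$, the only sets whose intersection with $U$ shrinks are those sharing elements with $S$, and by exactly the number of newly covered elements they contain; the inner \textbf{for}-loop over $e \in S$ performs precisely these decrements, provided we decrement only for $e$ that were in $U$ at the time of the pop. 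With the invariant in hand, the set popped in each iteration has the largest $|S \cap U|$, so the output is identical to what the classical greedy algorithm would select. The $\ln n$ approximation then follows from the textbook analysis: at each iteration some set in an optimal cover covers at least a $1/\OPT$ fraction of the remaining elements, so greedy does as well, and after $\OPT \ln n$ iterations nothing is left uncovered.

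\medskip
For the running time I would decompose the work into four buckets. The preprocessing \textbf{for}-loop computes $|S|$ for every $S \in \mathcal{F}$ using \EltOf queries, costing $O\bigl(\sum_{S\in\mathcal{F}} |S|\bigr)$ time, and the $m$ insertions into the priority queue add $O(m \log m)$. Each iteration of the outer \textbf{while}-loop pops one set in $O(\log m)$ time, contributing at most $O(m \log m)$ overall. The dominant term comes from the decrement work: each element $e$ is covered exactly once, and when it is first covered the algorithm enumerates all sets containing $e$ via \SetOf queries and performs a decrease-key for each. By double counting, the total number of such operations across the entire execution is
\[
\sum_{e \in U} \mathrm{freq}(e) \;=\; \sum_{S\in\mathcal{F}} |S|,
\]
and each decrease-key costs $O(\log m)$, giving $O\bigl(\sum_{S\in\mathcal{F}} |S|\cdot \log m\bigr)$ time. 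Since $m \le \sum_{S\in\mathcal{F}} |S|$ whenever no set is empty, this term absorbs the earlier $O(m\log m)$ and $O(\sum_{S} |S|)$ contributions, yielding the claimed bound.

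\medskip
The main subtlety, and the only nontrivial step in the plan, is keeping the priority-queue keys in sync with $|S \cap U|$: as written, the inner loop enumerates every $e \in S$, including those already covered by a prior iteration, which would over-decrement the keys and break the invariant. The natural fix is to maintain an $O(1)$-time ``in-$U$'' bit per element and skip $e$ whenever $e \notin U$ before issuing the \SetOf query. This preserves the invariant and does not affect the accounting, because we only pay the \SetOf-plus-decrease-key cost for an element the first (and only) time it becomes covered. With this bookkeeping in place, combining the invariant-based approximation argument with the four-bucket runtime decomposition establishes the lemma.
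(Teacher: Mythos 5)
Your proposal is correct and follows essentially the same route as the paper: emulate classical greedy via a priority queue, charge pops at $O(m\log m)$, and charge decrease-keys by double counting $\sum_{e}\mathrm{freq}(e)=\sum_{S\in\mathcal{F}}|S|$. The one place you go beyond the paper is worth noting: you correctly observe that the pseudocode as written enumerates \emph{all} of $S$, including already-covered elements, which would both over-decrement keys (breaking the greedy invariant) and inflate the runtime to $\sum_{S\text{ popped}}\sum_{e\in S}\mathrm{freq}(e)$; the paper's proof silently assumes each element is charged only once, whereas your explicit ``in-$U$'' bit makes that accounting (and the correctness of the emulation) airtight.
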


\begin{proof}
    The algorithm emulates the greedy algorithm in the standard model and therefore obtains the same approximation ratio.
    We use a standard binary heap with $O(\log n)$-time operations as the priority queue.
    The running time is determined by the time spent performing heap operations, as all other costs are linear in $n,m$.
    The algorithm performs at most $m$ pop operations on the heap, which requires $O(m\cdot \log(m))$ time. 
    For each set $S$ that is popped from the heap the algorithm perform $|S|$ decrease key operations, and each set is popped at most once.
    Thus, the total number of decrease key operations is bounded by $\sum_{S\in\mathcal{F}}|S|$, and thus requires at most $O\left(\sum_{S\in\mathcal{F}}|S|\cdot \log(m)\right)$ time. 
    The total running time is therefore bounded by $O\left(\sum_{S\in\mathcal{F}}|S|\cdot \log(m)\right)$.
\end{proof}

\subsection{The voting algorithm}
\label{subsec:vote-cover-analysis}
As in \Cref{alg:mng-cover}, the first approach to improving the run time of the greedy algorithm is to approximate the greedy choice made at each step by using voting.
To do this, the algorithm samples a random element from the pool of unsampled uncovered elements (prospective voters), $N$, which then votes for all sets it is an element of.
The element is added to the set of active voters, $V$.
Whenever a set reaches a logarithmic number of votes, the algorithm chooses it as the next set, which involves enumerating all of its elements using membership queries, removing them from $P$ and $V$, as well as removing any votes they may have cast.
This process continues until $N$ is empty, at which point the set of uncovered elements is $V$, so the residual instance is relatively small, and we can just use the greedy algorithm.

\begin{algorithm}
\caption{GeneralVoteCover($U,\mathcal{F}$)}
\label{alg:ConstructVoteCover}
\begin{algorithmic}[1]
\State $N\gets U$, $V \gets \varnothing$, $C\gets \varnothing$
\While {$N\neq\varnothing$}
\State Sample a random element from $N$, denoted by $j$.
\State $N\gets N\setminus{j}$
\State $V\gets V\cup{j}$
\State Use membership queries to enumerate all sets in $\mathcal{F}$ that $j$ is a member of, denote this set by $\mathcal{F}[j]$.
\ForEach{$S\in F[j]$}
\State Increment the vote count for $S$.
\If{the number of votes for $S$ is $100\cdot(\log(m)+\log(n))$}.
\State $\AddSet(S)$
\EndIf
\EndFor
\EndWhile
\State $C^\prime\gets  GreedySetCover(V,\{S_{\overline{C}}=S\cap U_{\overline{C}}:S\in \mathcal{F}\setminus C\})$
\State $C\gets C\cup C^\prime$
\State \Return{$C$}
\vspace{0.5em}
\hrule
\vspace{0.5em}
\Function{\AddSet}{$S$}
    \State Use membership queries to enumerate all elements in $S$.
    \ForEach{$j\in S$}
    \If{$j \in V$}
    \State Use membership queries to enumerate all sets in $\mathcal{F}$ that $j$ is a member of.
    \State For each such set, decrement the vote count.
    \State Remove $j$ from $V$. 
    \Else
    \State Remove $j$ from $N$.
    \EndIf
    \EndFor
    \State $C\gets C\cup\{S\}$
\EndFunction
\end{algorithmic}
\end{algorithm}

\paragraph{Notation.}
For ease of presentation we introduce the following notation: $U_{\overline{C}}=U\setminus \left(\bigcup_{S\in C}S\right)$, i.e we let $U_{\overline{C}}$ denote the set of all elements in $U$ that are uncovered by the sets in $C$.
We let $\mathcal{F}_{\overline{C}}$ denote the following collection of sets: $\{S_{\overline{C}}=S\cap U_{\overline{C}}:S\in \mathcal{F}\setminus C\}$, i.e the projection of all sets not yet chosen onto $U_{\overline{C}}$.
We let $\mathcal{S}_{\overline{C}}$ denote the following set system: $(U_{\overline{C}},\mathcal{F}_{\overline{C}})$. 
This set system describes the residual set cover instance we've left to solve after choosing the sets in $C$ - all uncovered elements, and the projection of all unchosen sets to the set of uncovered elements.
We denote the number of votes a set $S$ received (i.e the value of its vote counter) by $v(S)$. 
\begin{claim}\label{obs:OPTMaintenance}
If $\mathcal{S}=(U,\mathcal{F})$ has a minimum set cover of size $\OPT$, then for any $C\subseteq \mathcal{F}$, then $\mathcal{S}_{\overline{C}}$ has a set cover which size is at most $\OPT$.
\end{claim}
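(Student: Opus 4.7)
The plan is to exhibit an explicit cover of $\mathcal{S}_{\overline{C}}$ of size at most $\OPT$ by taking the minimum cover of $\mathcal{S}$ and projecting each of its sets onto $U_{\overline{C}}$. Concretely, let $C^* \subseteq \mathcal{F}$ be a minimum set cover of $(U, \mathcal{F})$, so $|C^*| = \OPT$. For each $S \in C^*$, consider its image $S \cap U_{\overline{C}}$ under restriction to the uncovered universe.

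First I would verify that this collection is actually a family of candidate sets in $\mathcal{F}_{\overline{C}}$. Recall $\mathcal{F}_{\overline{C}} = \{S \cap U_{\overline{C}} : S \in \mathcal{F} \setminus C\}$, so for each $S \in C^* \setminus C$ we directly have $S \cap U_{\overline{C}} \in \mathcal{F}_{\overline{C}}$. For any $S \in C^* \cap C$, the set $S \cap U_{\overline{C}}$ is empty by definition of $U_{\overline{C}} = U \setminus \bigcup_{S' \in C} S'$, so such sets contribute nothing and may be discarded. Thus the projected family $C' = \{S \cap U_{\overline{C}} : S \in C^* \setminus C\}$ lies in $\mathcal{F}_{\overline{C}}$ and has cardinality at most $|C^*| = \OPT$.

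Next I would check that $C'$ actually covers $U_{\overline{C}}$. Fix any $e \in U_{\overline{C}}$. Since $U_{\overline{C}} \subseteq U$ and $C^*$ is a cover of $U$, there exists $S \in C^*$ with $e \in S$. Because $e$ is in $U_{\overline{C}}$, $e$ is not covered by any set of $C$, so $S \notin C$, i.e., $S \in C^* \setminus C$. Then $e \in S \cap U_{\overline{C}} \in C'$, as required.

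There is essentially no obstacle: the argument is just a matter of unpacking definitions and handling the minor bookkeeping that sets already in $C$ project to the empty set and can be dropped from the projected cover without penalty. The whole proof should be at most a few lines once the notation $U_{\overline{C}}$ and $\mathcal{F}_{\overline{C}}$ is recalled.
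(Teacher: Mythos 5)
Your proof is correct and is exactly the argument the paper intends: the paper states this claim without proof (treating it as an immediate observation), and the projection of the optimal cover $C^*$ onto $U_{\overline{C}}$, discarding the sets of $C^*\cap C$ whose projections are empty, is precisely the reasoning the paper later uses implicitly (e.g., in the proof of Lemma~\ref{lem:numberOfCallsToAddSet}, where it notes $S'\in\mathcal{F}\setminus C$ since every $A\in C$ satisfies $A\cap U_{\overline{C}}=\varnothing$). Your handling of the bookkeeping is accurate and complete.
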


\subsubsection{Bounding the approximation ratio.}
The key difference between the voting algorithm and the greedy algorithm is that the voting algorithm approximates the largest sets at each step.
Therefore, the key thing to prove in order to bound the approximation ratio is that these approximations are sufficient. We begin by observing that \Cref{alg:ConstructVoteCover} maintains the following invariants:

\begin{claim}
\label{genAlgorithmInvariants}
The main while loop in \GenVoteCover maintains the following invariants:
\begin{enumerate}
    \item $N\cup V=U_{\overline{C}}$,
    \item $V$ is a set of random, uniformly distributed elements in $U_{\overline{C}}$,
    \item For every set $S$: $v(S)=\lvert V\cap S_{\overline{C}}\rvert$, and
    \item For every set $S$: $v(S)\leq 100(\log(m)+\log(n))$.
\end{enumerate}
\end{claim}

Next we show that the set chosen at each step is the largest set up to constant factors with high probability.

\begin{lemma}\label{lem:approxGreedyPicksApproxLargestSet}
With probability at least $1-\frac{1}{m^{28}\cdot n^{28}}$, in every call to $\AddSet(S)$, $S_{\overline{C}}$ is approximately the largest set in $\mathcal{F}_{\overline{C}}$, more specifically:
\[
\forall S^\prime \in F\setminus C: \lvert S_{\overline{C}}\vert \geq\frac{1}{8}\cdot \lvert S^\prime_{\overline{C}} \rvert 
\]
\end{lemma}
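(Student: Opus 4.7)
The plan is to adapt the Chernoff-based argument of \Cref{clm:mng-cover:high-coverage} to the sublinear set cover setting, where the vote threshold $100(\log m + \log n)$ replaces $15 \ln n$ to accommodate the fact that $m$ need not equal $n$. The key structural fact is invariant~(2) of \Cref{genAlgorithmInvariants}: conditioned on any history that produces a given partial cover $C$, the current voter pool $V$ is a uniform random sample (without replacement) from $U_{\overline{C}}$. Combined with invariant~(3), this means that for every set $S^\prime$, the vote count $v(S^\prime) = |V \cap S^\prime_{\overline{C}}|$ is a hypergeometric random variable with mean $|V| \cdot |S^\prime_{\overline{C}}|/|U_{\overline{C}}|$, to which standard Chernoff bounds apply.

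First, I would fix a particular stage $\ell$ (the $\ell$-th call to $\AddSet$) and an arbitrary ordered pair $(S, S^*)$ of sets, and bound the probability of the ``bad event'' that at this stage $\AddSet(S)$ is called while $S^* \in \mathcal{F} \setminus C$ satisfies $|S^*_{\overline{C}}| > 8 |S_{\overline{C}}|$. To analyze this event cleanly, I would condition on the entire history prior to stage $\ell$, which fixes $C$ and hence the densities $p_S := |S_{\overline{C}}|/|U_{\overline{C}}|$ and $p_{S^*} := |S^*_{\overline{C}}|/|U_{\overline{C}}|$; the remaining randomness is the uniform order in which elements of $U_{\overline{C}}$ enter $V$ during stage $\ell$.

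Next, I would set a threshold voter-pool size $T$, chosen so that $T p_S$ is a constant factor below the vote threshold $100(\log m + \log n)$, and split into two cases depending on whether $|V| \leq T$ or $|V| > T$ at the moment $\AddSet(S)$ is called. In the first case, $\mathbb{E}[v(S)] \leq T p_S$ is a constant factor below the vote threshold, and a Chernoff upper tail for the hypergeometric distribution bounds the probability that $v(S)$ actually reaches $100(\log m + \log n)$ by $(mn)^{-c}$ for a large constant $c$. In the second case, $\mathbb{E}[v(S^*)] > 8 T p_S$ is a constant factor \emph{above} the vote threshold, while invariant~(4) of \Cref{genAlgorithmInvariants} forces $v(S^*) \leq 100(\log m + \log n)$; a Chernoff lower tail again bounds this case by $(mn)^{-c}$. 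The large constant $100$ appearing in the vote threshold is exactly what makes both exponents negative enough.

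Finally, I would apply a union bound over the at most $m$ possible stages and the at most $m^2$ ordered pairs $(S, S^*)$, yielding total failure probability at most $2 m^3 (mn)^{-c} \leq 1/(m^{28} n^{28})$ for $c$ chosen sufficiently large. The main obstacle is that $|V|$ at the instant $\AddSet(S)$ is invoked is itself a stopping time depending on the very randomness being analyzed, so one cannot directly apply Chernoff at a pre-specified $|V|$. The case split on $|V| \leq T$ versus $|V| > T$ resolves this precisely as in \Cref{clm:mng-cover:high-coverage}: regardless of where the stopping time falls relative to $T$, one of the two Chernoff bounds rules out the bad event.
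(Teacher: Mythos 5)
Your proposal is correct and follows essentially the same route as the paper: the voting invariants (that $V$ is a uniform sample of $U_{\overline{C}}$ and $v(S)=|V\cap S_{\overline{C}}|$) reduce the claim to two Chernoff tail bounds, split on whether $|V|$ is below or above a threshold calibrated to the density of $S$, followed by a union bound over pairs of sets and calls to $\AddSet$. If anything, your write-up is more explicit than the paper's own (very terse) proof about the stopping-time issue and the case split, which the paper only spells out in the analogous \Cref{clm:mng-cover:high-coverage}.
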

\begin{proof}
It suffices to bound the probability that at some point in the execution there exists a pair $S,S^\prime \in \mathcal{F}\setminus C$ such that $v(S)= 100(\log(m)+\log(n))$ and $\lvert S_{\overline{C}}\vert <\frac{1}{8}\cdot \lvert S^\prime_{\overline{C}} \rvert$.
Assume that during the execution of the algorithm, such a pair exists.
By \Cref{genAlgorithmInvariants}, $v(S^\prime)\leq 100(\log(m)+\log(n))$ which in turn implies, $\lvert V\cap S_{\overline{C}}\rvert\geq 100\cdot(\log(m)+\log(n))$ while $\lvert V\cap S_{\overline{C}}^\prime\rvert\leq100 (\log(m)+\log(n))$.
Because $V$ is uniformly random over $U_{\overline{C}}$, we can apply a standard Chernoff bound to conclude the probability of $\lvert V\cap S_{\overline{C}} \rvert$ exceeding $100\cdot (\log(m)+\log(n))$ is at most $\frac{1}{m^{30}\cdot n^{30}}$.
\end{proof}
\noindent We now can use this to bound the number of sets added in the main loop.
\begin{lemma}\label{lem:numberOfCallsToAddSet}
With probability at least $1-\frac{1}{m^{28}\cdot{n^{28}}}$, the algorithm calls $\AddSet(S)$ at most $O(\OPT\cdot \log(n))$ times.
\end{lemma}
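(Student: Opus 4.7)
The plan is to combine the approximate-greedy guarantee of \Cref{lem:approxGreedyPicksApproxLargestSet} with the invariance of $\OPT$ from \Cref{obs:OPTMaintenance} to run the standard geometric shrinkage argument used in the analysis of greedy set cover. All the high-probability error comes from a single union bound, absorbed into the $1/(m^{28}n^{28})$ slack already in \Cref{lem:approxGreedyPicksApproxLargestSet}.

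First, I would condition on the high-probability event of \Cref{lem:approxGreedyPicksApproxLargestSet}, which already covers \emph{every} call to $\AddSet$ simultaneously with failure probability at most $1/(m^{28}n^{28})$. Under this event, whenever $\AddSet(S)$ is invoked, $|S_{\overline{C}}| \geq \tfrac{1}{8}\,\max_{S' \in \mathcal{F}\setminus C}|S'_{\overline{C}}|$. By \Cref{obs:OPTMaintenance}, the residual system $\mathcal{S}_{\overline{C}}$ always admits a cover of size at most $\OPT$, so by averaging there exists some $S' \in \mathcal{F}\setminus C$ with $|S'_{\overline{C}}| \geq |U_{\overline{C}}|/\OPT$. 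Combining, the set chosen by $\AddSet$ satisfies $|S_{\overline{C}}| \geq |U_{\overline{C}}|/(8\,\OPT)$.

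Next, let $U_t$ denote the value of $U_{\overline{C}}$ after the $t$-th call to $\AddSet$. The previous paragraph gives $|U_{t+1}| \leq |U_t|\bigl(1 - \tfrac{1}{8\,\OPT}\bigr)$, so starting from $|U_0| \leq n$, after $t = 8\,\OPT \cdot \ln(n^{30}m^{30})$ iterations we obtain
\begin{equation*}
|U_t| \leq n\left(1 - \frac{1}{8\,\OPT}\right)^{t} \leq n\cdot e^{-\ln(n^{30}m^{30})} < 1,
\end{equation*}
so $U_{\overline{C}} = \emptyset$.

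Finally, by invariant (1) in \Cref{genAlgorithmInvariants}, $N \cup V = U_{\overline{C}}$, so $U_{\overline{C}} = \emptyset$ forces $N = V = \emptyset$, terminating the main while loop before any further $\AddSet$ calls can be triggered. Hence the total number of $\AddSet$ calls is at most $8\,\OPT \cdot \ln(n^{30}m^{30}) = O(\OPT \cdot \log n)$, as required. The only non-routine step is recognizing that the union bound over all $\AddSet$ calls has already been done inside \Cref{lem:approxGreedyPicksApproxLargestSet}, so no additional probabilistic loss is incurred; the rest is a direct adaptation of the textbook greedy set-cover analysis with an extra factor of $8$ absorbed into the constant.
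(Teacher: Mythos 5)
Your proof is correct and follows essentially the same route as the paper's: use the optimal cover together with \Cref{obs:OPTMaintenance} to find a residual set of size at least $|U_{\overline{C}}|/\OPT$, apply \Cref{lem:approxGreedyPicksApproxLargestSet} to conclude each chosen set covers a $\tfrac{1}{8\OPT}$ fraction of what remains, and run the standard geometric decay argument. One cosmetic remark: your threshold $t = 8\,\OPT\ln(n^{30}m^{30})$ is $O(\OPT(\log n + \log m))$ rather than $O(\OPT\log n)$ as literally claimed; taking $t = 8\,\OPT\ln n + 1$ in the very same computation already drives $|U_t|$ below $1$ and gives the stated bound.
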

\begin{proof}
    Consider some optimal solution $\mathcal{F}^\prime \subseteq \mathcal{F}$ to the input set cover instance, and denote its size by $\OPT$.
    Note that for every subset $U^\prime \subseteq U$, since $\mathcal{F}^\prime$ covers $U^\prime$ there exists a set $S^\prime \in \mathcal{F^\prime}$ that covers a $\frac{1}{\OPT}$-fraction of all elements in $U^\prime$.
    Consider a call to $\AddSet(S)$: We know that there exists some set in $S^\prime\in \mathcal{F^\prime}$ that covers a $\frac{1}{\OPT}$ fraction of all elements in $U_{\overline{C}}$, meaning $\lvert S^\prime_{\overline{C}}\rvert \geq \frac{1}{\OPT}\cdot \lvert U_{\overline{C}}\rvert$.
    Note that $S^\prime\in \mathcal{F}\setminus C$ since for every set in $A\in C$ we have $A\cap U_{\overline{C}}=\varnothing$.
    By \Cref{lem:approxGreedyPicksApproxLargestSet}, with probability of at least $\frac{1}{m^{28}\cdot n^{28}}$ we have $\lvert S_C\rvert\geq \frac{1}{8}\lvert S^\prime_C\rvert$ and thus $\lvert S_C\rvert \geq \frac{1}{8\cdot \OPT}\cdot \lvert U_C\rvert$.
    So each call to $\AddSet$ covers a $\frac{1}{8\cdot \OPT}$ fraction of the remaining uncovered elements with high probability.
    Therefore, at most $8\ln{n}\cdot \OPT$ calls can be made.
\end{proof}
\noindent Finally, we observe that the call to the greedy algorithm cannot add more then $\ln(n) \cdot \OPT$ sets, so we have:
\begin{claim}\label{lem:cvcApproximationRatio}
    With probability at least $1-\frac{1}{m^{28}\cdot{n^{28}}}$, \Cref{alg:ConstructVoteCover} returns a $9\ln(n)$-approximate minimum set cover.
\end{claim}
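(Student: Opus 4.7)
The plan is to combine the bound on the number of sets added during the main voting loop with the standard approximation guarantee of the greedy algorithm applied to the residual instance. Concretely, \Cref{lem:numberOfCallsToAddSet} already shows that with probability at least $1-1/(m^{28}n^{28})$, the number of calls to $\AddSet$ during the main while loop is at most $8\OPT\ln(n)$, so $|C|\le 8\OPT\ln(n)$ by the time the while loop terminates.

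Next, I would argue that the invocation of $GreedySetCover$ on the residual instance adds at most $\OPT\ln(n)$ further sets. By invariant~1 of \Cref{genAlgorithmInvariants}, once the main loop terminates we have $N=\emptyset$ and hence $V=U_{\overline{C}}$, so the call to $GreedySetCover$ is exactly a set cover instance on the residual system $\mathcal{S}_{\overline{C}}$. By \Cref{obs:OPTMaintenance}, $\mathcal{S}_{\overline{C}}$ admits a set cover of size at most $\OPT$. Applying the classical $\ln|V|$-approximation bound for the greedy set cover algorithm to this residual instance, it returns a cover $C'$ of size at most $\ln(|V|)\cdot \OPT \le \ln(n)\cdot \OPT$.

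Summing the two contributions, the total output $C\cup C'$ has size at most $8\OPT\ln(n)+\OPT\ln(n)=9\OPT\ln(n)$ on the high-probability event from \Cref{lem:numberOfCallsToAddSet}. Since $C\cup C'$ is a valid cover of $U$ (the sets in $C$ cover $U\setminus V$ and $C'$ covers $V$), this yields the claimed $9\ln(n)$-approximation with probability at least $1-1/(m^{28}n^{28})$.

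There is essentially no serious obstacle here: the only nontrivial ingredients are (i) \Cref{lem:numberOfCallsToAddSet} for the voting phase, which is already proved, and (ii) the standard fact that the greedy algorithm on a set cover instance with optimum $\OPT$ uses at most $\ln(n)\cdot \OPT$ sets. The only mild subtlety is ensuring the greedy call is genuinely run on the residual system $\mathcal{S}_{\overline{C}}$ rather than on $\mathcal{F}$, which is explicit in the pseudocode of \Cref{alg:ConstructVoteCover} (the sets passed are the projections $S_{\overline{C}}=S\cap U_{\overline{C}}$), so that the ``size at most $\OPT$'' guarantee from \Cref{obs:OPTMaintenance} applies directly.
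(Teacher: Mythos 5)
Your proposal is correct and matches the paper's argument: the paper likewise combines \Cref{lem:numberOfCallsToAddSet} (at most $8\,\OPT\ln(n)$ sets from the voting loop) with the observation that the final greedy call on the residual instance $\mathcal{S}_{\overline{C}}$, whose optimum is still at most $\OPT$ by \Cref{obs:OPTMaintenance}, adds at most $\ln(n)\cdot\OPT$ further sets. The paper states this only as a one-line remark preceding the claim, so your write-up is simply a more explicit version of the same reasoning.
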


\paragraph{Running time analysis.}
To analyze the running time, we first observe that there are three non-linear costs to the algorithm: managing votes, enumerating added sets, and executing the greedy algorithm on any remaining elements. We start by analyzing the cost of managing votes.

\begin{claim}\label{lem:boundOnActiveVoters}
There are at most $\OPT\cdot 100(\log(m)+\log(n))$ elements in $V$ after the main while loop terminates.
\end{claim}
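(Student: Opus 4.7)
The plan is to argue by contradiction using the same union-bound / Chernoff strategy that underlies \Cref{lem:mng-cover:v-size}, but now with the threshold $100(\log m + \log n)$ that triggers $\AddSet$ in \Cref{alg:ConstructVoteCover}. By \Cref{obs:OPTMaintenance}, the residual instance $\mathcal{S}_{\overline{C}}$ always has a set cover of size at most $\OPT$, so, averaging, there exists some $S^* \in \mathcal{F} \setminus C$ with $|S^*_{\overline{C}}| \geq |U_{\overline{C}}|/\OPT$.

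Suppose, for contradiction, that at some point during the execution of the while loop $|V|$ exceeds $100 \cdot \OPT \cdot (\log m + \log n)$. By \Cref{genAlgorithmInvariants}, $V$ is a uniform random sample (without replacement) of $U_{\overline{C}}$, so $|V \cap S^*_{\overline{C}}|$ is a hypergeometric (hence sub-binomial) random variable with mean at least
\[
|V| \cdot \frac{|S^*_{\overline{C}}|}{|U_{\overline{C}}|} \geq \frac{|V|}{\OPT} \geq 100(\log m + \log n).
\]
A standard Chernoff bound then gives that $|V \cap S^*_{\overline{C}}| < 100(\log m + \log n)$ with probability at most $1/(m^{30} n^{30})$. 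However, by invariant~3 of \Cref{genAlgorithmInvariants}, $v(S^*) = |V \cap S^*_{\overline{C}}|$, and by invariant~4 we must have $v(S^*) \leq 100(\log m + \log n)$ (otherwise $S^*$ would already have triggered $\AddSet$ and its voters would have been removed from $V$). Combining the upper and lower bounds on $|V\cap S^*_{\overline{C}}|$ gives a contradiction except with probability $1/(m^{30} n^{30})$.

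To conclude, I would take a union bound over the at most $nm$ distinct states of $(V, C)$ reachable during the loop (each iteration either enlarges $V$ by one element or shrinks it via an $\AddSet$ call, and $|V| \leq n$, $|C| \leq m$), so the bad event fails at any step with probability at most $1/(m^{28} n^{28})$. The only routine step I expect to be slightly delicate is justifying the Chernoff bound for sampling without replacement; this is standard (Hoeffding's reduction to i.i.d.\ Bernoullis), so no new ideas are needed. This yields the desired bound $|V| \leq 100 \cdot \OPT \cdot (\log m + \log n)$ upon termination of the main loop, with high probability.
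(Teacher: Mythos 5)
Your route is genuinely different from the paper's, and the difference matters. The paper's proof is a two-line \emph{deterministic} pigeonhole argument applied to $V$ itself: since $V\subseteq U_{\overline{C}}$ (invariant 1 of \Cref{genAlgorithmInvariants}) and the residual instance admits a cover of size at most $\OPT$ (\Cref{obs:OPTMaintenance}), all of $V$ is covered by those $\OPT$ sets, so some uncovered set $S$ contains at least $|V|/\OPT$ elements of $V$; by invariant 3 this set has $v(S)\geq |V|/\OPT$ votes, and invariant 4 caps $v(S)$ at $100(\log m+\log n)$, forcing $|V|\leq \OPT\cdot 100(\log m+\log n)$ with certainty. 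You instead apply the pigeonhole to $U_{\overline{C}}$ and then use concentration to transfer the count to the random sample $V$. This only yields the bound \emph{with high probability}, whereas the claim as stated (and as used downstream in \Cref{lem:costVotes} and \Cref{lem:RunningTimeBound}) is deterministic; the key observation you miss is that no concentration is needed because the optimal residual cover covers every element of $V$, not just a typical one.

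There is also a quantitative flaw in your Chernoff step as written. You take $|V| > 100\cdot\OPT\cdot(\log m+\log n)$, which gives $\mathbb{E}\bigl[|V\cap S^*_{\overline{C}}|\bigr] \geq 100(\log m+\log n)$, and then claim the probability of falling \emph{below} $100(\log m+\log n)$ is at most $1/(m^{30}n^{30})$. When the mean is only marginally above the threshold, the lower tail at the threshold has probability close to $1/2$, not exponentially small; you need a constant-factor gap between the mean and the threshold (compare \Cref{lem:mng-cover:v-size}, where $|V|$ is taken to be twice the value that makes the expectation equal the vote threshold). This is fixable by proving the bound $|V|\leq 200\cdot\OPT\cdot(\log m+\log n)$ instead, but the deterministic argument avoids both the constant loss and the failure probability entirely.
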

\begin{proof}
    We know from \Cref{genAlgorithmInvariants} that $V\subseteq U_{\overline{C}}$.
    As in the proof of \Cref{lem:numberOfCallsToAddSet}, there must exist some set in $S\in \mathcal{F}\setminus C$ that covers at least a $1/\OPT$-fraction of the elements in $V$.
    Thus, if $|V| > \OPT\cdot 100(\log(m)+\log(n))$, this would imply that $v(S)\geq 100(\log(m)+\log(n))$ when the while loop terminated, contradicting the algorithm description.
\end{proof}

\begin{claim}\label{lem:costVotes}
    With probability at least $1-\frac{1}{m^{28}\cdot{n^{28}}}$, the cost of managing votes is at most 
    \[
    O\left(f \cdot (\log(m)+\log(n))\cdot \OPT\cdot\log(n)\right).
    \]
\end{claim}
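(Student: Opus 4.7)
The plan is to charge the vote-management cost to the voters themselves: whenever an element $j$ is moved from $N$ into $V$, casting its votes amounts to enumerating every set containing $j$ via \SetOf queries and incrementing each of their counters, costing $O(f)$ work since by definition no element lies in more than $f$ sets. Symmetrically, whenever a voter is removed from $V$ inside $\AddSet$ we revisit the same sets and decrement their counters, again at cost $O(f)$. So it suffices to show that the total number of voters ever inserted into $V$ is $O(\OPT \cdot \log n \cdot (\log m + \log n))$ on the relevant high-probability event, and then multiply by $f$.

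I would split the voter count into two parts: voters still in $V$ when the main while loop terminates, and voters removed from $V$ during execution. The first is handled directly by \Cref{lem:boundOnActiveVoters}, which deterministically bounds the residual $|V|$ by $\OPT \cdot 100(\log m + \log n)$. For the second, I would inspect $\AddSet$ more carefully: by the invariants in \Cref{genAlgorithmInvariants}, we have $V \subseteq U_{\overline{C}}$ and $v(S) = |V \cap S_{\overline{C}}|$, so at the moment $\AddSet(S)$ is triggered the identity $V \cap S = V \cap S_{\overline{C}}$ holds and contains exactly $100(\log m + \log n)$ elements. Hence exactly $100(\log m + \log n)$ voters are purged from $V$ per call to $\AddSet$. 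Applying \Cref{lem:numberOfCallsToAddSet} to bound the number of such calls by $O(\OPT \cdot \log n)$ on an event of probability at least $1 - \tfrac{1}{m^{28} n^{28}}$, the total number of removed voters is $O(\OPT \cdot \log n \cdot (\log m + \log n))$ on this event.

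Combining the two contributions gives $O(\OPT \cdot \log n \cdot (\log m + \log n))$ voters ever inserted into $V$, and multiplying by the per-voter cost of $O(f)$ yields the claimed bound $O(f \cdot (\log m + \log n) \cdot \OPT \cdot \log n)$. The only bookkeeping subtlety is that inside $\AddSet$ we also enumerate all of $S$ in order to remove covered elements from $N$; I would explicitly attribute that work to the ``cost of enumerating added sets'' so it is not double-counted here. I do not expect a real obstacle; the main delicacy is ensuring that the high-probability event invoked above is precisely the event already underlying \Cref{lem:numberOfCallsToAddSet}, so that the overall failure probability remains at most $\tfrac{1}{m^{28} n^{28}}$.
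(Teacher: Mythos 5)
Your proposal is correct and follows essentially the same argument as the paper: both decompose the voters into those remaining in $V$ at termination (bounded via \Cref{lem:boundOnActiveVoters}) and those removed by calls to $\AddSet$ (exactly $100(\log m+\log n)$ per call, with the number of calls bounded by \Cref{lem:numberOfCallsToAddSet}), then multiply the total voter count by the $O(f)$ per-voter cost of casting/undoing votes. Your extra remark about attributing the enumeration of $S$ inside $\AddSet$ to a separate cost category matches how the paper accounts for it in \Cref{lem:RunningTimeBound}.
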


\begin{proof}
    An element is added to $V$ at most once, and either is present at the end of the loop or removed during a call to \AddSet.
    By \Cref{lem:boundOnActiveVoters} at most $100\cdot \OPT\cdot (\log(m)+\log(n))$ elements are in $V$ at the end of the loop.
    Each call to $\AddSet(S)$ removes exactly $100(\log(m)+\log(n))$ elements in $V$, and by \Cref{lem:numberOfCallsToAddSet}, with probability at least $1-\frac{1}{m^{28}\cdot n^{28}}$ there are at most $O\left(\OPT\cdot \log(n)\right)$ calls to $\AddSet(S)$. 
    So, the total number of elements added to $V$ is $O((\log(m)+\log(n))\cdot \OPT\cdot\log(n))$.
    Each element added/removed from $V$ votes/unvotes for at most $f$ sets, which yields the result.
\end{proof}

\noindent We now bound the total cost of the algorithm.

\begin{lemma}\label{lem:RunningTimeBound}
    With probability at least $1-\frac{1}{m^{28}\cdot n^{28}}$, the total running time achieved by the algorithm is: 
    \[
    O(m + n + \OPT \cdot \log(n) \cdot (f \cdot (\log(m) + \log(n)) + \Delta)),
    \]
    where $f$ is the maximum frequency of an element in $\mathcal{S}_C$ and $\Delta$ is the maximum size of a set in $\mathcal{S}_C$:
\end{lemma}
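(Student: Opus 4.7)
The plan is to decompose the runtime of \Cref{alg:ConstructVoteCover} into four pieces and bound each separately: (a) initialization, (b) vote bookkeeping during the main loop, (c) enumeration of the chosen set inside \AddSet, and (d) the final call to \Cref{alg:GreedySetCover} on the residual instance. Part (a) is straightforward: allocating $N$, $V$, $C$, and the vote counters requires a single pass over $U$ and $\mathcal{F}$, contributing $O(n + m)$.

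For (b), every increment and decrement of a vote counter, together with the membership-query enumerations triggered either by sampling an element into $V$ or by retracting it from $V$ inside \AddSet, is precisely the work bounded by \Cref{lem:costVotes}, which gives $O(f \cdot (\log m + \log n) \cdot \OPT \cdot \log n)$ with probability at least $1 - 1/(m^{28} n^{28})$. For (c), \Cref{lem:numberOfCallsToAddSet} guarantees at most $O(\OPT \cdot \log n)$ calls to \AddSet with high probability, and each call additionally enumerates the elements of the chosen set $S$ at cost $O(|S|) \le O(\Delta)$; the inner vote-removal loop has already been charged to (b), so it is not double counted. This piece therefore contributes $O(\OPT \cdot \log n \cdot \Delta)$.

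For (d), when the while loop exits we have $N = \varnothing$, so \Cref{genAlgorithmInvariants} yields $V = U_{\overline{C}}$, and \Cref{lem:boundOnActiveVoters} gives $|V| = O(\OPT \cdot (\log m + \log n))$. Because each element of $V$ appears in at most $f$ sets of the residual system, the total size of that residual system is $O(f \cdot \OPT \cdot (\log m + \log n))$. Plugging this into the analysis of \Cref{alg:GreedySetCover} yields a cost that is absorbed into the bound from (b). Summing (a)--(d) and applying a union bound over the failure events of \Cref{lem:costVotes} and \Cref{lem:numberOfCallsToAddSet} yields the claimed asymptotic bound; the resulting failure probability of $2/(m^{28} n^{28})$ can be tightened to the stated $1/(m^{28} n^{28})$ by nudging the constants in the earlier lemmas.

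The main obstacle is careful cost accounting. In particular, I need to attribute the vote-removal loop inside \AddSet exactly once --- to \Cref{lem:costVotes} --- while the outer \EltOf enumeration of the chosen set's elements must be charged separately through $\Delta$. It is also important to verify that $f$ and $\Delta$ are consistently interpreted as the maximum frequency and maximum set size in the residual system $\mathcal{S}_{\overline{C}}$ throughout the execution, so that the bounds from (b), (c), and (d) compose cleanly and none of the three terms inadvertently absorbs work that should have been accounted for elsewhere.
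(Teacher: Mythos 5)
Your decomposition into initialization, vote management, set enumeration within \textsc{AddSet}, and the final greedy call is exactly the accounting the paper uses, and you invoke the same three supporting lemmas (\Cref{lem:costVotes}, \Cref{lem:numberOfCallsToAddSet}, and \Cref{lem:boundOnActiveVoters}) in the same roles, including the observation that the residual instance handed to \textsc{GreedySetCover} has total size $O(f\cdot \OPT\cdot(\log m+\log n))$. The proposal is correct and matches the paper's proof essentially line for line.
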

\begin{proof}
    Observe that aside from costs linear in $n$ or $m$, the running time of the algorithm can broken into the cost of managing votes, the cost of enumerating added sets and the cost of the greedy algorithm at the end.
    By \Cref{lem:costVotes}, the cost of managing votes is  $O\left(f \cdot (\log(m)+\log(n))\cdot \OPT\cdot\log(n)\right)$.
    By \Cref{lem:numberOfCallsToAddSet}, the algorithm adds at most $O(\OPT\cdot \log(n))$ sets, each of which has at most $\Delta$ elements, so the cost of enumerating added sets is $O\left(\Delta \cdot \OPT\cdot \log(n)\right)$.
    By \Cref{lem:boundOnActiveVoters},  $|V|\leq O(\OPT\cdot (\log(m)+\log(n))$ and thus we can say $\sum_{S\in \mathcal{F\setminus C}}|S_{\overline{C}}|\leq O(f\cdot \OPT\cdot (\log(m)+\log(n))$ which produces a bound on the total time taken by the greedy algorithm.
\end{proof}

\subsection{Reducing cost by limiting the maximum frequency}
The cost of the voting algorithm depends both on $f$, the maximum frequency of an element in $\mathcal{S}_C$, and $\Delta$, the maximum size of a set in $\mathcal{S}_C$.
In this section, we'll give an algorithm that limits the effective value of $f$, which in turns reduces the cost.
The key idea is to add $O(\OPT\cdot\log(n))$ random sets to the cover.
Then, any element that is a member of at least $\frac{m}{\OPT}$ sets will be covered with high probability, which in turn bounds $f$ by $\frac{m}{\OPT}$.
The issue is we don't know what $\OPT$ is equal to in advance, so we don't have a predetermined frequency threshold to enforce.
To overcome this we use exponential search.
We first make a guess $L$ for \OPT, allowing us to remove all sets that have at least $\frac{m}{L}$ elements.
We then run \GenVoteCover, which is cheaper, because $f$ is now small.
If our guess is sufficiently larger than \OPT, we show this will return a small set cover.
To deal with the case of our guess being smaller than \OPT, we modify \GenVoteCover to stop early, so that we don't incur its full running time.
In particular, we stop execution as soon as we add the $(L+1)$th set to the cover; we will refer to this as $\EarlyTermGVC(U,\mathcal{F}, L)$.
The algorithm is given as \Cref{alg:constructLimitedVoteCover}.

\begin{algorithm}
\caption{LimitedVoteCover($U,\mathcal{F}$)}
\label{alg:constructLimitedVoteCover}
\begin{algorithmic}[1]
\State $L\gets 16n\ln(n)$
\State $C^*\gets \varnothing$
\While{$L\geq 1$}
\ForEach{$e\in U$}
\If{$\FreqOf(e)\geq \frac{100\cdot(\log(m)+\log(n))\cdot m}{L}$}
\State $U \gets U\setminus{e}$
\EndIf
\EndFor
\State $C \gets \EarlyTermGVC(U,\mathcal{F},L)$.
\If{$C\neq$\FAIL}
\State $C^*\gets C$
\Else
\State Break while loop.
\State \Return{C}
\EndIf
\State $L\gets \frac{1}{2}\cdot L$
\EndWhile
\State Add $L$ random sets from $\mathcal{F}$ to $C^*$.
\State \Return{$C^*$}
\end{algorithmic}
\end{algorithm}

\paragraph{Correctness and approximation ratio.} We first show that \EarlyTermGVC is well-behaved.

\begin{lemma}
\label{lem:prototypicalCorrectness}
    \EarlyTermGVC either returns \FAIL or a set cover for $(U,\mathcal{F})$. Further, if $L> 9\cdot OPT\cdot \ln(n)$, then with probability at least $1-\frac{1}{m^{28}\cdot n^{28}}$ the algorithm produces a set cover with size $O(\OPT\cdot \ln(n))$, and if $L < \frac{1}{2}\cdot \OPT$ then the algorithm must return \FAIL.
\end{lemma}
\begin{proof}
    First, if we assume the algorithm does not return \FAIL, \EarlyTermGVC emulates \GenVoteCover exactly, and thus produces a set cover by the correctness established for \GenVoteCover. Now, the algorithm stops early when it adds the $(L+1)$th element to its cover.
    Therefore, if $L \geq 9\cdot \OPT\cdot \ln(n)$, by \Cref{lem:cvcApproximationRatio}, the algorithm returns a $9\ln(n)$-approximate set cover with probability at least $1-\frac{1}{m^{28}\cdot n^{28}}$.
    If $L<\frac{1}{2}\cdot \OPT$ then if the algorithm does not return \FAIL, it returns a set cover of size at most $2\cdot L < \OPT$, a contradiction.
\end{proof}

\noindent This in turn shows that \LimVoteCover returns a $O(\log(n))$-approximate set cover with high probability.

\begin{lemma}\label{lem:firstAlgorithmCorrectness}
With probability at least $1-\frac{1}{m^{27}n^{27}}$, \LimVoteCover produces a set cover for $(U,\mathcal{F})$ of size $O(\OPT\cdot\ln(n))$.
\end{lemma}

\begin{proof}
    By \Cref{lem:prototypicalCorrectness}, we can union bound over all calls to \EarlyTermGVC in which $L>8\ln(n)\cdot \OPT$ to conclude they all succeed and produce an $O(\ln(n))$-approximation with probability at least $1-\frac{n}{m^{28}n^{28}}\leq 1-\frac{1}{m^{28}n^{27}}$.
    Now, under the assumption that all these calls succeed, consider the first call to \EarlyTermGVC that fails with $L\leq 8\ln(n)\cdot \OPT$.
    Then the previous call to \EarlyTermGVC succeeded, and produced a set cover for $(U^\prime,\mathcal{F})$ where $U^\prime$ is the set of all elements with frequency at most $\frac{100\cdot  (\log(m)+\log(n))\cdot m}{2L}$ in $U$.
    Further, note that because \EarlyTermGVC stops early, the size of the returned cover, $C^*$, is at most $L\in O(\ln(n)\cdot \OPT)$.  \LimVoteCover then adds $L$ random sets to $C^*$.
    The probability for a given element in $U\setminus U^\prime$ to remain uncovered by $C^*$ following this addition is at most $\left(1-\frac{100\cdot (\log(n)+\log(m))}{2L}\right)^{L}\leq \frac{1}{n^{50}m^{50}}$.
    So, we can union bound over all elements in $U\setminus U^\prime$ to conclude the probability any such element remains uncovered is at most $\frac{1}{n^{49}m^{50}}$
    Thus by applying union bound the probability the algorithm succeeds with an $O(\OPT\cdot\ln(n))$ approximation ratio is at least $1-\frac{1}{m^{28}n^{27}}-\frac{1}{n^{49}m^{50}}\geq 1-\frac{1}{m^{27}n^{27}}$
\end{proof}

\paragraph{Running time analysis.}
We know from \cref{lem:prototypicalCorrectness} that when $L<\frac{1}{2}\cdot \OPT$ the call to \EarlyTermGVC fails and we return. 
Thus the running time is dominated by the calls for $L\geq \OPT$.
We can union bound over all of these calls, to conclude that with probability at least $1-\frac{1}{m^{28}n^{27}}$, \Cref{lem:RunningTimeBound} holds for all of them, thus bounding the running time for each individual call.
We can therefore conclude a total bound of:
\[
O\left(\sum_{k=\log(\OPT)}^{\log(16n\cdot \ln(n))} m+n+\OPT\cdot \log(n)\cdot (f_k\cdot (\log(m)+\log(n))+\Delta)\right),
\]
where $f_k$ is the maximum frequency in the call to \EarlyTermGVC when $L=2^k$.
By the criterion the algorithm imposes we have:
\[
f_k\leq \frac{100(\log(m)+\log(n))\cdot m}{L}\leq 
\frac{100(\log(m)+\log(n))\cdot m}{\OPT}.
\]
Therefore with probability at least $1-\frac{1}{m^{28}n^{27}}$ the running time is upper bounded by:
\begin{align*}
O\left(\sum_{k=\log(\OPT)}^{\log(16n\ln n)} m\log n(\log m+\log n)^2 + \OPT\cdot n\log n\right)
= O\left(m\log^2 n(\log m+\log n)^2 + \OPT\cdot n\log^2 n\right)
\end{align*}

\noindent We have shown:
\begin{lemma}
\label{lem:limitedCost}
    With probability at least $1-\frac{1}{m^{28}n^{27}}$, \LimVoteCover returns an $O(\log n)$ approximate set cover in $O\left(m\log^2 n(\log m+\log n)^2 + \OPT\cdot n\log^2 n\right) = \tilde{O}(m + \OPT \cdot n)$ time.
\end{lemma}

\subsection{Using lazy enumeration to save cost}
We now turn to another way to reduce the cost of the voting algorithm, namely avoiding the expensive enumeration during the \AddSet step.
The idea is that rather than marking elements as covered in the \EarlyTermGVC algorithm, we can instead check whether an element has already been covered during the sampling step (when drawing it from $N$) and remove it then.
This can be achieved using $\SetOf$ queries, and testing whether each set the element is a member of is in $C$ or not.
We refer to this algorithm as \LazyETGVC and the version of \LimVoteCover which uses it as \LazyLimVoteCover.
\LazyETGVC is logically equivalent to \EarlyTermGVC: the only difference is when elements are identified as covered.
Therefore the correctness and approximation ratio properties of \Cref{lem:prototypicalCorrectness} are still satisfied.
Similarly, \LazyLimVoteCover satisfies \Cref{lem:firstAlgorithmCorrectness}.

\begin{lemma}\label{lem:RunningTimeBoundLazyCVCSE}
    The total running time for \LazyETGVC is $O(m+n\cdot f)$ where $f$ is the maximum frequency of an element in $\mathcal{S}_C$.
\end{lemma}

\begin{proof}
    In contrast to the proof of \Cref{lem:RunningTimeBound}, the non-linear cost is dominated by the cost of sampling voters (now requiring checking whether they are covered), vote management, and the greedy algorithm at the end.
    Note that the cost of \AddSet is now constant, since we are no longer enumerating the set. 
    The cost of checking whether a potential voter has already been covered is at most $f$, since we check whether any of the at most $f$ sets it belongs to have been chosen.  The cost of vote management is also at most $f$.
    So we can bound both of these costs in total by $O(n\cdot f)$.
    The cost of executing the greedy algorithm is $\sum_{S\in \mathcal{F\setminus C}}|S_{\overline{C}}|\leq  \sum_{e\in U }|\{S\in\mathcal{F}:e\in S\}|\leq  O(n\cdot f)$, which completes the proof.
\end{proof}
We now prove the following bound on the running time of \LazyLimVoteCover:
\begin{lemma}
\label{lem:lazyRunningTime}
    The running time of \LazyLimVoteCover is bounded by $O\left(\frac{mn\cdot\log(n)\cdot (\log(m)+\log(n))}{\OPT}\right) = \tilde{O}(mn/\OPT)$.
\end{lemma}

\begin{proof}
As in the analysis of \LimVoteCover, the running time is dominated by calls to \LazyETGVC in which $L\geq \OPT$.
We can therefore conclude according to \cref{lem:RunningTimeBoundLazyCVCSE} that the running time is bounded by 
\[
O\left(\sum_{k=\log(\OPT)}^{\log(16n\cdot \ln(n))} m+n\cdot f_k \right),
\]
where $f_k$ is the maximum frequency when $L=2^k$.
As before,
\[
f_k\leq \frac{100\cdot(\log(m)+\log(n))\cdot m}{L}\leq \frac{100\cdot(\log(m)+\log(n))\cdot m}{\OPT},
\]
so the running time is upper bounded by:
\begin{align*}
O\left(\sum_{k=\log(\OPT)}^{\log(16n\cdot \ln(n))} m+n\cdot{f_k} \right)&\leq O\left(\sum_{k=\log(\OPT)}^{\log(16n\cdot \ln(n))} m+\frac{mn\cdot (\log(m)+\log(n))}{\OPT} \right) \\
&\leq O\left(\frac{mn\cdot\log(n)\cdot (\log(m)+\log(n))}{\OPT}\right).\qedhere
\end{align*}
\end{proof}

\Cref{thm:general_sublinear} now follows as an immediate consequence of \Cref{lem:limitedCost,lem:lazyRunningTime}, by taking the minimum runtime. A simple theoretical approach to doing this it to run \LimVoteCover and \LazyLimVoteCover in parallel, then take the result of the first to finish. This gives us an algorithm that constructs an $O(\log n)$-approximate minimum set cover in time proportional to the minimum running time of the two algorithms. This gives us an $\tilde O(\min(m + n\OPT{}, mn/\OPT{}))$ time algorithm.

\end{document}